\newif\ifarxiv\arxivtrue%
\apptocmd{\sloppy}{\hbadness 10000\relax}{}{} 
\newcommand\nempty{\textrm{\tiny{\sf ne}}}
   \def\@citecolor{blue}%
   \def\@urlcolor{blue}%
   \def\@linkcolor{blue}%
\def\orcidID#1{\smash{\href{http://orcid.org/#1}{\protect\raisebox{-1.25pt}{\protect\includegraphics{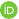}}}}}
\begin{document}

\title{Concurrent NetKAT}         
\subtitle{Modeling and analyzing stateful, concurrent networks}
\titlerunning{Concurrent NetKAT}


\author{
    Jana Wagemaker\inst{1}~\orcidID{0000-0002-8616-3905}~\Envelope \and 
    Nate Foster\inst{2}~\orcidID{0000-0002-6557-684X} \and 
    Tobias Kapp\'{e}\inst{3}~\orcidID{0000-0002-6068-880X} \and\\ 
    Dexter Kozen\inst{2}~\orcidID{0000-0002-8007-4725} \and 
    Jurriaan~Rot\inst{1} \and
    Alexandra Silva \inst{2}~\orcidID{0000-0001-5014-9784} 
}

\authorrunning{J. Wagemaker et al.}
%
\institute{
    Radboud University, Nijmegen, The Netherlands \\
    \email{Jana.Wagemaker@ru.nl}
    \and
    Cornell University, Ithaca, New York, USA
    \and
    ILLC, University of Amsterdam, The Netherlands
}

\maketitle

\begin{abstract}
We introduce Concurrent \netkat (\cnetkat), an extension of \netkat with operators for specifying and reasoning about concurrency in scenarios where multiple packets interact through state. We provide a model of the language based on partially-ordered multisets (pomsets), which are a well-established mathematical structure for defining the denotational semantics of concurrent languages. We provide a sound and complete axiomatization of this model, and we illustrate the use of \cnetkat through examples. More generally, \cnetkat can be understood as an algebraic framework for reasoning about programs with both local state (in packets) and global state (in a global store). 
\end{abstract}

\keywords{Concurrent Kleene algebra, NetKAT, completeness, concurrency}  

\section{Introduction}
Kleene algebra (\KA) is a well-studied formalism~\cite{kozen94,krob-1990,salomaa-1966,conway-1971} for analyzing and verifying imperative programs. Over the past few decades, various extensions of \KA have been proposed for modeling increasingly sophisticated scenarios.
For example, Kleene algebra with tests (\kat)~\cite{kozen-1996} models conditional control flow while \netkat~\cite{netkat,netkat2} models behaviors in packet-switched networks.

A key limitation of \netkat, however, is that the language is stateless and sequential. It cannot model programs composed in parallel, and it offers no way to reason algebraically about the effects induced by multiple concurrent packets. Meanwhile, the software-defined networking (SDN) paradigm has evolved to include richer functionality based on stateful processing including data aggregation and dynamic routing. In languages like P4~\cite{p4}, issues of concurrency arise because the semantics depends on the order that packets are processed.

Given this context, it is natural to wonder we can add concurrency to \netkat while retaining the elegance of the underlying framework. In this paper, we answer this question in the affirmative, by developing \cnetkat. However, to do this, we must overcome several challenges.
A first hurdle is that networks exhibit many different forms of concurrent behavior.
The most obvious source of concurrency arises when multiple packets are processed by different devices. In these situations, certain packets may cause changes in forwarding behavior by modifying global state variables on switches.
However, there is also concurrency within individual devices: a high-speed switching chip often has
multiple pipelines, each with multiple stages of match-action tables and stateful registers. The tables can be programmed to act concurrently on (parts of) a single packet, and the pipelines also act concurrently on multiple packets.

Another hurdle is that it is not entirely clear how to simultaneously extend \KA with networking features and concurrency. Orthogonal to the development of \netkat, the issue of adding concurrency to \KA has been researched extensively, starting with concurrent Kleene algebra
(\CKA)~\cite{hoare-moeller-struth-wehrman-2009,laurence-struth-2014,laurence2017completeness,cka}. However, the combination of concurrency from \CKA
and tests from \kat is not straightforward---see, e.g.~\cite{jipsen-moshier-2016,kao,fossacs2020}---which motivated the development of partially-observable concurrent Kleene algebra (\POCKA)~\cite{pocka}.
In \POCKA, a single thread only has \emph{partial} view of the state.
Hence, when evaluating control guards, a thread makes \emph{observations} about the machine state, rather than definitive tests.
This allows for fine-grained reasoning about concurrent programs with variables, conditionals, loops, and imperative statements that manipulate a shared global memory.

In this work, we use \POCKA as a basis for designing a
language with state and concurrent threads, which we combine with a multi-packet
extension of \netkat. The resulting language, Concurrent NetKAT (\cnetkat), models the
behavior of packets in a network that communicate through a shared global state,
and addresses the fundamental and non-trivial question of how to combine concurrency and the interaction between local and global state within \KA. 
%
%

Overall, the contributions of the paper are as follows:
\begin{enumerate}
\item We present the design of the \cnetkat language (\Cref{sec:cnetkat}). The semantics combines the language models of \netkat and \POCKA, incorporating pomsets that record the evolution of the global state (as in \POCKA) as well as sets of (output) packets (as in \netkat).

\item We develop a sound and complete axiomatization of \cnetkat(\Cref{sec:completeness}).

\item We illustrate the applicability of \cnetkat for modeling and analyzing concurrent network behaviors through case studies and examples (\cref{sec:overview} and \cref{sec:example}).
\end{enumerate}

The next section contains an overview of the challenges in the design of extending \netkat with multiple packets, global state, and concurrency, as well as a glimpse of how to use the language in a practical example.


\section{Overview}\label{sec:overview}
\cnetkat models the behavior of two basic entities: the packets being routed through the network, and a global store, which may be accessed by the network as it processes the packets.
These elements give rise to two kinds of basic programs.
On the one hand, \emph{basic packet programs}---imported from \netkat~\cite{netkat}---include tests ($\match{f_i}{n}$) and modifications ($\modify{f_i}{n}$) of packet fields $f_1, \dots, f_N$.
Examples of fields are $\sw$, denoting the switch of the packet in the network, and $\type$, denoting the type of a packet.
In general, we expect  packets to have fields for a collection of standard attributes; unused fields may be populated with a dummy value.

On the other hand, \emph{basic state programs} include observations\footnote{Intuitively, these are tests on the state that can be understood as observing the part of the global state containing the variable, hence the terminology.} ($\match{v_i}{n}$), modifications ($\modify{v_i}{n}$) and a copy operation ($\modify{v_i}{v_j}$) on state variables $v_1, \dots, v_M$.
It will always be clear from context whether an action concerns a state or field variable.
\cnetkat also includes a primitive program $a$ for any set of packets $a$, which is useful for specifying the set of packets currently being processed.

\begin{remark}
We could augment the set of primitives with features such as general expressions in assignments. However, to keep things simple, we will only consider these primitives, which are already rich enough to describe non-trivial behaviors.
\end{remark}
%

\cnetkat programs are composed using sequential composition (`$\mathbin{;}$'), iteration (`$*$'), and non-deterministic choice (`$+$'), similar to \netkat.
In addition, \cnetkat programs may use the parallel composition operator (`$\parallel$').

The full syntax of \cnetkat is given in \cref{fig:syntax}.
Before giving a precise account of the semantics, we will go over some simple example programs.

\begin{figure}[t]
  \begin{minipage}{.3\textwidth}
 \includegraphics[scale=.15]{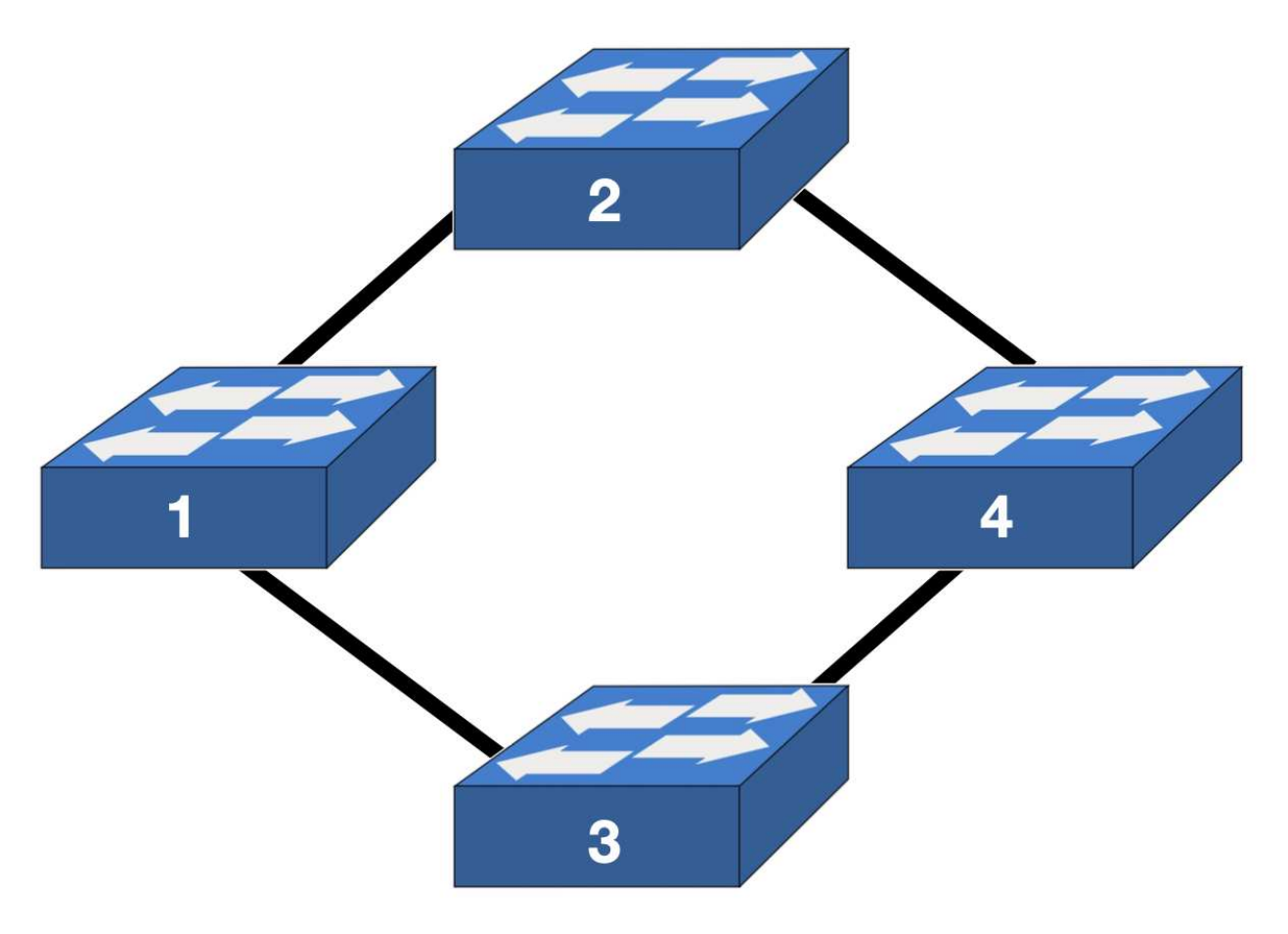}
 \end{minipage}
   \begin{minipage}{.67\textwidth}\small
   \[
   \begin{array}{ll}
  p_1 \defeq &\sw = 1 \mathbin{;} (({v=1} \mathbin{;} \type =\spadesuit \mathbin{;} \modify \sw  2 ) \\
  &\phantom{\sw = 1 \mathbin{;} \ \ }\parallel (\type =\heartsuit  \mathbin{;} \modify \sw 3 \mathbin{;} {\modify v 1}))\\
  p_2 \defeq & \sw =2 ; \modify \sw 4\\
  p_3 \defeq &\sw = 3; \modify \sw 4 \\
  p_4 \defeq &\sw=4\\ \\
  p \defeq & \modify v 0  \mathbin{;}  {(p_1 \parallel p_2 \parallel p_3 \parallel  p_4)}^*
  \end{array}
  \]
  \end{minipage}
\caption{Running example}\label{fig:running}
\end{figure}

\begin{example}[Packet forwarding]\label{ex:forwarding}
Consider the network depicted on the left in \Cref{fig:running}. Similar to \netkat, we assume packet movement and variable assignments are instantaneous.
Suppose there are two packet types: $\spadesuit$ and $\heartsuit$.
We want to write a program that transfers packets from node $1$ to node $4$ by sending $\spadesuit$ via node $2$, and $\heartsuit$ via node $3$.
The program running in switch 1 could be
\[
    p_1 := \match \sw 1 \mathbin{;} ((\match {\type }{\spadesuit} \mathbin{;} \modify \sw  2 ) \parallel (\match {\type }{\heartsuit}   \mathbin{;} \modify \sw 3))
\]
This program first filters out the packets at switch $1$.
Next, it launches two parallel threads, both of which receive a copy of the incoming packets.
The first thread filters out packets of type $\spadesuit$ and forwards them to switch $2$, while the second thread filters out packets of type $\heartsuit$, forwarding them to switch $3$.

We can write programs $p_2$, $p_3$ and $p_4$ for the other switches as well, and then compose all of those in parallel to obtain a program for the entire network.
\end{example}

\begin{remark}
Instant packet movement is not baked into \cnetkat, but rather a consequence of modeling packet location using the field $\sw$.
A more advanced model could use an additional field to mark a packet as being ``in-flight'' until it reaches the next hop.
Here, we opt for the simpler model.
\end{remark}

\begin{example}[Global behavior]
\cnetkat programs can read and write to a global store, letting earlier actions on packets affect later decisions.
For instance, suppose we need $\spadesuit$ packets to be forwarded only if a $\heartsuit$ packet already visited switch $3$.
We can use a global variable $v$ to implement this stateful behavior, writing:
\[
\match \sw 1 \mathbin{;} ((\textcolor{blue}{\match v 1} \mathbin{;} \match \type \spadesuit \mathbin{;} \modify \sw  2 )\parallel (\match \type \heartsuit  \mathbin{;} \modify \sw 3 \mathbin{;} \textcolor{blue}{\modify v 1}))
\]
We can program the other switches with $p_i$, as shown in \cref{fig:running}.
\end{example}

\begin{remark}[Concurrency and state]%
\label{rem:concurrency-and-state}
Actions involving global variables are more subtle than those that concern packet fields, due to concurrent threads accessing the global store.
For instance, we can write the program $\modify v 1 \mathbin{;} \match v 2$, which first sets $v$ to $1$ and then asserts that $v$ should have value $2$. This may seem inconsistent; however, there may be valid ways of executing this program if there are other threads that change the value of $v$ from $1$ to $2$ between the assignment $\modify v 1$ and the assertion $\match v 2$.
This possibility makes defining a compositional semantics somewhat tricky, as we will discuss below.
\end{remark}

\smallskip\noindent\textbf{Semantics of \cnetkat programs.}
A packet $\pk$ is a record of fields $f_1,\dots,f_N$.
We write $\pk(\sw)$ for the value of $\sw$ in $\pk$ and $\pk[1/\sw]$ for the packet obtained after updating the value of $\sw$ to $1$.
We denote the set of packets by $\Pk$.



\begin{figure}[t]
\hrule\vspace{1ex}
\begin{minipage}{.495\textwidth}\small
\textbf{Syntax}
\[
\begin{array}{@{~}rrl@{~}l}
\textrm{Values} & \Val &\ni n & ::= \mathrlap{0 \mid 1 \mid 2 \mid \cdots}\\
\textrm{Packet Fields} &\Field &\ni \field & ::= \mathrlap{\field_1 \mid \cdots \mid \field_k} \\
\textrm{Packets}  &\Pk &\ni \pk & ::= \mathrlap{\sset{\field_1=n_1, \dots }} \\
\textrm{Packet Sets} & 2^{\Pk} &\ni a,b  \\
\textrm{Packet} & \BB &\ni\preds & ::= \\ \textrm{predicates}
& \mid & \pfalse & \textit{False} \\
& \mid & \ptrue & \textit{True} \\
& \mid & \match{\field}{n} & \textit{Field Test} \\
& \mid & {\preda} \vee_{\BB} {\predb} & \textit{Disjunction} \\
& \mid & {\preda} \land_{\BB} {\predb} & \textit{Conjunction} \\
& \mid & \pnot{\preda} & \textit{Negation} \\[1ex]
\textrm{State} & \OO &\ni o,o' & ::= \\ \textrm{obs.}
& \mid & \cellcolor{LightYellow}\bot & \textit{Inconsistent} \\
& \mid & \cellcolor{LightYellow}\top & \textit{Neutral} \\
& \mid & \cellcolor{LightYellow}\match{\var}{n} & \textit{State test} \\
& \mid & \cellcolor{LightYellow}{o}\vee {o'} & \textit{Union} \\
& \mid & \cellcolor{LightYellow}{o}\land {o'} & \textit{Intersection} \\
& \mid & \cellcolor{LightYellow}\overline{o} & \textit{Complement}
\end{array}\]
\end{minipage}
\begin{minipage}{.495\textwidth}\small
\[
\begin{array}{@{~}rrl@{~}l}
\textrm{State Fields} & \Var &\ni \var  & ::= \mathrlap{\var_1 \mid \cdots \mid \var_i} \\
\textrm{Global State} &  \State&\ni \alpha,\beta & ::= \Var\rightharpoonup \Val\\
\textrm{State} &\Act & \ni e & ::= \\ \textrm{actions}
& \mid &\cellcolor{LightYellow} \modify{\var}{n} & \textit{Change} \\
& \mid &\cellcolor{LightYellow} \modify{\var}{\var'} & \textit{Copy} \\
\textrm{Programs} & \programs &\ni \pols & ::= \\
& \mid & \abort & \textit{Abort} \\
& \mid & \skp & \textit{Skip} \\
& \mid & \preda & \textit{Packet Filter} \\
& \mid & \cellcolor{LightYellow} o &  \textit{State Obs} \\
& \mid & \modify{\field}{n} & \textit{Packet Action} \\
& \mid & \cellcolor{LightYellow} e & \textit{State Action} \\
& \mid & \dup & \textit{Duplicate} \\
& \mid & \punion{\polp}{\polq} & \textit{Choice} \\
& \mid & \pseq{\polp}{\polq} & \textit{Sequence} \\
& \mid & \cellcolor{LightYellow} \polp \parallel \polq & \textit{Parallel} \\
& \mid & \pstar{\polp} & \textit{Iteration} \\
& \mid & \cellcolor{LightYellow} a &\textit{Packet Sets} \\
\end{array}
\]
\end{minipage}
\vspace{1ex}\hrule\vspace{1ex}
\caption{\cnetkat syntax. We highlight constructs not in \netkat.}%
\label{fig:syntax}
\end{figure}

\begin{figure}[t]
\hrule\vspace{1ex}
\begin{minipage}{.407\textwidth}\small
  \vspace{-.7em}
\textbf{Semantics}
\[
\begin{array}{r@{\ }c@{\ }l@{\quad}}
\lang{p} (\emptyset) & \defeq &  \{1\cdot \emptyset\} \\
\lang{\abort} (a)& \defeq & \emptyset\\
\lang{\skp} (a)& \defeq & \{\onepom \cdot a\}\\
\lang{t}(a) & \defeq & \{\onepom \cdot  \bsem{t}(a)\}\\
\lang{\modify{\field}{n}}(a) & \defeq &
  \{\onepom  \cdot {a(f\gets n)}\}\\
\lang{b}(a) & \defeq &
\{b  \cdot a\}\\
  \lang{\dup}(a) & \defeq &
  \{a  \cdot a\}\\
  \lang{\punion{\polp}{\polq}}(a) & \defeq &  \lang{\polp}(a) \cup \lang{\polq}(a)\\
    \end{array}
  \]
  \textbf{\quad Predicates}
    \[
  \begin{array}{@{\quad}r@{\,}c@{\,}l@{\ }r@{\,}c@{\,}l}
    \bsem t (a)&\colon&  \pPk \\
      \bsem{\pfalse}(a) & \defeq &
    \emptyset\\
  \bsem{\ptrue}(a) & \defeq &
    a \\
        \bsem{f=n}(a) & \defeq &
    a(f=n) \\
    \bsem{{\preda} \vee_{\BB} {\predb}}(a) & \defeq & \bsem{\preda}(a)  \cup \bsem{\predb}(a)   \\
    \bsem{{\preda} \land_{\BB} {\predb}}(a)  & \defeq & \bsem{\preda}(a)  \cap \bsem{\predb}(a)   \\
    \bsem{\pnot{\preda}}(a)  & \defeq & a\setminus \bsem{\preda}(a) \\
   \end{array}
  \]
  \vspace{0.7em}
  \end{minipage}
  \begin{minipage}{.588\textwidth}\small
    \fbox{$\lang p \colon 2^\Pk \to 2^{\Pom ( \State \cup \Act\cup 2^{\Pk}) \cdot 2^\Pk}$}
    \vspace{.7em}
    \[
    \begin{array}{r@{\ }c@{\ }l@{\quad}}
      \lang{o}(a) & \defeq &
       \State^* \odot \osem{o} \odot \State^* \times \{ a \}  \\
      \lang{e}(a) & \defeq &
      \State^*\odot \{e \}\odot \State^* \times \{ a \}  \\
    \lang{\pseq{\polp}{\polq}} (a)& \defeq &
     \left\{(\upom\cdot\vpom) \cdot b  \;\middle|\; \begin{array}{l}\upom \cdot a' \in \lang{\polp}(a), \\ \vpom \cdot b \in \lang{\polq}(a') \end{array} \right\} \\
     \lang{\polp \parallel \polq}(a)& \defeq &  \left\{(\upom \parallel \vpom)  \cdot (b \cup c) \;\middle|\;
    \begin{array}{l}\upom \cdot b \in \lang{\polp}(a), \\ \vpom \cdot c \in \lang{\polq}(a) \end{array} \right\} \\[1.4ex]
    \lang{\pstar\polp}(a)& \defeq & \displaystyle\bigcup \Bigl\{ \lang{\underbrace{p \cdots p}_{\text{$n$ times}}}(a) : n \in \N \Bigr\}\\
        \end{array}
      \]
%
\textbf{\quad Observations}
  \[
\begin{array}{r@{\,}c@{\,}l@{\ }r@{\,}c@{\,}l}
  \osem{o} &\colon& 2^{\State}\\
      \qquad\qquad  \osem{\bot} & \defeq &
  \emptyset\\
\osem{\top} & \defeq &
\State \\
\osem{v=n} &\defeq& \{ \alpha \in \State \mid \alpha(v) = n \} \\
\osem{{o} \vee {o'}} & \defeq & \osem{o} \cup \osem{o'} \\
\osem{{o} \land {o'}} & \defeq & \osem{o} \cap \osem{o'} \\
\osem{\overline{o}} & \defeq & \bigcup\{Z\in \mathcal{P}_{\leq}(\State) \mid \osem{o}\cap Z=\emptyset\}\\ \\ \\ \\
 \end{array}
\]
\end{minipage}
\vspace{-2.5em}
\hrule\vspace{1ex}
\begin{minipage}{.99\textwidth}\small
\textbf{Filtering, updates and downwards closure \hfill$ a\in  2^{\Pk}, Z\subseteq \State$}
\[
\begin{array}{ll}
a(f=n) = \{ \pi \in a \mid \pi(f) = n\} &
a(f\gets n) = \{\pi[n/f]\mid \pi \in a \} \\
\alpha\leq \beta \iff  \textsf{domain}(\beta)\subseteq \textsf{domain}(\alpha)
&\wedge\quad \forall x\in\textsf{domain}(\beta).\ \alpha(x)=\beta(x)\\
 Z_{\leq} = \{\alpha \mid \exists \beta\in Z \text{ s.t } \alpha \leq \beta \} &
\mathcal{P}_{\leq}(\State) = \{Z \mid Z\subseteq \State \wedge Z=Z_{\leq}\} \\
\end{array}
\]
\end{minipage}
\vspace{1ex}\hrule
\caption{\cnetkat semantics. Pairs $ \upom \cdot b$ in $\sem{p}(a)$ indicate that the program $p$ takes input $a$ and the global state change induced by $p$ is encoded in $\upom$ and constrains the final packet set $b$. We overload $\cdot$ for sequential composition of pomsets and pairs, while $\odot$ is the usual lifting from pomsets to languages.}%
\label{fig:semantics}
\end{figure}

The semantics of a \cnetkat program is represented as a function that takes a set of packets, potentially located in different nodes in the network, and returns a set of possible behaviors that those input packets might produce.
More precisely, the semantics function has type $\lang{-}\colon 2^{\Pk}\to2^{\Pom\cdot 2^{\Pk}}$.
Here, $\Pom$ is the set of \emph{pomsets}~\cite{grabowski-1981,gischer-1988}, which can be thought of as structures that record the causal order between concurrent events (details appear in \cref{sec:preliminaries}).
An element $\upom\cdot b \in \lang p (a)$ means ``there is an execution of $p$ that changes the global variables according to $\upom$, and the set of output packets produced is $b$''.\footnote{We use the notation $\cdot$ to denote pairs: $\upom\cdot b$ denotes the pair $(\upom,b)$.} 

The semantics is defined in \Cref{fig:semantics}. For instance, a packet filter $(\match f n)$ takes a set of packets $a$ and returns $\{\onepom \cdot a(\match f n)\}$, where $a(\match f n)$ contains all packets in $a$ where $f$
has value $n$ and $\onepom$ is the pomset representing that the global state did not change. A modification $(\modify f n)$ takes a set of input packets $a$ and returns $\{\onepom \cdot a(f\gets n)\}$, where $a(f\gets n) = \{ \pi[n/f] : \pi \in a \}$. These two basic packet actions manipulate the \emph{local state} of the program.

On the global state we have observations of the form ($\match v n$) and modifications ($\modify v n$), ($\modify v {v'}$). Each gives rise to a pair in the semantics---$\{v=n \cdot a\}$, $\{(\modify v  n )\cdot a\}$, $\{(\modify v {v'}) \cdot a\}$---in which the input set of packets $a$ is returned as output and the assertion or modification is recorded in the pomset.

Lastly, the primitive $a\in 2^{\Pk}$ is useful for writing specifications. This program copies the set of packets $a$ into the global pomset.
We will see that this is useful for checking inclusion of certain behaviors in a program's semantics, and in the proof of completeness. Formally, the behavior of $a$ on any input set $b$ is $\{a\cdot b\}$, where $a$ is the global state pomset with one node labeled by $a$.

To construct more complicated programs, we can combine the basic elements above using operators from Kleene algebra. For instance, $p+q$ is a program that represents a non-deterministic
choice between $p$ and $q$. Its semantics is obtained by taking the union of sets produced by both $p$ and $q$ on the input packets. We can also compose programs sequentially using $p\mathbin ; q$, where we first apply $p$ to the input packets and then $q$ to all sets of packets produced by $p$, and we compose the corresponding global pomsets sequentially. We can iterate a program finitely many times using $p^*$. Lastly, we can
combine programs with a parallel operator, $p\parallel q$, which denotes a program that, on input $a$, executes both $p$ and $q$ on $a$, and then combines the results: the pomsets denoting the global components are composed in parallel, and the corresponding sets of output packets joined.

\begin{remark}[Concurrency and state, continued]
Note that statements observing or modifying global variables are stored in the pomsets but not executed, that is, we do not actually \emph{check} immediately whether $v$ is indeed $1$ but rather simply record it.
This may seem like an odd choice at first: why does the semantics not also keep a record of the global store?
The reason is related to \cref{rem:concurrency-and-state}.

Consider the program $q = (\match v 0) \mathbin{;}( \match v 1)$, which asserts that $v$ has value $0$, and then that it has value $1$.
In isolation, $q$ does not have any valid behavior, as it sequentially executes two tests that cannot be valid without intermediate intervention.
However, the program $q \parallel (\modify v 1)$ \emph{does} have valid behavior on some interleavings---namely the ones where the assignment $\modify v 1$ is scheduled between the two tests.
It stands to reason that a compositional semantics of such programs should include traces with such local inconsistencies, as they may be explained by actions taken by other programs running in parallel~\cite{pocka}.
For \cnetkat, this is accomplished by placing the observations and modifications in the pomset.

This leaves us with the question of how to obtain the semantics of a program in isolation.
We take a page from POCKA~\cite{pocka}, which uses the set of \emph{guarded pomsets} to filter out the pomsets sensible in isolation; details appear in \cref{sec:example}.
\end{remark}

%

One final modification is needed to obtain the \cnetkat semantics from $\lang{-}$. The idea is to allow interleaving between parallel threads~\cite{hoare-moeller-struth-wehrman-2009}. This is accomplished by adding to the semantics all pomsets in which events are ``more ordered'' than the ones already present in $\lang{-}$. We denote this closed semantics by $\closure[]{\lang{-}}$; a precise definition is given in \cref{sec:cnetkat}.

\smallskip\noindent\textbf{Recording local behavior}
To apply \cnetkat to various verification tasks, we sometimes need to take snapshots of the local state at different points. For example, if we want to argue that $\heartsuit$ packets arrived at switch $3$ before $\spadesuit$ packets arrived at switch $2$, we need more than the information about inputs and outputs that have occurred so far. We therefore have to extend the language with an operator comparable to $\dup$ in \netkat.
On input $a$, the semantics of the $\dup$ operator is the set $\{a\cdot a\}$, where the first component is a single node pomset labeled with set of packets $a$.\footnote{We overload `a' as a set of packets, a programming primitive and a label used in pomsets, but it always denotes a set of packets in the latter two uses as well.} By recording packets inside the pomset, information about changes to packets also contains their relation to changes to global variables during the execution. Hence, using $\dup$, we can infer causality relations between local and global state changes.

The programs $p_1,p_2,p_3$ and $p_4$ used in our running example (see \Cref{fig:running}) can be instrumented with a $\dup$ on every entry to and exit from a switch. This encodes extra information in the semantics that can be used for reasoning about packet-forwarding paths as well as global state changes.
\[
\begin{array}{ll}
p_1 \defeq &\sw = 1 \mathbin{;}\textcolor{blue}{\dup} \mathbin{;} ((v=1 \mathbin{;}\type =\spadesuit \mathbin{;} \textcolor{blue}{\dup}\mathbin{;} \modify \sw  2 \mathbin{;} \textcolor{blue}{\dup} )\\
&\phantom{\sw = 1 \mathbin{;}\textcolor{blue}{\dup} \mathbin{;}x} \parallel (\type =\heartsuit  \mathbin{;} \textcolor{blue}{\dup}\mathbin{;} \sw\gets 3 \mathbin{;} \textcolor{blue}{\dup} \mathbin{;} v\gets 1))\\
p_2 \defeq & \sw =2 \mathbin{;} \textcolor{blue}{\dup} \mathbin{;} \modify \sw 4 \mathbin{;}\textcolor{blue}{\dup}\\
p_3 \defeq &\sw = 3\mathbin{;} \textcolor{blue}{\dup}\mathbin{;} \modify \sw 4 \mathbin{;} \textcolor{blue}{\dup}\\
p_4 \defeq &\sw=4 \mathbin{;} \textcolor{blue}{\dup}
\end{array}
\]
The overall program of the running example then becomes
\[
  p \defeq  \modify v 0   \mathbin{;}  {(p_1 \parallel p_2 \parallel p_3 \parallel p_4)}^*
\]
where the global variable $v$ is initialized to $0$, and the programs $p_1,p_2,p_3,p_4$ are executed in parallel, performing the actions of each individual switch. The Kleene star ensures that the packets may take multiple hops through the network, eventually reaching their final destination (switch $4$).

\begin{remark}
If a $\dup$ occurs in parallel to other threads, then these other parallel threads can only change the exact place of the $\dup$-recording in the pomset via possible interleavings, but not influence its content.
\end{remark}

\begin{remark}
  We model the collection of in-flight packets as a set, as opposed to e.g.\ a partially ordered set encoding their order of arrival. This is an abstraction of our framework. Not putting an order on packets simplifies the algebraic presentation and has the advantage that it enables modeling of switches that reorder packets without an additional primitive. If the order of packets is important, information about this order can be extracted from the semantics. In particular, when packets were forwarded can be deduced by inspecting the sets of packets recorded in the pomset component using $\dup$.
\end{remark}

\noindent\textbf{Two differences between \cnetkat and \netkat
}
Readers familiar with \netkat might wonder why \cref{ex:forwarding} uses $\parallel$ instead of $+$ to compose the branches of $p_1$.
The reason is that in \cnetkat, $\parallel$ is interpreted as multicast and $+$ is interpreted as non-deterministic composition.
In \netkat, programs act on a single input packet, so these coincide. But in \cnetkat, programs act on multiple packets concurrently, so they must be distinguished.

To illustrate the difference, consider wanting to filter the input packets so that only those where field $f$ has value $n$ or field $g$ has value $m$ remain.
In $\netkat$, we can use the program $\match f n + \match g m$, which can be understood in two different ways. First, we can think of it as using (angelic) non-determinism to select a test, yielding $\{\pk\}$ if at least one test passes and $\emptyset$ if both tests fail.
Alternatively, we can think of it as using multicast to copy the input to both $\match f n$ and $\match g m$, then using the tests to perform the required filtering, and finally taking the union of the resulting sets.
In \netkat, the net effect of both interpretations is identical, so multicast and non-determinism can be identified semantically.

However, when we generalize to \emph{sets} of packets, it is natural to expect that processing a set $a$ with $\match f n$ followed by $\match g m$ would yield the subset of $a$ where each packet satisfies at least one of the tests. Operationally, processing $a$ using these programs could be realized by making two copies of $a$, then using the tests to perform the required filtering, and taking the union of the resulting sets. This is reflected in the semantics: $\sem{\match f m \parallel \match g n}(a) = \{\onepom  \cdot (a(f=m)\cup a(g=n))\}$, where we get a single pair in the output. If instead we non-deterministically choose between the tests, the result would be the subset where $f=n$ \emph{or} the subset where $g=m$. Indeed, we have that $\sem {\match f m + \match g n}(a)=\{\onepom \cdot a(f=m) , \onepom \cdot a(g=n)\}$. Hence, multicast and non-determinism can no longer be identified in the context of multiple packets.
For readers familiar with \netkat, this means that the Boolean disjunction $\vee$ is now identified with $\parallel$ rather than $+$.

Lastly, we highlight that
\cnetkat's $\dup$ is fundamentally different from \netkat's $\dup$, which just records versions of the packet during execution. In \cnetkat, $\dup$ does two things: it implements the same functionality as in \netkat, but also structures the recording of packets inside the pomset.

\noindent\textbf{Proving properties with \cnetkat}
In \cref{sec:example}, we analyze the behavior of the running example in detail and show how to filter out the behaviors of $p$ that can be obtained when it is run \emph{in isolation}. In this overview, we establish a simpler property: namely, that $p$ exhibits executions where the packets were at switch $3$ before they were at switch $2$. We first argue this using the denotational semantics and then illustrate how we can establish the same fact with axiomatic reasoning.

Recall a pomset accounts for events and the ordering between them. In the following examples, we will depict pomsets as a graph with nodes labeled by state actions, observations and sets of packets, and the ordering indicated by arrows. For instance, $a\rightarrow b$ means that $a$ happened before $b$.

We evaluate $p$ on input $\{\heartsuit,\spadesuit\}$, where both packets start at switch $1$. In the closed semantics $\closure[]{\sem p} (\{\heartsuit,\spadesuit\})$ we find the following pomset (the $\cdots$ indicate that the pomset continues on the next line, not that nodes are omitted), in the first projection, with $\beta$ a partial function from $\Var$ to $\Val$ s.t. $\beta(v)=1$:

\begin{center}
\smallskip
\begin{tikzpicture}[node distance=0.25cm]
    \begin{scope}[every node/.style={anchor=center}]
    \node (gamma1) {$(\modify v 0)$};
    \node[right=of gamma1] (bla2) {$\{\heartsuit,\spadesuit\}$};
        \node[right=of bla2] (bla3) {$\{\heartsuit\}$};
          \node[right=of bla3] (bla4) {$\{\heartsuit[3/\sw]\}$};
        \node[right=of bla4] (bla5) {$(\modify v 1)$};
                \node[right=of bla5] (bla6) {$\beta$};
\node[right=of bla6] (bla7) {$\cdots$};
    \end{scope}
    \draw[->] (gamma1) edge (bla2);
      \draw[->] (bla2) edge (bla3);
      \draw[->] (bla3) edge (bla4);
      \draw[->] (bla4) edge (bla5);
      \draw[->] (bla5) edge (bla6);
        \draw[->] (bla6) edge (bla7);
\end{tikzpicture}

\begin{tikzpicture}[node distance=0.25cm]
    \begin{scope}[every node/.style={anchor=center}]
      \node (bla6) {$\cdots$};
\node[right=of bla6] (bla7) {$\{\spadesuit\}$};
\node[right=of bla7] (alpha) {$\{\spadesuit[2/\sw]\}$};
\node[yshift=4mm,right=of alpha] (gamma2) {$\{\spadesuit[2/\sw]\}$};
\node[yshift=-4mm,right=of alpha] (gamma5) {$\{\heartsuit[3/\sw]\}$};
\node[right=of gamma2] (x) {$\{\spadesuit[4/\sw]\}$};
\node[right=of gamma5] (y) {$\{\heartsuit[4/\sw]\}$};
\node[yshift=-4mm, right=of x] (z) {$\{\spadesuit[4/\sw],\heartsuit[4/\sw]\}$};
\end{scope}
\draw[->] (bla6) edge (bla7);
\draw[->] (bla7) edge (alpha);
\draw[->] (alpha) edge (gamma2);
\draw[->] (alpha) edge (gamma5);
\draw[->] (gamma2) edge (x);
\draw[->] (gamma5) edge (y);
\draw[->] (x) edge (z);
\draw[->] (y) edge (z);
\end{tikzpicture}
\end{center}

Every node labeled with a set of packets can be understood intuitively as ``at this point in
the execution these packets were a subset of the total packets present in the
network.'' We can observe in the pomset that the $\heartsuit$ packet was at switch
$3$, before the $\spadesuit$ packet reached switch $2$. We also see that $\modify
v 1$, happens between $\modify v 0$ and $\beta$. In the end, both packets are observed
at switch $4$.

The second projection in the semantics corresponding to this pomset is the set of output packets $\{\spadesuit[4/\sw],\heartsuit[4/\sw]\}$.

In
\ifarxiv%
\cref{sec:analysis},
\else%
the full version of this article~\cite[Appendix~E]{fullversion},
\fi%
we show something stronger: in all behaviors that can happen in isolation, the packet $\heartsuit[3/\sw]$ is recorded into the global pomset before the assignment $\modify v 1$, which precedes the observation that $v$ equals $1$ and the generation of the packet $\spadesuit[2/\sw]$.

We can write an axiomatic statement that captures that the above
behavior is in the closed semantics of $p$ on input $\{\heartsuit,\spadesuit\}$. To do this, we first need to capture the pictured global state pomset with corresponding set of output packets syntactically, for which we use an abbreviation. Namely, we can write a program that outputs, on any input, a specific packet: for a packet $\pk$, we write this program simply as $\pk$. The output of $\sem{\pk}$ on any input is $\{\onepom\cdot\{\pk\}\}$. This extends to sets of packets:  $\heartsuit\parallel\spadesuit$ denotes a program whose semantics is $\{\onepom\cdot \{\heartsuit\parallel\spadesuit\}\}$ on any input. This notation pairs well with the use of the letters $a\in 2^{\Pk}$ as programming syntax: if we know which set of packets we (want to) record into the global state pomset
with $\dup$, we can also directly write this set of packets in the program as a syntactic letter. For instance, the program $(\heartsuit\parallel\spadesuit)\mathbin{;}\dup$, has the same behaviors as $(\heartsuit\parallel\spadesuit)\mathbin{;}\{\heartsuit,\spadesuit\}$:
the moment we execute the $\dup$, we know the current set of packets is $\{\heartsuit,\spadesuit\}$, and thus writing this set of packets as a letter and recording that letter into the global state pomset will have the same result. Using these two pieces of information, we can write the program
\begin{align}
  q&
\defeq
\Big((\modify v 0)\mathbin{;}
\{\heartsuit,\spadesuit\} \mathbin{;} \{\heartsuit \}\mathbin{;} \{\heartsuit [3/\sw]\}\mathbin{;} (\modify v 1)\mathbin{;} (\match v 1)\mathbin{;} \{\spadesuit\} \mathbin{;}  \dots \label{programq} \\
&\dots \{\spadesuit[2/\sw]\}\mathbin{;}\Big((\{\spadesuit[2/\sw]\} \mathbin{;}\{\spadesuit[4/\sw]\}) \parallel ( \{\heartsuit [3/\sw]\} \mathbin{;}\{\heartsuit [4/\sw]\} )\Big)
\mathbin{;} \dots \nonumber \\
& \dots \{\spadesuit[4/\sw],\heartsuit[4/\sw]\}\Big)\mathbin{;} (\spadesuit[4/\sw]\parallel \heartsuit[4/\sw]) \nonumber
\end{align}
The first chunk of this program is the syntactic encoding of the desired global state pomset, where the $\heartsuit$ packet arrives at switch $3$ before the $\spadesuit$ packet arrives at switch $2$, and the final parallel of packets represents the set of output packets. We can prove using the axioms of \cnetkat that
\begin{equation}\label{eq:bla}
(\heartsuit\parallel\spadesuit) \mathbin{;} q \leqq (\heartsuit\parallel\spadesuit) \mathbin{;} p
\end{equation}

\eqref{eq:bla} states that the behavior of $q$ on input $\{\heartsuit,\spadesuit\}$, is included in the behavior of $p$ on the same input. In the behavior of $q$, it is clear that the $\heartsuit$ packets are observed at switch $3$ before the $\spadesuit$ packets appear at switch $2$. 

\begin{remark}[Generalized alphabet]\label{remark:letters}
Here we see the use of sets of packets as letters in the program syntax. Program $q$ is much closer to the behavior we try to capture, and therefore easier to analyze, than a program containing $\dup$.
\end{remark}

To check the validity of equivalences such as~\eqref{eq:bla}, we axiomitize \cnetkat and prove it sound and complete. The axioms include the axioms of \KA, extended with additional axioms for operations that manipulate packets and the global state. The full axiomatization appears in \Cref{sec:theory}. For instance, $\pfalse \mathbin{;} q \equiv \pfalse$ states that no outputs are produced in the absence of inputs. The program $\pfalse$ drops the set of inputs and returns $\{\onepom\cdot \emptyset\}$. Any program $q$ after $\pfalse$ outputs $\{\onepom\cdot\emptyset\}$, because $q$ is not executed when
the input is empty. In contrast, $q\mathbin{;}\pfalse\equiv\pfalse$ does not hold since $q$ might have changed the global state.


In addition to $\pfalse$, \cnetkat has a program $\abort$, which acts as a unit for non-deterministic choice ($+$). To illustrate the difference between $\abort$ and $\pfalse$ consider $(\match f n)\mathbin{;} (\match f m)$ and $(\match v n)\wedge (\match v m)$, where $m\neq n$.
The first program filters using $f=n$ and and then filters using $f=m$ where $m\neq n$. This yields $\{\onepom\cdot \emptyset\}$, since a packet cannot have different values for $f$. Hence, we can derive
$
(\match f n)\mathbin{;} (\match f m) \equiv  \pfalse
$.
The second program asserts the global state variable $v$ has value $n$ and $m$, which is inconsistent; we require variable $v$ to have two different values at the same time. Hence, from the axioms we can derive that
$
(\match v n)\wedge (\match v m) \equiv \bot \equiv \abort
$.

We prove in \Cref{sec:completeness} that the axiomatization presented in \cref{sec:theory} is not only sound but also complete---i.e., all programs with the same semantics can be proved equivalent using the axioms. The rest of the paper is devoted to presenting the \cnetkat syntax and semantics formally (\Cref{sec:cnetkat}), and establishing conservativity results over \netkat and \POCKA. Lastly we present a case study (\Cref{sec:example}). 

\section{Concurrent NetKAT}\label{sec:cnetkat}
This section defines the syntax and semantics of \cnetkat formally.

\subsection{Pomsets and pomset languages}\label{sec:preliminaries}
%

For a poset $(X, \leq)$ and a set $S \subseteq X$, define the \emph{downwards-closure} of $S$ by $S_{\leq} ::= \{x \mid \exists y\in S \text{ s.t } x\leq y\}$ and $P_{\leq}(X) ::= \{ Y \subseteq X \mid Y = Y_{\leq} \}$.
It is well-known that $P_{\leq}(X)$ carries the structure of a bounded distributive lattice, with intersection as meet, union as join, $X$ as top and $\emptyset$ as bottom.
Further, if $(X, \leq)$ is finite, the lattice is itself finite and thus carries a (necessarily unique) pseudocomplement defined by $\overline{Y} ::= \bigcup \{ Z \in P_{\leq}(X) \mid Y \cap Z = \emptyset\}$.
We provide a concrete lattice with a pseudocomplement below.

\medskip
\noindent\textbf{\textit{Pomsets}} are used to capture the different evolutions of the state as it is accessed concurrently by different threads. Pomsets are labeled posets (up to isomorphism), used as a generalization of words~\cite{gischer-1988,grabowski-1981}.
A \emph{labeled poset} over a finite alphabet $\alphabet$ is a triple $\lp{u} = \angl{S_\lp{u}, \leq_\lp{u}, \lambda_\lp{u}}$, where $(S_\lp{u},\leq_\lp{u})$ is a partially ordered set and $\lambda_\lp{u} \colon S \to \alphabet$ is the labeling function.
For $\lp{u}, \lp{v}$ labeled posets,
we say $\lp{u}$ is \emph{isomorphic} to $\lp{v}$, $\lp{u} \cong \lp{v}$, if there exists a bijection $h\colon S_\lp{u} \to S_\lp{v}$ that preserves labels --- $\lambda_\lp{v} \circ h = \lambda_\lp{u}$--- and preserves and reflects ordering--- $s \leq_\lp{u} s'$ if and only if $h(s) \leq_\lp{v} h(s')$.
A \emph{pomset} over $\alphabet$ is an isomorphism class of labeled posets over $\alphabet$, i.e., the class
$[\lp{v}] = \{ \lp{u} \mid \lp{u} \cong \lp{v} \}$ for some labeled poset $\lp{v}$. Because pomsets are label-preserving
isomorphism classes, the nature of the carrier is not relevant, only its cardinality and order.  The
triple $\lp{u} = \angl{S_\lp{u}, \leq_\lp{u}, \lambda_\lp{u}}$ is a representation of the pomset.
However, often we abuse terminology and call  $\upom$ the pomset.

We write $\Pom(\alphabet)$ for the set of pomsets over $\alphabet$, and $\onepom$ for the empty pomset.
When $a \in \alphabet$, we write $a$ for the pomset represented by the labeled poset with a single node labeled by $a$.
Pomsets can be composed sequentially and in parallel.

The \emph{parallel composition} of two pomsets is obtained by taking the disjoint union of the carriers, while keeping the ordering relations within each component. Formally, $\upom \parallel \vpom=\angl{S_{\lp{u} \parallel \lp{v}}, \leq_{\lp{u} \parallel \lp{v}},\lambda_{\lp{u} \parallel \lp{v}}}$, with
    $S_{\lp{u} \parallel \lp{v}} = S_\lp{u} + S_\lp{v}$,
    ${\leq_{\lp{u} \parallel \lp{v}}} = {\leq_\lp{u}} \cup {\leq_\lp{v}}$
    and $\lambda_{\lp{u} \parallel \lp{v}}(x)= \lambda_\lp{u}(x)$, for $x\in S_\lp{u}$, and $\lambda_{\lp{u} \parallel \lp{v}}(x)=\lambda_\lp{v}(x)$,  for $x\in S_\lp{v}$. Two pomsets are composed \emph{sequentially} by taking the disjoint union of the carriers and ordering all elements of the first before all elements of the second, keeping the ordering relations within each component. Formally, $\upom \cdot \vpom = \angl{S_{\lp{u} \cdot \lp{v}}, \leq_{\lp{u} \cdot \lp{v}},\lambda_{\lp{u} \cdot \lp{v}}}$, with
    $S_{\lp{u} \cdot \lp{v}} = S_\lp{u} + S_\lp{v}$,
    ${\leq_{\lp{u} \cdot \lp{v}}} = {\leq_\lp{u}} \cup {\leq_\lp{v}} \cup (S_\lp{u} \times S_\lp{v})$ and
    $\lambda_{\lp{u} \cdot \lp{v}} = \lambda_{\lp{u} \parallel \lp{v}}$.


Gischer introduced a notion of ordering on pomsets~\cite{gischer-1988}: $\upom \sqsubseteq \vpom$  means that $\upom, \vpom$ have the same events and labels, but $\upom$ is ``more sequential'' than $\vpom$ in the sense that more events are ordered. Formally, $\lp{u} \sqsubseteq \lp{v}$ if there exists a
label- and order-preserving bijection $h\colon S_\lp{v} \to S_\lp{u}$.

\textbf{\textit{Pomset languages}} are simply sets of pomsets. The operations on pomsets lift pointwise to pomset languages, see \cref{fig:semantics}. The semantics of concurrent threads requires ensuring a closure property. In particular, we will close pomset languages under the subsumption order of Gischer. Additionally, for pomsets that contain nodes labeled by observations, we make use of a \emph{contraction} order: $\upom \preceq \vpom$, capturing that $\upom$ results from $\vpom$ by eliminating consecutive observations that can be collapsed into one. As an example, consider
\begin{mathpar}
\begin{tikzpicture}[node distance=0.5cm]
  \begin{scope}[every node/.style={anchor=center}]
    \node (atom1) {$a$};
    \node[yshift=3mm, right=of atom1] (write1) {$\alpha$};
    \node[yshift=-3mm, right=of atom1] (write2) {$b$};
    \node[yshift=-3mm, right=of write1] (atom7) {$c$};
  \end{scope}
  \path (atom1) edge[->] (write1);
  \path (atom1) edge[->] (write2);
  \path (write1) edge[->] (atom7);
  \path (write2) edge[->] (atom7);
\end{tikzpicture}
\and
\begin{tikzpicture}[node distance=0.5cm]
 \begin{scope}[every node/.style={anchor=center}]
   \node (atom1) {$a$};
   \node[yshift=3mm, right=of atom1] (write1) {$\alpha$};
   \node[yshift=-3mm, xshift=5mm, right=of atom1] (write2) {$b$};
   \node[right=of write1] (atom3) {$\alpha$};
   \node[yshift=-3mm, right=of atom3] (atom7) {$c$};
 \end{scope}
 \path (atom1) edge[->] (write1);
 \path (atom1) edge[->] (write2);
 \path (write1) edge[->] (atom3);
 \path (write2) edge[->] (atom7);
 \path (atom3) edge[->] (atom7);
\end{tikzpicture}
\end{mathpar}
 Denote these pomset with $\upom$ and $\vpom$ respectively, and let $\alpha\in\State$. Then $\upom\preceq \vpom$. A formal definition can be found in
 \ifarxiv%
 \cref{app:preliminaries}.
 \else%
 the full version of this article~\cite[Appendix~A]{fullversion}.
 \fi%

\begin{textAtEnd}[allend, category=preliminaries]
  Formally, $\upom \preceq \vpom$ holds iff there exists a surjection $h\colon S_\lp{v} \to S_\lp{u}$ satisfying: (i) $\lambda_{\lp{u}} \circ h = \lambda_{\lp{v}}$; (ii) $v \leq_{\lp{v}} v'$ implies $h(v) \leq_{\lp{u}} h(v')$; (iii) if $h(v)
  \leq_{\lp{u}} h(v')$, then $\lambda_{\lp{v}}(v),
\lambda_{\lp{v}}(v') \in \State$ implies $v \leq_{\lp{v}} v'$ or $v'
\leq_{\lp{v}} v$, and $\lambda_{\lp{v}}(v)$ or
$\lambda_{\lp{v}}(v') \not\in \State$ implies $v \leq_{\lp{v}} v'$.

\end{textAtEnd}
%
\begin{definition}[Closure]\label{def:closure}
  Let $L$ be a pomset language.
\begin{mathpar}
\closure[\hexch]L=\{\upom\mid \exists \vpom\in L \text{ s.t. }\upom\sqsubseteq \vpom\} \and
\closure[\hcontr]{L}=\{\upom\mid \exists \vpom\in L \text{ s.t. }\upom\preceq \vpom\}
\end{mathpar}
We define $
\closure[\hcontr\cup\hexch]L$ as the smallest language containing $L$ and satisfying that if $\vpom\in \closure[\hcontr\cup\hexch]L$ and $\upom\preceq \vpom$ or $\upom\sqsubseteq \vpom$, then $\upom\in \closure[\hcontr\cup\hexch]L$.
\end{definition}

Closure under $\sqsubseteq$ is called $\hexch$ because it ensures soundness of the \emph{exchange law}, an axiom introduced in~\cite{hoare-moeller-struth-wehrman-2009} to capture the possibility of interleaving. Closure under contraction is motivated algebraically; it ensures soundness of one of the axioms necessary when adding a test algebra (a PCDL or a BA) to a KA~\cite{fossacs2020}.
%
%
%
%
%
\begin{textAtEnd}[allend,category=preliminaries]
The following lemma was shown in~\cite{fossacs2020}:
\begin{lemma}\label{closure-order}
 Let $L$ be a pomset language.
 We have $
 \closure[\hexch\cup\hcontr]{L}=\closure[\hcontr]{(\closure[\hexch]{L})}$.
\end{lemma}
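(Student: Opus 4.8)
The plan is to prove the two inclusions separately, after recording that both $\sqsubseteq$ and $\preceq$ are preorders: reflexivity is witnessed by identity maps, and transitivity by composing the witnessing (bijective, resp.\ surjective) maps, where one checks that the side-conditions of each relation are preserved under composition. The inclusion $\closure[\hcontr]{(\closure[\hexch]{L})} \subseteq \closure[\hexch\cup\hcontr]{L}$ is then immediate: an element $\upom$ of the left-hand side satisfies $\upom \preceq w$ and $w \sqsubseteq \vpom$ for some $w$ and some $\vpom \in L$, so since $\closure[\hexch\cup\hcontr]{L}$ contains $L$ and is closed downward under both $\sqsubseteq$ and $\preceq$, we get successively $\vpom, w, \upom \in \closure[\hexch\cup\hcontr]{L}$.

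For the reverse inclusion I would show that the set $\closure[\hcontr]{(\closure[\hexch]{L})}$ already contains $L$ and is itself closed downward under both relations; minimality of $\closure[\hexch\cup\hcontr]{L}$ then yields $\closure[\hexch\cup\hcontr]{L} \subseteq \closure[\hcontr]{(\closure[\hexch]{L})}$. Containment of $L$ is reflexivity of both relations. Closure under $\preceq$ is transitivity of $\preceq$: if $\upom \preceq w$ and $w \preceq x \sqsubseteq \vpom \in L$, the two $\preceq$-steps collapse into one. The only substantive point is closure under $\sqsubseteq$, which I reduce to the commutation claim
\[
\upom \sqsubseteq w \ \wedge\ w \preceq x \ \Longrightarrow\ \exists\, x'\ (\upom \preceq x' \ \wedge\ x' \sqsubseteq x).
\]
Indeed, an element $w$ of $\closure[\hcontr]{(\closure[\hexch]{L})}$ has the shape $w \preceq x \sqsubseteq \vpom \in L$, so from $\upom \sqsubseteq w$ the claim produces $x'$ with $\upom \preceq x' \sqsubseteq x \sqsubseteq \vpom$; transitivity of $\sqsubseteq$ then gives $x' \in \closure[\hexch]{L}$ and hence $\upom \in \closure[\hcontr]{(\closure[\hexch]{L})}$, as required.

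To establish the commutation claim I would argue by an explicit construction. Let $g\colon S_x \to S_w$ be the label-preserving, order-reflecting surjection witnessing $w \preceq x$, and $h\colon S_w \to S_\upom$ the label- and order-preserving bijection witnessing $\upom \sqsubseteq w$, so that $h\circ g\colon S_x \to S_\upom$ is again a label-preserving surjection. Define $x'$ to have the carrier and labelling of $x$, with order
\[
{\leq_{x'}} \ =\ \textrm{transitive closure of}\ \ {\leq_x} \cup \{(s,t) : (h\circ g)(s) \ne (h\circ g)(t),\ (h\circ g)(s) \leq_\upom (h\circ g)(t)\}.
\]
Since $\leq_x \subseteq \leq_{x'}$, the identity on the carrier witnesses $x' \sqsubseteq x$. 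For $\upom \preceq x'$ I would take $h\circ g$ as the witnessing surjection: label preservation is inherited from $h$ and $g$; order preservation holds because each generating pair maps into $\leq_\upom$ (using $\leq_w \subseteq \leq_\upom$ for the $\leq_x$-generators and the definition for the others); and the contraction side-condition holds because any pair with $(h\circ g)(s) \leq_\upom (h\circ g)(t)$ either has distinct images, in which case it was put directly into $\leq_{x'}$, or lies in a single $g$-fiber, where — since the definition of $\preceq$ forces only observations to be merged and forces each fiber to be a $\leq_x$-chain — $s$ and $t$ are comparable observations.

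The step I expect to be the main obstacle is verifying that $\leq_{x'}$ is genuinely antisymmetric, i.e.\ that $x'$ is a bona fide pomset. A putative cycle $s <_{x'} t <_{x'} s$ forces, by applying the monotone $h\circ g$ and antisymmetry of $\leq_\upom$, that $(h\circ g)(s) = (h\circ g)(t)$, hence (as $h$ is injective) that $s$ and $t$ lie in the same $g$-fiber; one checks that the whole cycle then stays inside that fiber, where no cross-fiber generator is available, so the cycle is realized entirely within $\leq_x$ and collapses by antisymmetry of $\leq_x$. Getting this fiber analysis right is the crux: it is exactly the reason the pulled-back order is restricted to pairs in \emph{distinct} fibers, since pulling back within a merged-observation fiber would reintroduce two-element cycles and destroy antisymmetry.
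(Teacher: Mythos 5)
Your proof is correct, but note that there is no in-paper argument to compare it against: the paper states this lemma as imported from~\cite{fossacs2020} ("The following lemma was shown in...") and gives no proof of its own. Your reconstruction is the standard factorization argument, executed soundly: the easy inclusion uses only that $\closure[\hexch\cup\hcontr]{L}$ contains $L$ and is downward closed under both relations; the hard inclusion reduces, via minimality, to showing that $\closure[\hcontr]{(\closure[\hexch]{L})}$ is closed under $\sqsubseteq$, which is exactly your commutation claim; and the commutation claim is established by an explicit construction whose genuine crux --- antisymmetry of the pulled-back order --- you identify and resolve correctly (any cycle in $\leq_{x'}$ maps to a cycle in $\leq_{\upom}$, hence has constant image under $h\circ g$, hence contains no distinct-image generators, hence lies entirely in $\leq_x$ and collapses). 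Two points should be made explicit in a polished write-up: first, the facts that only $\State$-labelled nodes can be merged and that each $g$-fiber is a $\leq_x$-chain are not literally part of the definition of $\preceq$ but are consequences of conditions (i) and (iii) applied to pairs within a common fiber (and the resulting singleton-fiber observation is what settles condition (iii) for $\upom \preceq x'$ when one of the two labels is not an observation); second, transitivity of $\preceq$, which you invoke twice, requires the small but non-trivial check that condition (iii) survives composition of the two witnessing surjections. What your route buys is self-containedness, making visible exactly where the specific shape of the contraction relation is needed; what the paper's citation buys is brevity, at the cost of leaving the key construction opaque to the reader.
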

Hence, we know that $U\in   \closure[\hexch\cup\hcontr]{L}$ if and only if there exists $W\in\closure[\hexch]{L}$ and $V\in L$ such that $U\preceq W\sqsubseteq V$.
\end{textAtEnd}

%
%

\subsection{\cnetkat: syntax and semantics}\label{sec:syntax}
\cnetkat expressions denote (possibly concurrent) packet processing programs that have access to a global state.
 Syntactically, \cnetkat is a language built from alphabets of tests and actions, each of which is divided in two categories. For packet tests, we firstly inherit \netkat's {\em packet predicates}, which are elements of a Boolean algebra generated by an alphabet of basic tests on packet fields. Packet predicates $t,u$ include constants $\drp$ and $\ptrue$, denoting false and true, basic tests $\match f n$, negation $\neg t$, disjunction $t\vee_{\BB}u$ and conjunction $t\wedge_{\BB}u$ operations.

Additionally, we have state observations, which do not have the structure of a Boolean algebra but instead form a
pseudocomplemented distributive lattice. Intuitively, the functions denoting the state are partial. State observations $o,o'$ include constants $\bot$ and $\top$, basic tests $\match v n$, pseudocomplement $\overline o$, intersection
$o\land o'$ and union $o\vee o'$. The other constructs were introduced in \Cref{sec:overview} (see \Cref{fig:syntax}).

The semantics of a program is a function $\lang{\cdot} \colon 2^\Pk \to 2^{\Pom ( \State \cup \Act \cup 2^{\Pk}) \cdot 2^\Pk}$ that takes a set of packets $a$ and produces a (possibly empty) set of pairs $\upom \cdot b$ consisting of a pomset $\upom$, recording the global state behavior and the storage of local packets whenever $\dup$ is used, and a set of packets $b$.
 On an empty input set, every program produces $\{\onepom\cdot\emptyset\}$, modeling that nothing can happen without packets. Producing the empty set when the input is non-empty models a program that has aborted, whereas producing a set $\{\onepom \cdot \emptyset\}$ models dropping all the packets without any change to the state. Most of the semantics was already explained in \Cref{sec:overview}; in the following we elaborate on some behaviors and illustrate subtleties concerning the units. See \Cref{fig:semantics}
for an overview of the full denotational semantics of \cnetkat.

On a non-empty input $a$, a packet filter $t$ removes packets in $a$ that do not satisfy predicate $t$ and does not touch the state --- this is captured by the set $\{\onepom \cdot \bsem{t}(a)\}$, where $\bsem{t}(a)$ is interpreted as an element of the Boolean algebra $(2^a, \cup,\cap,\emptyset,a,\setminus)
$ defined by the poset $(2^a,\subseteq)$, and $\bsem{t}(a)$ is defined as the homomorphic extension of $\bsem{\match f n}(a)=\{\pk\in a\mid \pi(f)=n\}$.

A state observation denotes a function that returns a set with elements
 $\upom \cdot a$ when applied to a set $a$. In case the original input
 set $a$ is empty, nothing happens and the output of $\sem{o}(a)$ is simply
 $\{\onepom\cdot \emptyset\}$. When $a$ is not empty, the semantics of $o$ makes use of an observation algebra developed in~\cite{jipsen-moshier-2016,pocka}.
 More formally, we take the
 pseudocomplemented bounded distributive lattice $(P_{\leq}(\State),
 \cup,\cap,\State,\emptyset, \pnegop,)$ generated by the poset
 $(\State,\leq)$ with $\alpha\leq \beta$ if and only if
 $\textsf{domain}(\beta)\subseteq \textsf{domain}(\alpha)$ and
$\forall x\in\textsf{domain}(\beta).\alpha(x)=\beta(x)$. Then, a state
observation is interpreted as $\State^* \cdot \osem{o} \cdot \State^* \times \{ a \} $, where $\osem{o}$ is an element of $P_{\leq}(\State)$ and defined as the homomorphic extension of the assignment $\osem{\match v n}=\{\alpha\in \State \mid \alpha(v)=n\}$.
Intuitively, in $\osem{o}$, we find all the partial functions (elements of $\State$) that agree with $o$. For instance, $\osem{\match v n}$ contains all partial functions that assign $n$ to $v$. This also illustrates the need for a pseudocomplement rather than a complement: if threads have only partial information about the state, an observation should be satisfied only
if there is \emph{positive evidence} for it. Hence, e.g. $\overline {\match v n}$ should be satisfied only if $v$ has a
value and it is not $n$, which is not captured by the complement from a Boolean algebra --- the complement would also include partial functions that do not assign a value to $v$ in the behavior of $\overline {\match v n}$. This is incorrect, because if $v$ has no value in a partial observation, we might learn later that the actual value of $v$ was in fact $n$, and it was therefore incorrect to assert $\overline{\match v n}$.

 State modifications are interpreted as a set of elements $\upom \cdot a$ when applied to a set $a$. The pomsets $\upom$ record the state modification surrounded by arbitrary state observations; in the first projection of the semantics of the assignment $\modify v n$ we get a set of possible pomsets:
$
 \State^*\odot \{v\leftarrow n \}\odot \State^*
$.

\begin{remark}\label{remark:stateobs}
  We surround state changes and observations with arbitrary sequences of states to include global pomsets that have alternating modifications and states in the semantics. Reasoning about behavior of programs is more practical using such alternating pomsets, because the states allow one to take stock of the configuration of the machine in between modifications. The semantics contains also non-alternating pomsets to ensure compositionality w.r.t the parallel.
\end{remark}
%

\cnetkat has six different syntactical units, some of which coincide semantically. There are two units for packets: $\drp$, which drops all the packets ($\{\onepom\cdot\emptyset\}$), and $\ptrue$, which passes the current packets without changing the state ($\{\onepom\cdot a\}$ on input $a$). Similarly, we have two units for state observations: $\bot$ and $\top$. The first one indicates an inconsistent state, and therefore the whole program exhibits no behavior; its behavior is $\emptyset$. The second one indicates any state observation is acceptable, and its behavior on input $a$ is $\{s\cdot a \mid s\in\State\}$. Lastly there are two
units for programs in general: $\abort$, the program without behavior, and $\skp$, the program where nothing happens (on input $a$ its semantics is
$\{\onepom \cdot a\}$). Hence, $\abort$ is equivalent to
$\bot$ and $\skp$ equivalent to $\ptrue$. All units behave as
$\{\onepom\cdot\emptyset\}$ when the input set is $\emptyset$,
because nothing happens when there are no packets.
%

  The \cnetkat semantics consists of pairs of global state pomsets and sets of output packets. It might be possible to
  encode the information of the output packets as a final node in the pomset, but keeping the set of output packets
  separated allows us to easily track the input-output behavior of a program in terms of packets. This brings \cnetkat
  closer to \netkat and its packet processing behavior. In particular, the \netkat packet processing axioms, can only be used because we track the input-output behavior of the program
  separately.
%

To obtain the full semantics, and ensure we capture correctly the intended behavior, we need to perform a closure on the state component.
\begin{definition}[Closed Semantics]\label{def:closed-semantics}
Given a \cnetkat policy $p$, we define the semantics of $p$ when applied to input $a\in 2^\Pk$ as
\[\closure[] {\lang{p}} (a) =
\left\{\upom \cdot b \mid \vpom \cdot b \in \lang{p}(a), \upom \in\closure[\hcontr\cup\hexch] { \{\vpom\}}\right\}
\]
\end{definition}
Closure under $\hexch$ and $\hcontr$ formalizes important intuitions about the semantics of concurrent threads. The closure under $\hexch$ ensures all traces resulting from interleaving threads are included, and the closure under $\hcontr$ specifies that if two observations hold simultaneously, then it is possible to observe them in sequence. Note that the converse should not hold as some action could happen in between the two observations in a parallel thread.

\begin{textAtEnd}[allend, category=syntax]

  The following properties concerning the interaction between closure and other operators can be useful. Here $\parallel, \cdot$ and the star all have their usual interpretation on pomset languages. That is, $L\parallel K=\{\upom\parallel
  \vpom\mid \upom\in L, \vpom\in K\}$, $L\cdot K=\{\upom\cdot
  \vpom\mid \upom\in L, \vpom\in K\}$ and the Kleene star of a pomset language $L$ is defined as
  $L^*=\bigcup_{n\in\mathbb{N}}L^n$, where $L^0=\{\onepom\}$ and
  $L^{n+1}=L^n\cdot L$. All properties are proven in~\cite{fossacs2020}, except for~\eqref{property:hypothesis-union}, which follows from~\cite[Lemma 3.7]{cka} and~\cite[Lemma 4.8]{kao}.
  \begin{lemma}\label{lemma:basicfacts}
    Let $L, K $ be pomset languages. The following hold.
  \begin{enumerate}
      \item\label{property:hypothesis-monotone}
      If $L \subseteq K$, then $\closure[\hexch\cup\hcontr] L \subseteq \closure[\hexch\cup\hcontr] K$.

      \item\label{property:hypothesis-union}
      $\closure[\hexch\cup\hcontr] {(L \cup K)} = \closure[\hexch\cup\hcontr] L \cup \closure[\hexch\cup\hcontr] K$

   \item\label{property:hypothesis-concat}
   $\closure [\hexch\cup\hcontr] {(L \cdot K)} = \closure[\hexch\cup\hcontr] {(\closure[\hexch\cup\hcontr] L \cdot \closure[\hexch\cup\hcontr] K)}$

      \item\label{property:hypothesis-parallel}
      $\closure[\hexch\cup\hcontr] {(L \parallel K)} = \closure[\hexch\cup\hcontr] {(\closure[\hexch\cup\hcontr] L \parallel \closure[\hexch\cup\hcontr] K)}$

  \end{enumerate}

  \end{lemma}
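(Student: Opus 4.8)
The plan is to treat all four statements as instances of standard facts about a closure operator interacting with an ordered algebra of pomsets, and to isolate the two genuinely load-bearing ingredients. Write $c := \closure[\hexch\cup\hcontr]{(-)}$. The first ingredient is that $c$ is a closure operator in the order-theoretic sense: it is extensive ($L \subseteq c(L)$, immediate from \cref{def:closure}), monotone (this is exactly statement~\ref{property:hypothesis-monotone}), and idempotent ($c(c(L)) = c(L)$, immediate from the characterization of $c(L)$ as the smallest closed language containing $L$). The second ingredient is that both Gischer's subsumption order $\sqsubseteq$ and the contraction order $\preceq$ are \emph{precongruences} for sequential and parallel composition: if $\upom \sqsubseteq \upom'$ and $\vpom \sqsubseteq \vpom'$ then $\upom \star \vpom \sqsubseteq \upom' \star \vpom'$ for $\star \in \{\cdot, \parallel\}$, and likewise for $\preceq$. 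Everything else is bookkeeping on top of these two facts and the decomposition $c = \closure[\hcontr]{(\closure[\hexch]{(-)})}$ from \cref{closure-order}.

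For~\ref{property:hypothesis-monotone} I would argue directly: $c(K)$ is a closed language containing $K \supseteq L$, and $c(L)$ is the smallest closed language containing $L$, so $c(L) \subseteq c(K)$. For~\ref{property:hypothesis-union}, the inclusion $\supseteq$ follows from~\ref{property:hypothesis-monotone} applied to $L, K \subseteq L \cup K$. For $\subseteq$, I would note that each single-relation downwards closure distributes over unions --- $\upom \sqsubseteq \vpom$ for some $\vpom \in L \cup K$ iff $\vpom \in L$ or $\vpom \in K$, so $\closure[\hexch]{(L\cup K)} = \closure[\hexch]{L} \cup \closure[\hexch]{K}$, and identically for $\closure[\hcontr]{(-)}$ --- and then compose the two using \cref{closure-order}: $c(L\cup K) = \closure[\hcontr]{(\closure[\hexch]{L} \cup \closure[\hexch]{K})} = \closure[\hcontr]{(\closure[\hexch]{L})} \cup \closure[\hcontr]{(\closure[\hexch]{K})} = c(L) \cup c(K)$.

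Statements~\ref{property:hypothesis-concat} and~\ref{property:hypothesis-parallel} I would handle uniformly, writing $\star$ for either $\cdot$ or $\parallel$. The inclusion $\subseteq$ is cheap: since $L \subseteq c(L)$ and $K \subseteq c(K)$ we have $L \star K \subseteq c(L) \star c(K)$, and monotonicity gives $c(L \star K) \subseteq c(c(L) \star c(K))$. For $\supseteq$, idempotence and monotonicity reduce the goal to the pointwise claim $c(L) \star c(K) \subseteq c(L \star K)$: once this holds, applying $c$ and using $c \circ c = c$ yields $c(c(L) \star c(K)) \subseteq c(L \star K)$. To prove the pointwise claim, take $\upom \in c(L)$ and $\vpom \in c(K)$. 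By the characterization in \cref{closure-order} there are $V_1 \in L$, $V_2 \in K$ and intermediate pomsets $W_1, W_2$ with $\upom \preceq W_1 \sqsubseteq V_1$ and $\vpom \preceq W_2 \sqsubseteq V_2$. The precongruence properties then give $\upom \star \vpom \preceq W_1 \star W_2 \sqsubseteq V_1 \star V_2$, and since $V_1 \star V_2 \in L \star K$, the same characterization places $\upom \star \vpom$ in $c(L \star K)$.

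The real work, and the step I expect to be the main obstacle, is establishing that $\preceq$ is a precongruence for $\cdot$ and $\parallel$. For $\sqsubseteq$ this is routine: a subsumption is witnessed by a label- and order-preserving bijection, and the disjoint sum $h_1 + h_2$ of the two witnessing bijections is again such a bijection for $\upom \star \vpom \sqsubseteq \upom' \star \vpom'$, since $\star$ combines carriers by disjoint union and the orders componentwise (plus, for $\cdot$, the fixed first-before-second relation, which $h_1 + h_2$ respects). For $\preceq$ I would again take the disjoint sum $h_1 + h_2$ of the witnessing surjections and verify conditions (i)--(iii) of the definition of $\preceq$. Conditions (i) and (ii) transfer componentwise without incident. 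Condition (iii) is the delicate one because of its $\State$-label side conditions: I must check what happens for a pair $v, v'$ lying in different components that becomes comparable under $h_1 + h_2$. For $\parallel$ this cannot occur --- parallel composition introduces no cross-component order, so $h(v) \leq h(v')$ forces $v, v'$ into the same component and (iii) reduces to that component. For $\cdot$ such cross-component pairs do become comparable, but only in the first-before-second direction, which is exactly the order already present in the source $\upom' \cdot \vpom'$; hence both the ``$v \leq v'$ or $v' \leq v$'' and the ``$v \leq v'$'' obligations of (iii) are discharged automatically. Carefully ruling out the reverse cross-component case and confirming the $\State$-label bookkeeping is where the argument has to be watched; once the two precongruences are in hand, the rest assembles mechanically. (These precongruence facts are also established in~\cite{fossacs2020}.)
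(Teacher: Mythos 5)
Your proof is correct, but it takes a genuinely different route from the paper for the simple reason that the paper has no proof at all: immediately before the statement of \cref{lemma:basicfacts}, the authors write that all properties are proven in~\cite{fossacs2020}, except for item~\ref{property:hypothesis-union}, which they obtain from~\cite[Lemma 3.7]{cka} and~\cite[Lemma 4.8]{kao}. What you have done is reconstruct, from first principles, the content of those cited results: you isolate the closure-operator laws (extensivity, monotonicity, idempotence, all immediate from \cref{def:closure}), the distributivity of the single-relation closures over union combined with the decomposition $\closure[\hexch\cup\hcontr]{L} = \closure[\hcontr]{(\closure[\hexch]{L})}$ of \cref{closure-order} (itself imported from~\cite{fossacs2020}, so legitimately available), and the precongruence of both $\sqsubseteq$ and $\preceq$ with respect to $\cdot$ and $\parallel$, which is indeed the only step requiring real work. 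Your treatment of that step is sound: the disjoint sum of witnessing maps works, the cross-component cases of condition (iii) for $\preceq$ are vacuous under $\parallel$ and automatically discharged under $\cdot$ because the only cross-component order is the first-before-second one, and the reverse cross-component comparison cannot arise since $\leq_{\upom \cdot \vpom}$ contains no pairs from $S_{\vpom} \times S_{\upom}$. What your approach buys is a self-contained verification that the cited facts really do hold for this particular pair of orders (in particular for the somewhat delicate $\State$-label side conditions of $\preceq$, which are specific to this line of work); what the paper's approach buys is brevity and deference to the sources where these arguments were first carried out.
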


  We can prove the following properties about our closure definition.
  \begin{lemma}\label{lemma:basicfacts2}
    Let $a\in 2^{\Pk}$ and $p,q\in \programs$.
  \begin{enumerate}

      \item\label{property:monotone}
    $\sem{p}(a)\subseteq \closure[] {\sem{q}}(a)$
    if and only if $\closure[]{\sem p}(a)\subseteq \closure[] {\sem{q}}(a)$.

      \item\label{property:union}
      $\closure[] {(\sem {p + q})}(a) = \closure[]{\sem p} (a) \cup \closure[] {\sem q} (a)$

      \item\label{property:seq}
        $\closure[]{\sem{p\mathbin{;}q}}(a)=
        \{\wpom\cdot b\mid \upom
              \cdot a'\in \closure[]{\sem{p}}(a),\vpom \cdot b\in
      \closure[]{\sem{q}}(a') ,
      \wpom\in\closure[\hexch\cup\hcontr]{\{\upom\cdot\vpom\}}\}$

      \item\label{property:parallel}
        $\closure[]{\sem{p\parallel q}}(a)=$\\
        $\{\wpom\cdot (b\cup c)\mid \upom
              \cdot b\in \closure[]{\sem{p}}(a),\vpom \cdot c\in
      \closure[]{\sem{q}}(a) ,
      \wpom\in\closure[\hexch\cup\hcontr]{\{\upom\parallel\vpom\}}\}$

      \item\label{property:star}
      $  \closure[]{\sem{p^*}}(a) = \{\upom\cdot b \mid \vpom\cdot b\in
         \displaystyle\bigsqcup_{n \in \N} \closure[]{\sem{p^{(n)}}}(a), \upom\in \closure[\hexch\cup\hcontr]{\{\vpom\}}\} $

         \item\label{closure-com} For all $\vpom\cdot b \in
             \closure[]{\sem p}(a)$, if $\upom\in\closure[\hcontr\cup\hexch]{\{\vpom\}}$, then
             $\upom\cdot b\in \closure[]{\sem p}(a)$.
  \end{enumerate}
  \end{lemma}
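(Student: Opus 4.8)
The plan is to prove item~\cref{closure-com} first, since it is the engine behind items~\cref{property:monotone} and~\cref{property:star}, and then to dispatch the remaining items by unfolding \cref{def:closed-semantics} and pushing the closure through the relevant operator with \cref{lemma:basicfacts}. The one fact I lean on throughout is that $\closure[\hexch\cup\hcontr]{(-)}$ is a genuine closure operator: it is monotone (\cref{lemma:basicfacts}) and, being by \cref{def:closure} the least language containing its argument that is downward-closed under $\preceq$ and $\sqsubseteq$, it is idempotent. In particular it is \emph{transitive} on singletons: if $\vpom \in \closure[\hexch\cup\hcontr]{\{\wpom\}}$ and $\upom \in \closure[\hexch\cup\hcontr]{\{\vpom\}}$, then $\{\vpom\} \subseteq \closure[\hexch\cup\hcontr]{\{\wpom\}}$ gives, by monotonicity and idempotency, $\closure[\hexch\cup\hcontr]{\{\vpom\}} \subseteq \closure[\hexch\cup\hcontr]{\{\wpom\}}$, hence $\upom \in \closure[\hexch\cup\hcontr]{\{\wpom\}}$.

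For~\cref{closure-com} I would unfold $\vpom \cdot b \in \closure[]{\sem p}(a)$ to obtain some $\wpom \cdot b \in \sem p(a)$ with $\vpom \in \closure[\hexch\cup\hcontr]{\{\wpom\}}$; the transitivity above turns $\upom \in \closure[\hexch\cup\hcontr]{\{\vpom\}}$ into $\upom \in \closure[\hexch\cup\hcontr]{\{\wpom\}}$, so $\upom \cdot b \in \closure[]{\sem p}(a)$ by \cref{def:closed-semantics}. Item~\cref{property:monotone} is then immediate: the $(\Leftarrow)$ direction holds because $\sem p(a) \subseteq \closure[]{\sem p}(a)$ (reflexivity of the orders), and for $(\Rightarrow)$, given $\upom \cdot b \in \closure[]{\sem p}(a)$ with witness $\vpom \cdot b \in \sem p(a) \subseteq \closure[]{\sem q}(a)$ and $\upom \in \closure[\hexch\cup\hcontr]{\{\vpom\}}$, item~\cref{closure-com} applied to $q$ yields $\upom \cdot b \in \closure[]{\sem q}(a)$. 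Item~\cref{property:union} is pure set manipulation: since $\sem{p+q}(a) = \sem p(a) \cup \sem q(a)$ and the closed semantics of \cref{def:closed-semantics} is defined by an existential over the base set, the defining condition splits over the union.

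The substantive cases are~\cref{property:seq,property:parallel}. I would unfold $\closure[]{\sem{\pseq p q}}(a)$ using \cref{def:closed-semantics} and the clause for $\pseq p q$ in \cref{fig:semantics}, reaching pairs $\wpom \cdot b$ with $\wpom \in \closure[\hexch\cup\hcontr]{\{\upom_0 \cdot \vpom_0\}}$ for some $\upom_0 \cdot a' \in \sem p(a)$ and $\vpom_0 \cdot b \in \sem q(a')$. The key move is to rewrite the closure of the sequential product of the two singletons via the concatenation identity of \cref{lemma:basicfacts}, namely $\closure[\hexch\cup\hcontr]{(\{\upom_0\} \cdot \{\vpom_0\})} = \closure[\hexch\cup\hcontr]{(\closure[\hexch\cup\hcontr]{\{\upom_0\}} \cdot \closure[\hexch\cup\hcontr]{\{\vpom_0\}})}$, and then to distribute the outer closure over the union of one-element products using the union identity of \cref{lemma:basicfacts}. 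This exhibits $\wpom \in \closure[\hexch\cup\hcontr]{\{\upom \cdot \vpom\}}$ for some $\upom \in \closure[\hexch\cup\hcontr]{\{\upom_0\}}$ and $\vpom \in \closure[\hexch\cup\hcontr]{\{\vpom_0\}}$; by \cref{def:closed-semantics} these are exactly the conditions $\upom \cdot a' \in \closure[]{\sem p}(a)$ and $\vpom \cdot b \in \closure[]{\sem q}(a')$, giving one inclusion, and the reverse inclusion runs the same equalities backwards. \Cref{property:parallel} is the same argument verbatim, with the parallel identity of \cref{lemma:basicfacts} replacing concatenation, both threads reading the same input $a$, and outputs combined as $b \cup c$.

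Finally, for~\cref{property:star} I would first note that the closed semantics distributes over the union $\sem{p^*}(a) = \bigcup_n \sem{p^{(n)}}(a)$ --- again because \cref{def:closed-semantics} is an element-wise existential --- giving $\closure[]{\sem{p^*}}(a) = \bigcup_n \closure[]{\sem{p^{(n)}}}(a)$. It then remains to check that the right-hand side of~\cref{property:star}, which applies one further $\closure[\hexch\cup\hcontr]$ on top of $\bigsqcup_n \closure[]{\sem{p^{(n)}}}(a)$, collapses back to this same union: the $\subseteq$ containment is reflexivity and the $\supseteq$ containment is exactly item~\cref{closure-com} applied to each $p^{(n)}$. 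I expect the main obstacle to be the bookkeeping in~\cref{property:seq}: one must ensure that the intermediate packet set $a'$ produced by $p$ is the \emph{same} set fed to $q$ on both sides of the equality, and that the existential over all base pairs commutes correctly with the distribution of closure over unions. A minor additional point is to confirm that the union identity of \cref{lemma:basicfacts}, stated for binary unions, extends to the (possibly infinite) indexed unions arising here; this holds because $\closure[\hexch\cup\hcontr]{(-)}$ commutes with arbitrary unions, being a downward closure defined element-wise in \cref{def:closure}.
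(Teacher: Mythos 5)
Your proposal is correct and follows essentially the same route as the paper's proof: unfold \cref{def:closed-semantics}, use the monotonicity/concatenation/parallel identities of \cref{lemma:basicfacts} for the sequential and parallel cases, and use idempotence-plus-monotonicity of $\closure[\hexch\cup\hcontr]{(-)}$ for the monotonicity, star, and closure-composition items. The only difference is organizational — you prove item~\ref{closure-com} first and derive items~\ref{property:monotone} and~\ref{property:star} from it, whereas the paper proves the items in order and leaves that transitivity step implicit in its (terse) proof of item~\ref{property:monotone} — which is a harmless, arguably cleaner, rearrangement of the same argument.
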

  \begin{proof}
    For the first item, take $w\in\closure[]{\sem p}(a) $. Thus $w=\upom\cdot b$ such that
    $\vpom\cdot b \in \sem{p}(a)$
    and $\upom\in\closure[\hexch\cup\hcontr]{\{\vpom\}}$. Then $\vpom\cdot b \in \closure[]{\sem q }(a)$ and the desired result follows immediately. For the other direction we take $w\in \sem p (a)$. As $w=\upom\cdot b$ for some
    $\upom$ and $b$ and $\upom\in\closure[\hexch\cup\hcontr]{\{\upom\}}$, we also have that $w\in \closure[]{\sem p}(a)$. From
    the assumption it then immediately follows that $w\in  \closure[]{\sem q}(a)$.

    For the second item, we take $w\in \closure[] {\sem {p + q}}(a)$. Hence, $w=\upom\cdot b$ such that
    $\vpom\cdot b \in \sem{p+q}(a)=\sem p (a)\cup \sem q (a)$
    and $\upom\in\closure[\hexch\cup\hcontr]{\{\vpom\}}$.
    W.l.o.g.\ we can assume $\vpom\cdot b\in \sem p (a)$, and by definition we then immediately
    we obtain that $\upom\in \closure[]{\sem p}(a)$.
    For the other direction we take $w\in \closure[]{\sem p} (a)$, and thus $w=\upom\cdot b$ such that $\vpom \cdot b \in \sem p (a)$, and $\upom\in\closure[\hexch\cup\hcontr]{\{\vpom\}}$. Then immediately $\vpom\cdot b \in \sem{p+q}(a)$
    and thus $\upom\cdot b \in \closure[] {(\sem {p + q})}(a) $.

    For the third item,
    take a word $w\in \closure[]{\sem{p\mathbin{;}q}}(a)$.
    Hence, $w=\upom\cdot b$ such that $\vpom \cdot b\in \sem{p\mathbin{;}q}(a)$ and $\upom\in\closure[\hexch\cup\hcontr]{\{\vpom\}}$.
    From this we obtain that $\vpom=\mathbf{x}\cdot\mathbf{y}$ for $\mathbf{x}\cdot a'\in \sem{ p}(a)$ and $\mathbf{y}\cdot b\in
    \sem{q}(a')$. Using~\ref{property:monotone}, we then conclude that $\mathbf{x}\cdot a'\in \closure[]{\sem{ p}}(a)$ and $\mathbf{y}\cdot b\in
    \closure[]{\sem{q}}(a')$.
    We need that $\upom\in \closure[\hexch\cup\hcontr]{\{\mathbf{x}\cdot\mathbf{y}\}}$, but this is immediate as $\vpom=\mathbf{x}\cdot\mathbf{y}$.

    For the other direction, we take $\wpom\cdot b$ such that $\upom\cdot a'\in \closure[]{\sem{p}}(a)$, $\vpom \cdot b\in
\closure[]{\sem{q}}(a')$, and
$\wpom\in\closure[\hexch\cup\hcontr]{\{\upom\cdot\vpom\}}$. Hence,
from \cref{def:closed-semantics} we obtain that there exists
$\mathbf{x}$ and $\mathbf{y}$ such that $\mathbf{x}\cdot a'\in \sem{p}(a)$, $\mathbf{y}\cdot b\in \sem q (a')$,
 $\upom\in\closure[\hexch\cup\hcontr]{\{\mathbf{x}\}}$,
 and $\vpom\in\closure[\hexch\cup\hcontr]{\{\mathbf{y}\}}$.
We can conclude that $(\mathbf{x}\cdot\mathbf{y})\cdot b\in \closure[]{\sem{p\mathbin{;}q}}(a)$.
 The only
 thing left to prove for $\wpom\cdot b\in \closure[]{\sem{p\mathbin{;}q}}(a)$ is $\wpom\in\closure[\hexch\cup\hcontr]{\{\mathbf{x}\cdot\mathbf{y}\}}$.
  From $\upom\in\closure[\hexch\cup\hcontr]{\{\mathbf{x}\}}$ and property~\ref{property:hypothesis-monotone} from
  \cref{lemma:basicfacts}, we know that $\closure[\hexch\cup\hcontr]{\{\upom\}}\subseteq \closure[\hexch\cup\hcontr]{\{\mathbf{x}\}}$
  and $\closure[\hexch\cup\hcontr]{\{\vpom\}}\subseteq \closure[\hexch\cup\hcontr]{\{\mathbf{y}\}}$.
  Then using~\ref{property:hypothesis-concat} from \cref{lemma:basicfacts}, we can conclude that
  \begin{align*}
    \wpom\in\closure[\hexch\cup\hcontr]{\{\upom\cdot\vpom\}}&=\closurep[\hexch\cup\hcontr]{(\closure[\hexch\cup\hcontr]
   {\{\upom\}}\cdot \closure[\hexch\cup\hcontr]{\{\vpom\}})}\\
   &\subseteq
   \closurep[\hexch\cup\hcontr]{(\closure[\hexch\cup\hcontr] {\{\mathbf{x}\}}\cdot \closure[\hexch\cup\hcontr]{\{\mathbf{y}\}})}\\
   &=\closure[\hexch\cup\hcontr]{\{\mathbf{x}\cdot\mathbf{y}\}}\end{align*}

   The proof of the fourth item is analogous but instead uses~\ref{property:hypothesis-parallel} of \cref{lemma:basicfacts}.

   For the fifth item, we observe that by definition $\closure[]{\sem{p^*}}(a)=\{\upom\cdot b \mid \vpom\cdot b \in
   \displaystyle\bigsqcup_{n \in \N} \sem{p^{(n)}}(a), \upom\in\closure[\hexch\cup\hcontr]{\{\vpom\}}\}$. Hence, the left to right direction is trivial by~\ref{property:monotone}.
   For the other direction, we have $\upom\cdot b$ such that $\upom\in\closure[\hexch\cup\hcontr]{\{\vpom\}}$ and $\vpom\cdot b\in \closure[]{\sem{p^{(n)}}}(a)$ for some $n\in \N$.
   Hence $\wpom\cdot b \in \sem{p^{(n)}}(a)$ and $\vpom\in \closure[\hexch\cup\hcontr]{\{\wpom\}}$.
   As $\upom\in\closure[\hexch\cup\hcontr]{\{\vpom\}}\subseteq \closure[\hexch\cup\hcontr]{\{\wpom\}}$, we obtain the desired result.

For the last item, we derive the following.
      If $\vpom\cdot b \in \closure[]{\sem p}(a)$, by definition this means that there exists $\wpom\cdot b \in \sem p (a)$ and
      $\vpom\in\closure[\hexch\cup\hcontr]{\{\wpom\}}$. From this we obtain that $\{\vpom\}\subseteq \closure[\hexch\cup\hcontr]{\{\wpom\}}$ and
      using property~\ref{property:hypothesis-monotone} of \cref{lemma:basicfacts} we can conclude that
      $\closure[\hexch\cup\hcontr]{\{\vpom\}}\subseteq \closure[\hexch\cup\hcontr]{\{\wpom\}}$.
      As $\upom\in\closure[\hcontr\cup\hexch]{\{\vpom\}}$, we obtain that $\upom \in \closure[\hexch\cup\hcontr]{\{\wpom\}} $.
      This results immediately in $\upom\cdot b \in \closure[]{\sem p}(a)$ by definition.
  \end{proof}

\end{textAtEnd}

We distinguish state, packet and deterministic packet programs as follows.
\begin{definition}[State and deterministic packet programs]
Let $\mathcal{T}_{\mathsf{packet}}$ denote packet programs, which are programs generated by the following grammar:
\[
    p, q::= t \in \BB\cup \left\{\modify f n \mid f\in \Field, n\in \Val\right\} \pipe p + q \pipe p \mathbin{;} q \pipe p \parallel q \pipe p^*
\]

Let $\mathcal{T}_{\mathsf{state}}(\Sigma)$ denote state programs over alphabet $\Sigma$:
\[
    s, v::= \abort\pipe \skp \pipe u \in \Sigma \pipe s + v \pipe s \mathbin{;}  v \pipe s \parallel v \pipe s^*
\]

Let $\mathcal{T}_{\mathsf{det-pack}}$ denote deterministic packet programs:\footnote{
  Equivalently, we can define $\mathcal{T}_{\mathsf{packet}}$ by adding a predicate $H$ to the signature of our algebra that counts the number of $*$'s and $+$'s a term contains, and a packet program $p$ is an element of $\mathcal{T}_{\mathsf{det-pack}}$ if and only if $p\in\mathcal{T}_{\mathsf{packet}}$ and $H(p)=0$.
}:
\[
    x, y::= t \in \BB\cup \left\{\modify f n \mid f\in \Field, n\in \Val\right\} \pipe x \mathbin{;}  y \pipe x \parallel y
\]
\end{definition}
In this paper we mostly use state programs over alphabet $\OO\cup\Act\cup 2^{\Pk}\cup\{\dup\}$. Whenever we intend to use this alphabet, we simply write $\mathcal{T}_{\mathsf{state}}$.

We prove the following lemmas regarding the \cnetkat semantics.
\begin{theoremEnd}[default, category=syntax]{lemma}[State and packet program semantics]\label{lemma:statepacketterms}
Let $p\in \mathcal{T}_{\mathsf{packet}}$, $s\in \mathcal{T}_{\mathsf{state}}$
and $a\in 2^{\Pk}$. For all $w\in \sem p (a)$, $w$ is of the form
$\onepom\cdot b$ for $b\in 2^{\Pk}$. For all $w \in \sem s (a)$, $w$ is
of the form $\vpom \cdot a$ for $\vpom$ a pomset over $\State\cup\Act\cup 2^{\Pk}$.
\end{theoremEnd}
\begin{proofEnd}
  Both statements are proven with an easy induction on $p$ and $s$.
\end{proofEnd}

For non-empty sets of packets $a$ and $a'$, the global behavior of a state program without $\dup$ is identical on both inputs.
Let $2^{\Pk}_\nempty$ denote $2^{\Pk}\setminus{\{\emptyset\}}$.
\begin{theoremEnd}[default, category=syntax]{lemma}\label{lemma:stateterms}
Let $s\in \mathcal{T}_{\mathsf{state}}(\OO\cup\Act\cup 2^{\Pk})$. For all $a,a'\in 2^{\Pk}_\nempty$ we have   \linebreak[4]
$\big\{\upom \mid \upom\cdot b\in \sem s (a)\big\}=\big\{\upom \mid \upom \cdot b\in \sem s (a')\big\}$.
\end{theoremEnd}
\begin{proofEnd}
  The result is proven with an easy induction on $s$.
\end{proofEnd}

We characterize $\bsem{-}$ in terms of its behavior on subsets of the input set.
\begin{theoremEnd}[default, category=syntax]{lemma}\label{lemma:booleansetsofpackets}
Let $t\in\BB$ and $a,b\subseteq \Pk$. Then $\bsem t(a\cup b)=\bsem t(a)\cup \bsem t(b)$.
\end{theoremEnd}
\begin{proofEnd}
We prove this by induction on $t$. The cases where $t=\pfalse$ or $t=\ptrue$ are trivial. For $t=\match f n$, we have
that $\bsem t (a\cup b)=(a\cup b)(\match f n)=\{\pi\in (a\cup b) \mid \pi(f)=n\}=\{\pi\in a \mid
\pi(f)=n\}\cup\{\pi\in b \mid \pi(f)=n\}=\bsem t(a)\cup \bsem t(b)$.

For the negation we derive
\begin{align*}
\bsem {\neg t}(a\cup b)&=(a\cup b)\setminus {\bsem t(a\cup b)} \stackrel {\textrm{IH}}= (a\cup b)\setminus{(\bsem t (a) \cup \bsem t (b))} \\
&= a\setminus {\bsem t(a)}\cup a\setminus{\bsem t (b)} = \bsem {\neg t} (a)\cup \bsem {\neg t}(b)
\end{align*}
Note that if we have $\pk\in a$ such that $\pk\notin \bsem t (a)$, then also $\pk \notin \bsem t (b)$ for any set of packets $b$, because if $\pk\notin \bsem t (a)$ while $\pk\in a$, we know that $\pk$ does not satisfy predicate $t$.

\begin{align*}
\bsem {t\vee_{\BB}t'}(a\cup b)&= \bsem t (a\cup b) \cup \bsem t' (a\cup b)  \\
&\stackrel {\textrm{IH}}=( \bsem t (a) \cup \bsem t(b))\cup (\bsem {t'} (a)\cup \bsem {t'}(b)) \\
&= (\bsem t (a) \cup \bsem {t'}(a))\cup (\bsem t(b) \cup  \bsem {t'}(b)) \\
&= \bsem {t\vee_{\BB}t'} (a)\cup \bsem {t\vee_{\BB}t'} (b)
\end{align*}

\begin{align*}
\bsem {t\wedge_{\BB}t'}(a\cup b)&= \bsem t (a\cup b) \cap \bsem t' (a\cup b) \\
&\stackrel {\textrm{IH}}=( \bsem t (a) \cup \bsem t(b))\cap (\bsem {t'} (a)\cup \bsem {t'}(b)) \\
&= (\bsem t (a) \cap \bsem {t'}(a))\cup (\bsem t(b) \cap  \bsem {t'}(b)) \\
&= \bsem {t\wedge_{\BB}t'} (a)\cup \bsem {t\wedge_{\BB}t'} (b)
\end{align*}
In the middle step we use that if a packet $\pk\in \bsem t (a) \cup \bsem t(b)$ and $\pk\in \bsem {t'} (a)\cup \bsem
{t'}(b)$, then we know that $\pk$ satisfies predicate $t$ and $t'$, and w.l.o.g. $\pk$ is an element of $a$. Then also
$\pk \in \bsem t (a) \cap \bsem {t'}(a)$. Similarly, if $\pk \in \bsem t (a) \cap \bsem {t'}(a)$ or $\pk \in \bsem t (b) \cap \bsem {t'}(b)$ we can w.l.o.g.\ assume the former. We then know that $\pk$ satisfies both $t$ and $t'$ and
hence is an element of $\bsem t (a) \cup \bsem t(b)$ and of $\bsem {t'} (a) \cup \bsem {t'}(b)$.
\end{proofEnd}

Lastly, we have a lemma characterising the semantics of a deterministic packet program in terms of its behavior on subsets of the input.
\begin{theoremEnd}[default, category=syntax]{lemma}\label{lemma:splitsets}
Let $x\in\mathcal{T}_{\mathsf{det-pack}}$ and $a,b\subseteq \Pk$. Then \\
$\sem x (a\cup b) = \big\{\onepom\cdot (c\cup d) \mid \sem x (a)=\{\onepom \cdot c\},  \sem x (b)=\{\onepom \cdot d\}\big\}
$.
\end{theoremEnd}
\begin{proofEnd}
We prove this by induction on $x$. For $x=t\in \BB$ we use \cref{lemma:booleansetsofpackets}.
\begin{align*}
  \sem t (a\cup b) &= \{\onepom\cdot \bsem t (a\cup b)\} = \{\onepom \cdot \bsem t (a)\cup \bsem t (b)\} \tag{\cref{lemma:booleansetsofpackets}} \\
  &=\{\onepom \cdot (c\cup d)\mid c=\bsem t (a), d= \bsem t (b)\} \\
  &=\{\onepom \cdot (c\cup d)\mid \sem t(a)=\{\onepom\cdot c\}, \sem t(b)=\{\onepom\cdot d\}\}
\end{align*}
Let $x= \modify f n$.
\begin{align*}
  \sem {\modify f n} (a\cup b) &= \{\onepom\cdot  (a\cup b)(\modify f n)\} \\
  &= \{\onepom\cdot  \{\pk[n/f]\mid \pk\in (a\cup b)\}\} \\
  &=\{\onepom\cdot  (\{\pk[n/f]\mid \pk\in a\}\cup \{\pk[n/f]\mid \pk\in b\})\}\\
  &=\{\onepom\cdot (c\cup d)\mid c=\{\pk[n/f]\mid \pk\in a\}, d= \{\pk[n/f]\mid \pk\in b\}\}\\
  &=\{\onepom\cdot (c\cup d)\mid \sem{\modify f n}(a)=\{\onepom\cdot c\}, \sem{\modify f n}(b)=\{\onepom\cdot d\}\}
\end{align*}
For $ x \parallel y$ we derive
\begin{align*}
\sem {x\parallel y}(a\cup b)&=\{1\cdot (c\cup d)\mid 1\cdot c\in \sem x(a\cup b), 1\cdot d \in \sem y(a\cup b)\} \\
&\stackrel {\textrm{IH}}= \{1\cdot (c\cup d)\mid \\
& 1\cdot c\in
\{1\cdot (e\cup f)\mid \sem x(a)=\{1\cdot e\}, \sem x(b)=\{1\cdot f\}\}, \\
& 1\cdot d \in \{1\cdot (g\cup h)\mid \sem y(a)=\{1\cdot g\},\sem y(b)=\{1\cdot h\}\}\} \\
&=\{1\cdot (e\cup f\cup g\cup h) \mid \sem x(a)=\{1\cdot e\}, \sem x(b)=\{1\cdot f\},\\
& \sem y(a)=\{1\cdot g\},\sem y(b)=\{1\cdot h\}\} \\
&=\{1\cdot (c\cup d)\mid \sem {x\parallel y}(a)=\{1\cdot c\},
\sem {x\parallel y}(b)=\{1\cdot d\}\}
\end{align*}
where we use that $\sem {x\parallel y}(a)=\{1\cdot (e\cup g)\mid 1 \cdot e\in \sem x(a), 1\cdot g\in \sem y(a) \}$ and
$\sem {x\parallel y}(b)=\{1\cdot (f\cup h)\mid 1 \cdot f\in \sem x(b), 1\cdot h\in \sem y(b) \}$.

For $ \pseq x y$ we derive
\begin{align*}
&\sem {\pseq x y}(a\cup b)=\{1\cdot c\mid 1\cdot a'\in \sem x(a\cup b), 1\cdot c \in \sem y (a')\} \\
&\stackrel {\textrm{IH}}= \{1\cdot c\mid 1\cdot a'\in \{1\cdot (d\cup e)\mid \sem x(a)=\{1\cdot d\}, \sem x(b)=\{1\cdot e\}\}, 1\cdot c \in \sem y (a')\}\\
&=\{1\cdot c\mid  \sem x(a)=\{1\cdot d\}, \sem x(b)=\{1\cdot e\}, 1\cdot c \in \sem y (d\cup e)\} \\
&\stackrel {\textrm{IH}}= \{1\cdot c\mid  \sem x(a)=\{1\cdot d\},
\sem x(b)=\{1\cdot e\},\\
& 1\cdot c \in \{1\cdot (f\cup g)\mid \sem y(d)=\{1\cdot f\}, \sem y(e)=\{1\cdot g\}\}\} \\
&= \{1\cdot (f\cup g)\mid  \sem x(a)=\{1\cdot d\}, \sem x(b)=\{1\cdot e\}, \\
& \hspace{24mm} \sem y(d)=\{1\cdot f\}, \sem y(e)=\{1\cdot g\}\} \\
&= \{1\cdot (f\cup g)\mid  \sem {\pseq x y}(a)=\{1\cdot f\},
\sem {\pseq x y}(b)=\{1\cdot g\}\} \tag*{\qedhere}
\end{align*}
\end{proofEnd}

\subsection{Is {\normalfont\cnetkat} conservative over {\normalfont\netkat} and {\normalfont\textsf{POCKA}}?}\label{sec:relation}
\cnetkat combines \netkat and \POCKA, so it is natural to ask whether it is a conservative extension of either language. It turns out that the answer is positive for \POCKA, and for a fragment of \netkat. We start by recalling the semantics of \netkat\cite{netkat}. Note that \netkat expressions are packet programs without $\parallel$.
\begin{definition}[{\normalfont\netkat} semantics]
Let $\pk\in\Pk$, $t\in\BB$ and $p,q$ {\normalfont\netkat} terms.
\begin{mathpar}
\sem{t}_{\mathsf{NK}}(\pk)= \bsem t (\{\pk\}) \and
\sem{\ptrue}_{\mathsf{NK}}(\pk) = \{\pk\} \and
\sem {\pfalse}_{\mathsf{NK}}(\pk) = \{\} \and
\sem {\modify f n}_{\mathsf{NK}}(\pk) = \{\pk[n/f]\} \and
\sem{p\mathbin{;}q}_{\mathsf{NK}}(\pk)
    = \bigcup\nolimits_{\pk'\in \sem p(\pk)}\sem q_{\mathsf{NK}}(\pk') \and
\sem{p^*}_{\mathsf{NK}}(\pk)=\bigcup\nolimits_n \sem {p^n}_{\mathsf{NK}}(\pk) \and
\sem{p+q}_{\mathsf{NK}}(\pk) = \sem p_{\mathsf{NK}}(\pk)\cup\sem q_{\mathsf{NK}}(\pk)
\end{mathpar}
\end{definition}

\begin{theoremEnd}[default,category=relation]{theorem}
Take $\pi\in\Pk$ and \netkat term $p$.
$\sem p_{\mathsf{NK}}(\pi)=\bigcup_{\onepom\cdot a' \in \sem p (\{\pi\})} a'
$.
\end{theoremEnd}
\begin{proofEnd}
  We prove this by induction on the structure of $p$.
The case for predicates is straightforward as $\bsem{t} (\{\pk\}) =  \sem t_{\mathsf{NK}}(\pk)$.
We derive
\begin{align*}
\sem t_{\mathsf{NK}} (\pi)&= \bsem{t} (\{\pi\}) = \bigcup \{a'\mid \onepom \cdot a' \in \{\onepom\cdot
\bsem{t}(\{\pi\})\}\} \\
&= \bigcup \{a'\mid \onepom \cdot a' \in \sem t (\{\pi\})\}
\end{align*}

For the packet change we also get the result immediately as $\sem {\modify f n}_{\mathsf{NK}}(\pi)= \{\pi[n/f]\}$ and $\sem {\modify f n}(\{\pi\})=\{\onepom\cdot\{\pi[n/f]\}\}$.
For the inductive cases assume that $\sem p_{\mathsf{NK}}(\pi)=\bigcup \{a' \mid \onepom\cdot a' \in \sem p (a)\}$ and
$\sem q_{\mathsf{NK}}(\pi)=\bigcup \{a' \mid \onepom\cdot a' \in \sem q (a)\}$ (IH).

For $p+q$ we calculate:
\begin{align*}
\sem {p+q}_{\mathsf{NK}}(\pi) &=  \sem {p}_{\mathsf{NK}}(\pi) \cup \sem {q}_{\mathsf{NK}}(\pi) \\ \stackrel {\textrm{IH}}=& \bigcup \{a' \mid \onepom\cdot a' \in \sem p (a)\} \cup \bigcup \{a' \mid \onepom\cdot a' \in \sem q (a)\} \\
  &=  \bigcup \{a' \mid \onepom\cdot a' \in \sem p (a)\cup \sem q (a)\} \\
  &= \bigcup \{a' \mid \onepom\cdot a' \in \sem {\punion p q} (a)\}
\end{align*}

  For $\pseq p q$, we first note that for packet programs without parallel, the semantics of $p$ on a single input packet, consists of pairs $\onepom\cdot b$ such that $b=\{x\}$ for $x\in\Pk$ or $b=\emptyset$.
We derive
  \begin{align*}
  \sem {\pseq p q}_{\mathsf{NK}}(\pi)&=
  \{ b\mid b'\in \sem p_{\mathsf{NK}}(\pi), b\in \sem q_{\mathsf{NK}}(b') \} \\
    &\stackrel {\textrm{IH}}=\{ b\mid b'\in \bigcup \{a' \mid \onepom\cdot a' \in \sem p (\{\pi\})\},  b \in \bigcup \{c' \mid \onepom\cdot c' \in \sem q (\{b'\})\}\}\\
    &= \{b \mid \onepom \cdot \{b'\}\in \sem p (\{\pk\}), \onepom\cdot \{b\}\in \sem q (\{b'\})\} \tag{$a'=\{x\}$ for some $x\in \Pk$}\\ 
    &= \bigcup \{\{b\}\mid \onepom \cdot \{b'\}\in \sem p (\{\pk\}), \onepom\cdot \{b\}\in \sem q (\{b'\})\} \\
    &= \bigcup \{a' \mid \onepom\cdot c\in \sem p (a), \onepom\cdot a'\in \sem q (c)\}\\
    &= \bigcup\{a' \mid \onepom\cdot a' \in \{\onepom \cdot a' \mid \onepom \cdot c\in \sem p (a), \onepom \cdot a'\in \sem q (c)\}\}\\
    &= \bigcup \{ a' \mid \onepom\cdot a'\in \sem {\pseq p q}(a)\}
\end{align*}

  For $p^*$ we first prove by induction on $n$ that if $\sem p_{\mathsf{NK}}(\pi)=\bigcup \{a' \mid \onepom\cdot a' \in \sem p (a)\}$, then for all $n$ we have $\sem {p^{(n)}}_{\mathsf{NK}}(\pi)=\bigcup \{a'
  \mid \onepom\cdot a' \in \sem p^{(n)} (a)\}$. The base case is trival as $\sem {p^0}_{\mathsf{NK}}(\pi)= \{\pi\}$, and $\sem{p^{(0)}}(a)= \sem {\skp} (a)=\{ 1\cdot \{\pi\}\}$.
  For the inductive step, the induction hypothesis is
  $\sem {p^{(n)}}_{\mathsf{NK}}(\pi)=\bigcup \{a'
  \mid \onepom\cdot a' \in \sem{ p^{(n)}} (a)\}$.
  Then we derive

  \begin{align*}
   \sem {p^{(n+1)}}_{\mathsf{NK}}(\pi) &= \sem{ \pseq p p^{(n)}}_{\mathsf{NK}}(\pi) \\
    &= \bigcup\{a'\mid \onepom\cdot a'\in \sem {\pseq p p^{(n)}} (a)\} \tag{IH \text{ and the case for }$\pseq p q$} 
  \end{align*}

  Then we can derive:
  \begin{align*}
    \sem {p^*}_{\mathsf{NK}}(\pi) &= \displaystyle\bigsqcup_{n \in \N} \lang{p^{(n)}}_{\mathsf{NK}}(\pi)\\
    &= \displaystyle\bigsqcup_{n \in \N} \bigcup \{
   a'\mid \onepom \cdot a' \in \sem {p^{(n)}} (a)\} \tag{result above} \\
    &= \bigcup \{a' \mid \onepom\cdot a'\in \displaystyle\bigsqcup_{n \in \N} \sem {p^{(n)}}(a) \} \\
    &= \bigcup \{a' \mid \onepom\cdot a'\in  \sem {p^*}(a) \}\tag*{\qedhere}
  \end{align*}
\end{proofEnd}
We can derive a further relation between the semantics if we assume there is no use of $+$ and $*$ (the proof uses \cref{lemma:booleansetsofpackets}).
\begin{theoremEnd}[default,category=relation]{lemma}%
  \label{lemma:seqpacket}
  Let $p$ be built out of packet predicates and modifications ($\modify f n$), and their sequential composition. Then
  $\sem p (a) = \left\{\onepom \cdot \bigcup_{x\in a} \sem p_{\mathsf{NK}} (x)\right\}
$.
\end{theoremEnd}
\begin{proofEnd}
  If $a=\emptyset$, the result follows immediately. If $a\neq\emptyset$, we prove this by induction on $p$. For $p=t\in\BB$ we derive
    \begin{align*}
      \sem t (a) &= \{\onepom\cdot \bsem t (a)\} \\
      &= \{\onepom \cdot \bigcup_{x\in a} \bsem t (\{x\})\}\tag{\cref{lemma:booleansetsofpackets}} \\
      &= \{\onepom \cdot \bigcup_{x\in a} \sem{t}_{\mathsf{NK}}(x)\}\tag{Definition \netkat}
    \end{align*}

  For $p=\modify f n$ we derive
  \begin{align*}
    \sem {\modify f n} (a)&= \{\onepom\cdot a(\modify f n)\}\\
    &=\{\onepom\cdot \{x[n/f] \mid x\in a\}\}\\
    &=\{\onepom \cdot \bigcup_{x\in a}\{x[n/f]\}\}\\
    &=\{\onepom \cdot \bigcup_{x\in a}\sem{\modify f n}_{\mathsf{NK}}(x)\}
  \end{align*}

  In the inductive step we derive
  \begin{align*}
  \sem {p\mathbin{;} q} (a) &=\{(\upom\cdot\vpom )\cdot b \mid \upom \cdot a' \in \sem p (a), \vpom\cdot b \in \sem q (a')\}\\
  &\stackrel {\textrm{IH}}= \{\onepom\cdot b \mid \onepom\cdot a' \in \sem p (a)\\
  &= \{\onepom \cdot \bigcup_{x\in a} \sem
  p_{\mathsf{NK}} (x)\}, \onepom \cdot b \in \sem q (a')= \{\onepom \cdot \bigcup_{x\in a'} \sem q_{\mathsf{NK}} (x)\}\}\tag{\cref{lemma:statepacketterms}}\\
  &=\{\onepom \cdot \bigcup_{x\in a'} \sem q_{\mathsf{NK}} (x) \mid a'= \bigcup_{y\in a} \sem
  p_{\mathsf{NK}} (y)\}\\
  &=\{\onepom\cdot \bigcup \{\sem q_{\mathsf{NK}} (x) \mid x\in\bigcup_{y\in a} \sem
  p_{\mathsf{NK}} (y)  \}\} \\
  &=\{\onepom\cdot \bigcup_{y\in a} \bigcup\{\sem
  q_{\mathsf{NK}}(x)\mid x\in \sem p_{\mathsf{NK}}(y)\}\} \\
  &=\{\onepom\cdot \bigcup_{y\in a} \sem{p\mathbin{;}q}_{\mathsf{NK}}(y)\}  \tag*{\qedhere}
  \end{align*}
\end{proofEnd}

It is worth remarking that the equational theories of \netkat and \cnetkat are not equivalent: there are equivalent programs in \netkat, that cannot be proved equivalent with the \cnetkat axioms, as the following example illustrates.
Consider the program $p+\pfalse$ for $p$ a packet program without parallel. In \netkat, because the $+$ is interpreted as multicast, this program is provably equivalent to $p$: executing $p$ on your input packet while at the same time also dropping a copy of the input, has the same outcome as just executing $p$.
In \cnetkat, however, this is not the case. Instead, the $+$-operator is interpreted as non-deterministic choice and in the semantics of $p+\pfalse$ we get the trace
$\onepom\cdot\emptyset$, representing the choice of dropping all the packets, which is not present
in the semantics of $p$. Hence, this axiom is unsound ($p+\pfalse
\not\equiv p$), and instead the alternative axiom $p\parallel\drp = p$ holds, reflecting the fact that $\parallel$ is multicast.

We now show \cnetkat semantics is equivalent to the \POCKA semantics on state programs. In~\cite{pocka}, \POCKA terms
are what we defined as state programs over the alphabet $\OO\cup\Act$, and they are interpreted in terms of pomset languages over assignments and states, encoded as partial functions, similarly to separation logic~\cite{reynolds}.
The \POCKA semantics are defined in two steps: the first step results in a set containing all pomsets that can be derived directly from the terms, and in a second step this set is closed under two laws---$\hexch$ and $\hcontr$---that account for all traces that can be built in parallel threads (including simple interleaving).

\begin{definition}[\POCKA semantics]\label{def:pockasem}
Let $o\in \OO$, $e\in\Act$, $p,q \in \mathcal{T}_{\mathsf{state}}(\OO \cup \Act)$.
\[
\begin{array}{rcl@{\quad}rcl@{\quad}rcl@{\quad}rcl}
 \llparenthesis o
\rrparenthesis
&=&\State^*\odot\osem {o}\odot\State^*
&\llparenthesis p\mathbin{;} q
\rrparenthesis&=&\llparenthesis p
\rrparenthesis \odot \llparenthesis q
\rrparenthesis
& \llparenthesis \skp \rrparenthesis &=& \{\onepom\} &\llparenthesis \abort\rrparenthesis &=&\emptyset \\
\llparenthesis e
\rrparenthesis&=&\State^*\odot\{e\}\odot\State^*
&\llparenthesis p\parallel q
\rrparenthesis&=&\llparenthesis p
\rrparenthesis \parallel \llparenthesis q
\rrparenthesis
 &
 \llparenthesis p^*
 \rrparenthesis&=&\llparenthesis p
 \rrparenthesis^*
 & \llparenthesis p+q
 \rrparenthesis&=&\llparenthesis p
 \rrparenthesis\cup \llparenthesis q
 \rrparenthesis
\end{array}
\]
The semantics of a \POCKA expression $p$ is $\sem
{p}_{\textrm{\POCKA}}=\closure[\hexch\cup\hcontr]{\llparenthesis p
\rrparenthesis}$.
\end{definition}

\begin{theoremEnd}[default,category=relation]{theorem}\label{theorem:pockarel}
\cnetkat is a conservative extension of \POCKA:\@ if $p$ is a \POCKA term ($p\in \mathcal{T}_{\mathsf{state}}(\OO\cup\Act)$) then for $a\neq\emptyset$,
$
\closure[]{\sem p }(a) = \left\{ \upom \cdot a \mid \upom \in \sem p_{\textrm{\POCKA}}\right\}
$.
\end{theoremEnd}
\begin{proofEnd}
We first show by induction on the structure of $p$ that the theorem holds without closures. We show that
\begin{equation}\label{poceq}
\sem p (a) = \{ \upom \cdot a \mid \upom \in \llparenthesis p\rrparenthesis\}
\end{equation}
 The case for observations is straightforward as $\sem{o}(a) =  \State^* \cdot \osem{o} \cdot \State^* \times \{ a \}  $ and
$ \State^* \cdot \osem{o} \cdot \State^*  =  \llparenthesis o\rrparenthesis$. For state actions $e$ we also immediately obtain the result: $\sem{e} (a)$ is
exactly defined as $ \llparenthesis e \rrparenthesis\times
\{ a \} $. The same for the case for $\abort$ and
$\skp$, which are called $0$ and $1$ in \POCKA. For 
the inductive cases assume that $\sem p (a)
= \{\upom \cdot a \mid \upom \in \llparenthesis p\rrparenthesis\}$ and $\sem q (a) = \{ \vpom \cdot a \mid \vpom \in \llparenthesis q\rrparenthesis\}$ (IH).

For $p+q$ we calculate:
\begin{align*}
\sem {p+q} (a) &=   \lang{\polp}(a) \cup \lang{\polq}(a)\stackrel {\textrm{IH}}=  \{\upom \cdot a \mid \upom \in \llparenthesis p\rrparenthesis\} \cup \{ \vpom \cdot a \mid \vpom \in \llparenthesis q\rrparenthesis\}\\
 &=  \{\upom \cdot a \mid \upom \in \llparenthesis p\rrparenthesis\cup\llparenthesis q\rrparenthesis\} = \{\upom \cdot a \mid \upom \in \llparenthesis p+q\rrparenthesis\}
\end{align*}

For $\pseq p q$, we have:
\begin{align*}
\sem {\pseq p q} (a) &=  \{ (\upom\cdot \vpom) \cdot b \mid \upom\cdot  a' \in \lang{\polp}(a), \vpom \cdot b\in \lang{\polq}(a')\}\\
&\stackrel {\textrm{IH}}=  \{ (\upom\cdot \vpom) \cdot a \mid  \upom \in \llparenthesis p\rrparenthesis,  \vpom \in\llparenthesis q\rrparenthesis\}\\
&=  \{ (\upom\cdot \vpom) \cdot a \mid  (\upom\cdot \vpom)  \in \llparenthesis p\mathbin{;}q\rrparenthesis\}\\
\end{align*}

For $p \parallel q$, we have:
\begin{align*}
\sem { p \parallel q} (a) &=  \{ (\upom\parallel \vpom) \cdot (b_1\cup b_2) \mid \upom\cdot  b_1 \in \lang{\polp}(a), \vpom \cdot b_2\in \lang{\polq}(a)\}\\
&\stackrel {\textrm{IH}}=  \{ (\upom\parallel \vpom) \cdot a  \mid  \upom \in \llparenthesis p\rrparenthesis,  \vpom \in \llparenthesis q\rrparenthesis\}\\
&=  \{ (\upom\parallel \vpom) \cdot a \mid  (\upom\parallel \vpom)  \in \llparenthesis p\parallel q\rrparenthesis \}\\
\end{align*}

For the case with the star, first note that from $\sem p (a)
= \{\upom \cdot a \mid \upom \in \llparenthesis p\rrparenthesis\}$ and the case for sequential composition, we can derive
$\sem {p^{(n)}} (a) = \{\upom\cdot a \mid \upom \in \llparenthesis p^{(n)}\rrparenthesis \}$.
\end{proofEnd}

\subsection{Axiomatization}\label{sec:theory}
We introduce notation to describe packets and sets of packets axiomatically. Let $f_1,\dots,f_k$ be a list of all fields of a packet in some fixed order. Then for each tuple $\overline{n}=n_1,\dots,n_k$ we obtain expressions $f_1 = n_1 \cdots f_k = n_k$ and $f_1\gets n_1 \cdots f_k \gets n_k$, which, similar to \netkat, we call \emph{complete tests} and \emph{complete assignments}. Complete tests are also referred to as atoms, because they
are the atoms of the Boolean algebra generated by the tests. We denote the set of atoms by $\At$, complete tests with $\alpha$ and
complete assignments with $\pi$. There is a one-to-one correspondence between complete tests and assignments according to the values of $\overline{n}$. For $\alpha\in\At$ we denote the corresponding complete assignment by $\pi_{\alpha}$, and if $\pi$ is a complete assignment we denote the corresponding atom by $\alpha_{\pi}$.

There is also a link between sets of packets and terms of the form $\Vertt_{i\in I} \pi_i$.
For each set of packets $a$, we take the set $\{\pi_i \mid i\in I\}$ of complete assignments such that each $\pi_i$ corresponds to a packet of $a$, and combine them in parallel. Formally, for a set of packets $a$ there exists an expression $\Vertt_{i\in I} \pi_i$, that we denote with $\Pi_a$, such that on any input $b\neq\emptyset$, $\sem{\Pi_a}(b)=\{\onepom \cdot a\}$.
Similarly, the semantics of an expression of the form $\Vertt_{i\in I} \pi_i$ on any input is always $\{\onepom \cdot a\}$ for some $a\in 2^{\Pk}$.
 We use the notation $\Pi_a$ as a syntactic representation of set of packets $a$.

\cnetkat has the structure of a Kleene algebra on state programs,
enriched with additional axioms. Tests form a Boolean algebra and state observations a pseudocomplemented distributive lattice (PCDL). The test and observation structures are subject to interaction constraints. The packet processing behavior is captured by the packet axioms, which contain axioms for individual packets and sets of packets. The axioms governing the parallel operator are partially familiar from earlier work on \BKA~\cite{hoare-moeller-struth-wehrman-2009,laurence-struth-2014}. There is also the exchange law familiar from \CKA. Lastly, we have axioms for the interactions between state programs and packet programs. The full set of axioms is described in \Cref{fig:axioms}. We write $\equiv$ for the smallest congruence on 
$\programs$ generated by the axioms in~\cref{fig:axioms}.

\begin{remark}[When is $\Pi_a$ equal to $\pfalse$?]\label{remark:piadrop}
  $\Pi_a\equiv \pfalse$  if and only if $a$ is empty.   $\Pi_\emptyset = \Vert_{i\in \emptyset} \pi_i\equiv \Vert \emptyset \equiv \bigvee \emptyset \equiv\pfalse$. For all other $a$, we have $\Pi_a\not\equiv \pfalse$.
\end{remark}

\begin{figure*}[!ht]\small\centering\noindent%
  \fbox{\noindent%
    \begin{minipage}[t]{.479\textwidth}
      \noindent%
      \textbf{Kleene Algebra axioms}
      \hfill $s\in\mathcal{T}_{\mathsf{state}}$
              \vspace{-.8em}
      \[
      \begin{array}{rcl}
        p + (q + r) & \equivbka & (p + q) + r \\
        p + q  & \equivbka & q + p   \\
        p + \abort                             & \equivbka & p\\
        p + p                       &       \equivbka & p\\
        \pseq p  {(\pseq q r)}               & \equivbka&  \pseq {(\pseq p q)} r     \\
                \pseq s  \abort                             & \equivbka & \abort  \\
               \pseq  \abort p     &\equivbka& \abort \\
       \pseq p \skp                         & \equivbka & p \equivbka \pseq  \skp  p
        \\
       \pseq  p {(q+r)} & \equivbka & \pseq p q+ \pseq p r \\
        \pseq {(p + q)} r & \equivbka & \pseq p  r + \pseq q r\\
        p^* &\equiv& \skp + pp^* \\
        \pseq {p + q}  r \leqqbka q & \Rightarrow & p \cdot r^* \leqqbka q\\
        p^* &\equiv& \skp + p^*p \\
        p + q \cdot r \leqqbka r  &\Rightarrow & q^* \cdot p \leqqbka r
      \end{array}\]
         \vspace{-.7em}
     \hrule
     \vspace{.5em}
         \noindent%
      \textbf{Packet axioms}
            \hfill $x\in\mathcal{T}_{\mathsf{det-pack}}$
                  \vspace{-.8em}
                  \[
     \begin{array}{rcl}
       \pseq {\match f n} {\modify {f'} m}& \equiv &  \pseq {\modify {f'} m} {\match {f} n}  \quad (f\neq f')  \\
       \pseq {\modify f n} {\modify {f'} m}& \equiv  & \pseq {\modify {f'} m} {\modify {f} n}  \quad (f\neq f')  \\
       \pseq {\match f n} {\modify f n} &\equiv &\match f n  \\
       \pseq  {\modify f n} {\match f n} &\equiv &{\modify f n}\\
              \pseq  {\modify f m} {\modify f n} &\equiv &{\modify f n}\\
          x \parallel x &\equiv &x  \\
          \pseq  x {(p\parallel q)} & \equiv& (\pseq  x p)\parallel (\pseq  x q)  \\
          \pseq {(p\parallel q)} x & \equiv& (\pseq p x)\parallel (\pseq q x)
        \end{array}\]
        \vspace{-.8em}
  \hrule
  \vspace{.5em}
  \noindent%
  \textbf{Local vs global state} $y,z\in\mathcal{T}_{\mathsf{packet}}$,  $s,v\in\mathcal{T}_{\mathsf{state}}$,
   $w\in\mathcal{T}_{\mathsf{state}} (\OO\cup\Act\cup 2^{\Pk})$
         \vspace{-.8em}
         \[
  \begin{array}{rcl@{\quad}l}
    \pseq {\Pi_a} {\dup} &\equiv & \pseq {\Pi_a} {a} &(a\in 2^{\Pk})\\
   \pseq {\Pi_a} {w} & \equiv & \pseq {w} {\Pi_a} &(a\in 2^{\Pk}_\nempty )\\
   \pseq {\pfalse} {p} &\equiv & \pfalse &
   \pseq y \pfalse  \equiv \pfalse  \\
    s\parallel \skp &\equiv &s \\
    (\pseq s y) \parallel (\pseq v z) & \equiv &\multicolumn{2}{l}{\pseq {(s\parallel v)} {(y\parallel z)}}
  \end{array} \]
  \vspace{-.3em}
  \hrule
  \vspace{.5em}
  \noindent
  \textbf{Extensionality}\\
  $\forall a\in 2^{\Pk}.  (\Pi_a \mathbin{;} p \equiv \Pi_a \mathbin{;}q) \Rightarrow p\equiv q$
    \end{minipage}\hspace{.005\linewidth}\vrule\hspace{.005\linewidth}%
    \begin{minipage}[t]{.482\linewidth}
      \noindent%
      \textbf{Parallel axioms}
        \vspace{-.8em}
        \[
      \begin{array}{rcl}
        p \parallel (q \parallel r) & \equivbka &  (p \parallel q) \parallel r \\
        p \parallel \abort & \equivbka&  \abort \\
        \drp \parallel p & \equiv& p \\
        p \parallel (q + r) & \equivbka & p \parallel q + p \parallel r \\
       p \parallel q & \equiv & q \parallel p
     \end{array}
     \]
       \vspace{-.8em}
      \hrule
      \vspace{.3em}
      \noindent%
      \textbf{Exchange law}
      \hfill $s,s',v,v'\in \mathcal{T}_{\mathsf{state}}$
        \vspace{-1em}
     \begin{align*}
       \pseq {(s \parallel s')} { (v \parallel v')} &\leqqpocka (\pseq s v) \parallel (\pseq {s'}  {v'})
     \end{align*}
     \vspace{-1.4em}
    \hrule
          \vspace{.5em}
      \noindent%
      \textbf{Packet pred., state obs.\ axioms}\\
     \hfill $\boldsymbol{\vee}\in\{\vee,\vee_{\BB}\}$, $\boldsymbol{\wedge}\in\{\wedge,\wedge_{\BB}\}$, $a,b,c\in \BB\cup\OO$
           \vspace{-.8em}
           \[
      \begin{array}{rcl}
        a \boldsymbol{\wedge} b & \equivpl & b \boldsymbol{\wedge} a \\
        a \boldsymbol{\wedge} (b \boldsymbol{\wedge} c) & \equivpl & (a \boldsymbol{\wedge} b) \boldsymbol{\wedge} c \\
        a \boldsymbol{\vee} (a \boldsymbol{\wedge} b) & \equivpl & a \equivpl
                             a \boldsymbol{\wedge} (a \boldsymbol{\vee} b) \\
        a \boldsymbol{\vee} (b\boldsymbol{\wedge} c) & \equivpl&  (a \boldsymbol{\vee} b) \boldsymbol{\wedge} (a \boldsymbol{\vee} c) \\
        a \boldsymbol{\wedge} (b \boldsymbol{\vee} c) & \equivpl  &(a\boldsymbol{\wedge} b) \boldsymbol{\vee} (a \boldsymbol{\wedge} c)
      \end{array}
      \]
      \vspace{-.4em}
      \hrule
      \vspace{.5em}
  \noindent%
  \textbf{Additional state obs.\ axioms}
  \vspace{-.8em}
  \[
  \begin{array}{rcl@{\ \ }l}
      o &\equivpl&
                   o \wedge \top \\ 
    o\leqqpl \overline{o'} &\Leftrightarrow &o\wedge o'\equivpl \bot\\
    v = n &\wedge & v = m \equivpl  \bot  & (n\neq m )\\ 
    \overline{v=n} & \leqqpl & \bigvee\limits_{n\neq m}v=m \\
    \overline{\bigwedge_{i} v_i=n_i} &\leqqpl &\bigvee_{i} \overline{v_i=n_i}  &(i\neq j. v_i \neq v_j) 
  \end{array}
  \]
    \vspace{-.6em}
  \hrule
  \vspace{0.5em}
  \noindent%
  \textbf{Additional packet pred.\ axioms}
    \vspace{-.8em}
    \[
  \begin{array}{rcl}
    t\vee_{\BB}\ptrue &\equivpl & \ptrue \equivpl
                 t \vee_{\BB} \neg t \\
     t \land_{\BB} {\neg t} &\equivpl& \drp  \\ 
              \match f n \wedge_{\BB} \match f m & \equiv &\drp \quad (n\neq m)  \\
                  \bigvee_i \match f i &\equiv &\ptrue
  \end{array}
  \]
  \vspace{-.8em}
  \hrule
  \vspace{.5em}
    \noindent%
    \textbf{Interface axioms}
          \vspace{-.8em}
    \[
    \begin{array}{rlrlll}
      o \wedge o' & \leqqpocka \pseq o  {o'} &
      o \vee o'  &\equivpocka  o + o' & {(o,o'\in\termspl)} \\
     \abort &\equivpocka \bot&  \skp &\equiv \ptrue &{(e\in\Act)} \\
      \pseq \top  o & \leqqpocka   o &\pseq o  \top &\leqqpocka  o &{(t,t'\in\BB)} \\
      \pseq \top  e & \leqqpocka   e &\pseq e  \top &\leqqpocka e\\
               t\land_{\BB} t' &\equiv \pseq t t' & t\vee_{\BB}t' &\equiv t\parallel t'
    \end{array}
    \]
    \end{minipage}
  }
  \caption{%
    Axioms of \cnetkat.
    The left column contains the \KA axioms, the packet axioms, the axioms for the interaction between the local and global state, and an extensionality axiom. The right column axiomatizes the $\parallel$,
   the algebra of packet tests (which is a Boolean algebra), and the algebra of partial state observations (which is a PCDL). The interface axioms connect both the lattice operators to the Kleene algebra ones.
 We write $ e\leqqpl f $ as a shorthand for $e+f \equivpl f$.
  }%
  \label{fig:axioms}
\end{figure*}

There are a few subtleties to notice in \cref{fig:axioms}. First, we point out the interaction between $\pfalse$ and $\abort$. When no packets are present, not even $\abort$ can be executed. Hence, if we drop all packets and then $\abort$, the abort does not happen:
$
\pfalse \mathbin{;}\abort \equiv \pfalse
$.
On the other hand, if we first $\abort$ and then drop all the packets, the behavior is equal to just aborting:
$
\abort \mathbin{;}\pfalse \equiv \abort
$.

In the axioms of the parallel operator, the axiom $s\parallel \skp\equiv \skp$ from \BKA is missing; it only holds when $s$ is a state program, and can be found in the local state vs global state axioms. In addition to the familiar \BKA axioms, there is
the axiom $\pfalse\parallel p \equiv p$, in contrast with $\abort \parallel p \equiv \abort$.

The local state vs global state axioms capture the interactions between the global pomset and the output packets. The first one, $\Pi_a\mathbin{;}\dup \equiv \Pi_a\mathbin{;}a$, captures the intuition that if we know the input is $a$ (due to $\Pi_a$, which, as a parallel of complete assignments, essentially overwrites any non-empty input set to $a$), then we know the $\dup$ is recording an ``$a$''. The second axiom, $\Pi_a\mathbin{;}w \equiv w\mathbin{;}\Pi_a$ states that
for dup-free state program $w$, we can flip the order between changing the set of output packets or performing the state
changes in $w$, as long as $\Pi_a$ is not the parallel representing the empty set. This latter condition is crucial: if
$a=\emptyset$, then $\Pi_a\equiv\pfalse$, and $\pfalse \mathbin{;} w \equiv \pfalse$ (the global state changes in $w$ do not get executed if we have no packets).

The axiom $\pfalse \mathbin{;} p \equiv \pfalse$ for any program $p$ captures the intuition that if there are no packets, nothing happens anymore. The other way around, $y \mathbin{;}\pfalse\equiv \pfalse$ is only true for $y$ a packet program; if it was a state program, the global state changes get executed if we start with a non-empty set of input packets, making the behavior of $y\mathbin{;}\pfalse$ not equivalent to $\pfalse$.

Lastly, extensionality says that if two programs are equivalent on all inputs (i.e., $a\in 2^\Pk$), then the programs are equivalent. It is not clear whether this axiom is derivable from the others; we hope to settle this question in the future.
%

\begin{textAtEnd}[allend, category=completeness]
\begin{lemma}\label{lemma:funny-derivation}
  Let $a\in 2^{\Pk}_\nempty$ and $b\in 2^{\Pk}$. Then we can derive using the axioms of \cnetkat that
  \[\Pi_a \mathbin{;}\Pi_b \equiv \Pi_b\]
\end{lemma}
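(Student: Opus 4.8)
The plan is to exploit the operational reading of $\Pi_a$: for non-empty $a$ the program $\Pi_a$ overwrites any non-empty input set to exactly $a$, and since $a\in 2^{\Pk}_\nempty$, the subsequent $\Pi_b$ then overwrites $a$ to $b$, so the composite behaves like $\Pi_b$ alone. To turn this into a syntactic derivation, write $\Pi_a=\Vertt_{i\in I}\pi_i$ and $\Pi_b=\Vertt_{j\in J}\sigma_j$, where the $\pi_i,\sigma_j$ are complete assignments and $I\neq\emptyset$ precisely because $a$ is non-empty. Both $\Pi_a$ and $\Pi_b$ are parallels of complete assignments, hence lie in $\mathcal{T}_{\mathsf{det-pack}}\subseteq\mathcal{T}_{\mathsf{packet}}$, which is what licenses the distribution and idempotence axioms below. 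I would first dispatch the degenerate case $b=\emptyset$: here $\Pi_b\equiv\pfalse$ by \cref{remark:piadrop}, and since $\Pi_a\in\mathcal{T}_{\mathsf{packet}}$ the axiom $y\mathbin{;}\pfalse\equiv\pfalse$ yields $\Pi_a\mathbin{;}\Pi_b\equiv\Pi_a\mathbin{;}\pfalse\equiv\pfalse\equiv\Pi_b$.

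The crux of the non-degenerate case is the sublemma that for any two complete assignments $\pi$ and $\sigma$ we have $\pi\mathbin{;}\sigma\equiv\sigma$. Writing $\pi=\modify{f_1}{n_1}\mathbin{;}\cdots\mathbin{;}\modify{f_k}{n_k}$ and $\sigma=\modify{f_1}{m_1}\mathbin{;}\cdots\mathbin{;}\modify{f_k}{m_k}$, I would use the commutation axiom $\modify{f}{n}\mathbin{;}\modify{f'}{m}\equiv\modify{f'}{m}\mathbin{;}\modify{f}{n}$ (for $f\neq f'$) to slide, field by field, each $\modify{f_l}{m_l}$ leftward until it is adjacent to $\modify{f_l}{n_l}$, and then apply the overwrite axiom $\modify{f_l}{n_l}\mathbin{;}\modify{f_l}{m_l}\equiv\modify{f_l}{m_l}$ to delete the $\pi$-contribution for that field. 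Iterating over all $k$ fields erases $\pi$ entirely and leaves exactly $\sigma$.

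For $b\neq\emptyset$ I would then distribute the sequential composition across both parallels. Applying the packet axiom $(p\parallel q)\mathbin{;}x\equiv(p\mathbin{;}x)\parallel(q\mathbin{;}x)$ repeatedly (together with associativity and commutativity of $\parallel$) with $x=\Pi_b\in\mathcal{T}_{\mathsf{det-pack}}$ gives $\Pi_a\mathbin{;}\Pi_b\equiv\Vertt_{i\in I}(\pi_i\mathbin{;}\Pi_b)$. For each fixed $i$, the dual axiom $x\mathbin{;}(p\parallel q)\equiv(x\mathbin{;}p)\parallel(x\mathbin{;}q)$ with $x=\pi_i$ gives $\pi_i\mathbin{;}\Pi_b\equiv\Vertt_{j\in J}(\pi_i\mathbin{;}\sigma_j)$, and the sublemma collapses each factor $\pi_i\mathbin{;}\sigma_j$ to $\sigma_j$, so $\pi_i\mathbin{;}\Pi_b\equiv\Vertt_{j\in J}\sigma_j=\Pi_b$. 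Substituting back yields a parallel of $|I|$ copies of $\Pi_b$, which collapses to a single $\Pi_b$ using idempotence $x\parallel x\equiv x$ (valid since $\Pi_b\in\mathcal{T}_{\mathsf{det-pack}}$) plus associativity and commutativity of $\parallel$; this last collapse is where $I\neq\emptyset$ is used, guaranteeing at least one copy survives.

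The main obstacle is bookkeeping rather than conceptual depth. The distribution axioms are binary, so extending them to the arbitrary finite parallels $\Vertt_{i\in I}$ and collapsing the $|I|$ copies of $\Pi_b$ require an auxiliary induction on $|I|$ via associativity and commutativity of $\parallel$; likewise, the sublemma $\pi\mathbin{;}\sigma\equiv\sigma$ needs a careful induction over the field list to justify the slide-and-overwrite rearrangement. Once these two routine inductions are set up, the essential content is captured by the sublemma and the distributivity of $\mathbin{;}$ over $\parallel$ for deterministic packet programs.
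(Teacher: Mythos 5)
Your proof is correct and takes essentially the same route as the paper's: write both $\Pi_a$ and $\Pi_b$ as parallels of complete assignments, distribute $\mathbin{;}$ over $\parallel$ using the two deterministic-packet distribution axioms, collapse each $\pi_i \mathbin{;} \sigma_j \equiv \sigma_j$, and remove the resulting duplicate copies by idempotence of $\parallel$ (using $I \neq \emptyset$). The only differences are cosmetic: you apply the two distribution axioms in the opposite order, you spell out the sublemma $\pi \mathbin{;} \sigma \equiv \sigma$ from the commutation and overwrite axioms where the paper cites it from the \netkat{} literature, and you treat the case $b = \emptyset$ explicitly.
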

\begin{proof}
  We know there exists finite index sets $I,J$ and complete assignments $\pi_i,\pi_j'$ such that $\Pi_a\equiv\Vertt\limits_{i\in I} \pi_i$ and $\Pi_b\equiv\Vertt\limits_{j\in J} \pi'_j$. Then we can derive using the axioms of \cnetkat that
  \[\Vertt\limits_{i\in I} \pi_i \mathbin{;}\Vertt\limits_{j\in J} \pi'_j \equiv \Vertt\limits_{j\in J} \pi'_j\]
  \begin{align*}
    &\Vertt\limits_{i\in I} \pi_i \mathbin{;}\Vertt\limits_{j\in J} \pi'_j \\
    &\equiv \Vertt\limits_{j\in J}\big(\Vertt\limits_{i\in I} \pi_i \mathbin{;} \pi'_j\big) \tag{$\pseq  x {(p\parallel q)}  \equiv (\pseq  x p)\parallel (\pseq  x q)$ }\\ 
    &\equiv \Vertt\limits_{j\in J}\Vertt\limits_{i\in I} \big(\pi_i \mathbin{;} \pi'_j\big) \tag{$\pseq {(p\parallel q)} x  \equiv (\pseq p x)\parallel (\pseq q x)$ } \\ 
    &\equiv \Vertt\limits_{j\in J}\Vertt\limits_{i\in I}  \pi'_j \tag{$\pi\mathbin{;}\pi'\equiv \pi'$ (shown to follow from packet axioms in~\cite{netkat}) }\\ 
    &\equiv \Vertt\limits_{j\in J} \pi'_j \tag*{\qedhere}
  \end{align*}
\end{proof}
\end{textAtEnd}
\section{Soundness and Completeness}\label{sec:completeness}

In this section we prove soundness and completeness of the \cnetkat
semantics w.r.t.\ the axiomatization from
\Cref{fig:axioms}. For soundness, we prove that if programs $p$ and $q$ are provably equivalent using the axioms, they have the same semantics:

\begin{theoremEnd}[default,category=completeness]{theorem}[Soundness]\label{thm:sound}
For all $p,q\in\programs$, if $p\equiv q$, then $\closure[]{\sem p} = \closure[]{\sem q}$.
\end{theoremEnd}
\begin{proofEnd}
  We proceed by induction on $\equiv$.
In most cases we prove soundness of the axioms before closure, in which case the semantics is also sound after closure. For two axioms the closure is crucial for soundness, and we provide a soundness result only after the closure is computed. We prove that for all $a\in 2^{\Pk}$ we have $\closure[]{\sem p}(a)=\closure[]{\sem q}(a)$. If $a=\emptyset$, the result follows immediately as then $\sem p (a) =\{\onepom\cdot\emptyset\}=\sem q (a)$. If $a\neq\emptyset$, we give direct proofs for all the axioms.
Note that regardless of whether $a$ is $\emptyset$ or not $\sem {p+q}(a)=\sem p (a)\cup\sem q(a)$, $\sem{p\mathbin{;}q}(a)=\{(\upom\cdot\vpom)\cdot b \mid \upom \cdot a'\in \sem p (a),
\vpom\cdot b \in \sem q (a')\}$,
$\sem{p\parallel q}(a)=\{(\upom\parallel\vpom)\cdot (b\cup c )\mid \upom \cdot b\in \sem p (a),
\vpom\cdot c \in \sem q (a)\}$ and $\lang{\pstar\polp}(a) = \displaystyle\bigsqcup_{n \in \N} \lang{p^{(n)}}(a)$.

We start with the axioms of \KA. As $+$ is defined as $\cup$, associativity, commutativity and idempotence of the $+$ follow immediately. The axiom $p+\abort \equiv p$ is also immediately satisfied. For the other axioms we give a direct proof. 

We consider the axiom $\pseq p  {(\pseq q r)}\equivbka  \pseq {(\pseq p q)} r    $.
Let $a$ be a set of packets. We derive
\begin{align*}
\sem {\pseq p  {(\pseq q r)}} (a) &= \{(\upom \cdot \vpom)\cdot b \mid \upom \cdot a'\in \sem p (a), \vpom\cdot b\in \sem {\pseq q r}(a')\} \\
&= \{(\upom \cdot \vpom)\cdot b \mid \upom \cdot a'\in \sem p (a), \\
&\vpom\cdot b\in \{(\upom' \cdot \vpom')\cdot c \mid \upom' \cdot b'\in \sem q (a'), \vpom'\cdot c\in \sem r (b')\}\} \\
&= \{(\upom \cdot (\upom' \cdot \vpom ))\cdot b \mid \upom \cdot a'\in \sem p (a), \upom' \cdot b'\in\sem q (a') , \vpom \cdot b \in \sem r (b')\} \\
&= \{((\upom \cdot \upom') \cdot \vpom )\cdot b \mid \upom \cdot a'\in \sem p (a), \upom' \cdot b'\in\sem q (a') , \vpom \cdot b \in \sem r (b')\} \\
&= \{((\upom \cdot \upom') \cdot \vpom )\cdot b \mid (\upom \cdot \upom') \cdot b'\in \sem {\pseq p q} (a), \vpom \cdot b \in \sem r (b')\} \\
& = \{(\upom \cdot \vpom )\cdot b \mid \upom \cdot a'\in \sem {\pseq p q} (a), \vpom \cdot b \in \sem r (a')\} \\
&= \sem {\pseq {(\pseq p q)}  r} (a)
\end{align*}
In the third step we used the fact that sequential composition on pomsets is associative.

The next axiom we verify is $ \pseq s  \abort \equivbka  \abort  $ for $s$ a state program and $\pseq \abort p \equiv p$ for all programs $p$. We derive
\begin{align*}
\sem {\pseq s  \abort } (a)&=\{(\upom \cdot \vpom)\cdot b \mid \upom \cdot a'\in \sem s (a), \vpom\cdot b\in \sem {\abort}(a')\} \\
&=\{(\upom \cdot \vpom)\cdot b \mid \upom \cdot a\in \sem s (a), \vpom\cdot b\in \sem {\abort}(a)=\emptyset\} \tag{\cref{lemma:statepacketterms}}\\
&=\emptyset= \sem {\abort} (a)
\end{align*}

\begin{align*}
\sem {\abort}(a) &= \emptyset =\{(\upom \cdot \vpom)\cdot b \mid \upom \cdot a'\in \emptyset, \vpom\cdot b\in \sem p (a')\} = \sem {\pseq   \abort  p}(a)
\end{align*}

\begin{align*}
\sem {\pseq p  \skp } (a)&=\{(\upom \cdot \vpom)\cdot b \mid \upom \cdot a'\in \sem p (a), \vpom\cdot b\in \sem {\skp}(a')=\{\onepom\cdot a'\}\} \\
&=\{\upom \cdot a' \mid \upom \cdot a'\in \sem p (a)\} = \sem {p} (a) \\
&= \{(\upom \cdot \vpom)\cdot b \mid \upom \cdot a'\in \{\onepom\cdot a\}, \vpom\cdot b\in \sem p (a')\}\\
& = \{(\upom \cdot \vpom)\cdot b \mid \upom \cdot a'\in \sem \skp (a), \vpom\cdot b\in \sem p (a')\}= \sem {\pseq \skp p} (a)
\end{align*}

\begin{align*}
\sem {\pseq p {(\punion q r)}}(a)&=\{(\upom\cdot\vpom)\cdot b \mid \upom\cdot a'\in \sem p (a), \vpom \cdot b \in \sem q(a')\cup \sem r (a')\} \\
&=\{(\upom\cdot\vpom)\cdot b \mid \upom\cdot a'\in \sem p (a), \vpom \cdot b \in \sem q(a')\}\\
&\cup \{(\upom\cdot\vpom)\cdot b \mid \upom\cdot a'\in \sem p (a), \vpom \cdot b \in \sem r (a')\} \\
&= \sem {\pseq p q}(a)\cup \sem {\pseq p r}(a)= \sem {\punion {\pseq p q} {\pseq p r}}(a)
\end{align*}

\begin{align*}
\sem {\pseq {(\punion p q)} r}(a)&=\{(\upom\cdot\vpom)\cdot b \mid \upom\cdot a'\in \sem p (a)\cup\sem q (a), \vpom \cdot b \in \sem r (a')\} \\
&=\{(\upom\cdot\vpom)\cdot b \mid \upom\cdot a'\in \sem p (a), \vpom \cdot b \in \sem r(a')\}\\
&\cup \{(\upom\cdot\vpom)\cdot b \mid \upom\cdot a'\in \sem q (a), \vpom \cdot b \in \sem r (a')\} \\
&= \sem {\pseq p r}(a)\cup \sem {\pseq q r}(a)= \sem {\punion {\pseq p r} {\pseq q r}}(a)
\end{align*}

Next is $p^*\equiv \skp + pp^*$
Take an element $x\in \sem {p^*}(a)$. Thus $x\in \sem {p^{(n)}}(a)$ for some $n\in\N$. If $n=0$, then $x\in \sem {\skp}(a)=\{\onepom\cdot a\}\subseteq \sem{\punion \skp {\pseq p {p^*}}}(a)$.
If $n>0$, then $x\in \sem {\pseq p {p^{(n-1)}}}$. Hence, $x=(\upom\cdot \vpom)\cdot b$ such that
$\upom\cdot a'\in\sem p(a)$ and $\vpom\cdot b\in \sem {p^{(n-1)}}(a')$. From the latter we obtain that
$\vpom\cdot b \in \sem {p^*} (a')$. As $\upom\cdot a'\in\sem p(a)$, we then get that $x\in \sem {\pseq p {p^*}}(a)\subseteq \sem{\punion \skp {\pseq p {p^*}}}(a)$.
For the other direction, we first observe that $\sem {\skp}(a)\subseteq \sem {p^*}(a)$. Then take an $x\in \sem {\pseq p {p^*}}(a)$. Hence, $x=(\upom\cdot \vpom)\cdot b$ such that
$\upom\cdot a'\in\sem p(a)$ and $\vpom\cdot b\in \sem {p^*}(a')$. From the later we can conclude that there exists an $n\in\N$ such that $\vpom\cdot b \in \sem {p^{(n)}}(a')$. Hence, $x\in \sem
{\pseq p {p^{(n)}}}(a)=\sem {p^{n+1}}(a)$ and we obtain subsequently that $x\in \sem {p^*}(a)$.
The other star-axiom is verified in a similar way.

For the first least fixpoint axiom we assume that
$\closure[]{\sem{p+q\mathbin{;}r}}(a)\subseteq\closure[]{\sem q} (a)$ for all $a$. If $w\in \sem
{p\mathbin{;}r^*}(a)$, then $w=(\upom\cdot \vpom) \cdot b$ such that $\upom \cdot a' \in \sem {p}(a)$
and $\vpom\cdot b \in \sem {r^*}(a')$. Either $\vpom\cdot b = \onepom\cdot a'$, in which case
$w\in
\closure[]{\sem q}(a)$ follows immediately from the premise, or $\vpom\cdot b =(v_1\cdots v_n)c_n$ for
$v_i c_i \in \sem {r}(c_{i-1}')$ and $c_0'=a'$. Note that $\upom\cdot a' \in \closure[]{\sem {q }}(a)$
because of the premise. Then $(\upom\cdot v_1)\cdot c_1\in \closure[] {\sem q}(a) $ because of the premise as well. Repeating this, we get that $w\in \closure[]{\sem{q }}(a)$. The other least fixpoint axiom is verified similarly.

Next are the axioms for the parallel, which we will also prove directly.
For associativity, where we use associativity of union and of parallel composition of pomsets, we derive
\begin{align*}
  \sem {p \parallel (q\parallel r)} (a)&= \{(\upom \parallel \vpom) \cdot (b \cup c)  \mid  \upom \cdot b \in \lang{\polp}(a) , \vpom \cdot c \in \lang{q\parallel r}(a)\}\\
  &= \{(\upom \parallel \vpom)  \cdot  (b \cup c)  \mid  \upom \cdot b \in \lang{\polp}(a) , \\
  & \vpom \cdot
  c \in \{(\upom' \parallel \vpom') \cdot (b' \cup c')  \mid  \upom' \cdot b' \in \lang{\polq}(a) , \vpom' \cdot c' \in \lang{r}(a)\}\}\\
  &= \{(\upom \parallel (\upom' \parallel \vpom'))  \cdot  (b \cup (b' \cup c') )  \mid \\&  \upom \cdot b \in \lang{\polp}(a) ,  \upom' \cdot b' \in \lang{\polq}(a) , \vpom' \cdot c' \in \lang{r}(a)\}\\
    &= \{((\upom \parallel \upom' )\parallel \vpom')  \cdot  ((b \cup b') \cup c' )  \mid  \\& \upom \cdot b \in \lang{\polp}(a) ,  \upom' \cdot b' \in \lang{\polq}(a) , \vpom' \cdot c' \in \lang{r}(a)\}\\
  &= \{((\upom \parallel \upom' )\parallel \vpom')  \cdot  ((b \cup b') \cup c' )  \mid \\&  (\upom\parallel \upom')\cdot (b\cup b') \in \lang{\polp \parallel \polq}(a) ,  \vpom' \cdot c' \in \lang{r}(a)\}\\
    &= \{(\upom\parallel \vpom')  \cdot  (b \cup c' )  \mid  \upom\cdot b \in \lang{\polp \parallel \polq}(a) ,  \vpom' \cdot c' \in \lang{r}(a)\}\\
  &= \sem {(p \parallel q) \parallel r} (a)
\end{align*}

Commutativity of the parallel follows in a similar manner.

\begin{align*}
  \sem {p \parallel \pfalse}(a) & = \{(\upom \parallel \vpom) \cdot (b \cup c)  \mid  \upom \cdot b \in \lang{\polp}(a) , \vpom \cdot c \in \lang{\pfalse}(a)\}\\
  &=\{(\upom \parallel \vpom  )\cdot (b\cup c)\mid  \upom \cdot b \in \lang{\polp}(a), \vpom\cdot c\in \{\onepom\cdot \emptyset\} \} \\
  &=\{\upom \cdot b  \mid  \upom \cdot b \in \lang{\polp}(a) \} =\lang{\polp} (a)
\end{align*}

\begin{align*}
  \sem {p \parallel \abort}(a) & = \{(\upom \parallel \vpom) \cdot (b \cup c)  \mid  \upom \cdot b \in \lang{\polp}(a) , \vpom \cdot c \in \lang{\abort}(a)\}\\
  &= \{(\upom \parallel \vpom) \cdot (b \cup c)  \mid  \upom \cdot b \in \lang{\polp}(a) , \vpom \cdot c \in \emptyset\}\\
  & = \emptyset = \sem{\abort}(a)
\end{align*}

The next axiom for the parallel is distributivity over plus.
\begin{align*}
\sem{p \parallel (\punion q r)}(a) & = \{(\upom \parallel \vpom) \cdot (b \cup c)  \mid  \upom \cdot b \in \lang{\polp}(a) , \vpom \cdot c \in \lang{q + r}(a)\}\\
&= \{(\upom \parallel \vpom) \cdot (b \cup c)  \mid  \upom \cdot b \in \lang{\polp}(a) , \vpom \cdot c \in \lang{q}(a) \cup \sem{r}(a)\}\\
&= \{(\upom \parallel \vpom) \cdot (b \cup c)  \mid  \upom \cdot b \in \lang{\polp}(a) , \vpom \cdot c \in \lang{q}(a) \} \\
&\cup
\{(\upom \parallel \vpom) \cdot (b \cup c)  \mid  \upom \cdot b \in \lang{\polp}(a) , \vpom \cdot c \in  \sem{r}(a)\} \\
& = \sem {p \parallel q}(a) \cup \sem{p\parallel r}(a) \\
&= \sem {\punion {p \parallel q} {p \parallel r}}(a)
\end{align*}

For the exchange law we give a direct proof.
 Let $s,s',v$ and $v'$ be state programs.
 It is sufficient to prove $\sem {\pseq {(s\parallel s')} {(v\parallel v')}}\subseteq  \closure[]{\sem {(\pseq s
 v)\parallel (\pseq {s'} {v'})}}(a)$ (property~\ref{property:monotone} of \cref{lemma:basicfacts2}).
Take a word $w$ in $\sem {\pseq {(s\parallel s')} {(v\parallel v')}}(a)$. Via \cref{lemma:statepacketterms} we know that $w=\vpom\cdot a$, and $\vpom =(X\parallel Y)\cdot (Z\parallel W)$ for $X\cdot a \in \sem
{s}(a)$, $Y\cdot a\in \sem{s'}(a)$, $Z\cdot a\in \sem v(a)$ and $W\cdot a\in \sem{v'}(a)$. Subsequently we can conclude that
$((X\cdot Z)\parallel (Y\cdot W))\cdot a\in \sem {(\pseq s v)\parallel (\pseq {s'} {v'})}(a)$.
As $\vpom\sqsubseteq ((X\cdot Z)\parallel (Y\cdot W))$, we can conclude with \cref{def:closure} that $\vpom\in\closure[\hexch\cup\hcontr]{\{((X\cdot Z)\parallel (Y\cdot W))\}}$.
Then by definition of the closed \cnetkat semantics we can conclude that $\vpom\cdot a \in \closure[]{\sem {(\pseq s v)\parallel (\pseq {s'} {v'})}}(a)$.

Soundness of all but the last three of the packet axioms follows directly from \cref{lemma:seqpacket} and soundness of \netkat for those axioms~\cite[Theorem 1]{netkat}.

For the last three packet axioms let $x$ be a deterministic packet program. Observe that this means that $\sem x(a)=\{\onepom\cdot b\}$ for some (possibly empty) set of packets $b$ (\cref{lemma:tinydetails}, \cref{detpacket}).

\begin{align*}
\sem {x\parallel x} (a) &= \{\onepom\cdot (b\cup c)\mid \onepom\cdot b\in \sem x (a), \onepom \cdot c \in \sem x(a)\}
= \sem x (a)
\intertext{}
\sem {\pseq {(p\parallel q)} x}(a) &= \{\upom\cdot b\mid \upom\cdot a'\in\sem {p\parallel q}(a),\onepom\cdot b\in \sem x(a')\} \\
&= \{\upom\cdot b\mid \upom \cdot a'\in\{(z\parallel y)\cdot (c\cup d)\mid z\cdot c\in \sem p (a), \\&y\cdot d\in \sem q(a)\},\onepom\cdot b\in \sem x(a')\}\\
&=\{(z\parallel y)\cdot b\mid  z\cdot c\in \sem p (a), y\cdot d\in \sem q(a),\onepom\cdot b\in \sem x(c\cup d)\} \\
&=\{(z\parallel y)\cdot (e\cup f)\mid  z\cdot c\in \sem p (a), y\cdot d\in \sem q(a), \\& \sem x(c)=\{\onepom\cdot e\}, \sem x(d)=\{\onepom\cdot f\}\} \tag{\cref{lemma:splitsets}}\\
&= \{(z\parallel y)\cdot (e\cup f)\mid z \cdot e\in \sem {\pseq p x}(a), y \cdot f\in \sem{\pseq q x}(a)\}\\
&=\sem {(\pseq p x)\parallel (\pseq q x)}(a)
\intertext{}
\sem {\pseq x {(p\parallel q)}}(a) &= \{\upom\cdot b\mid \onepom\cdot a'\in\sem x(a),\upom\cdot b\in \sem {p\parallel q}(a')\} \\
&= \{\upom\cdot b\mid \onepom \cdot a'\in \sem x (a), \\&\upom\cdot b\in \{(z\parallel y)\cdot (c\cup d)\mid z\cdot c\in \sem p(a'), y \cdot d\in\sem q (a') \}\}\\
&= \{(z\parallel y)\cdot (c\cup d)\mid \sem x (a)= \{1\cdot a'\},z\cdot c\in \sem p (a'), \\
& \hspace{32mm} y\cdot d\in \sem q(a')\}\\
&=\{(z\parallel y)\cdot (c\cup d)\mid  z\cdot c\in \sem {\pseq x p} (a), y\cdot d\in \sem {\pseq x q}(a)\}\\
&=\sem {(\pseq x p)\parallel (\pseq x q)}(a)
\end{align*}

For the extensionality axiom we derive for any $a\in 2^{\Pk}$ and $b\in 2^{\Pk}_\nempty$
$
\closure[]{\sem{p}}(a)=\closure[]{\sem{\Pi_a\mathbin{;}p}}(b)=\closure[]{\sem{\Pi_a\mathbin{;}q}}(b)=\closure[]{\sem{q}}(a)
$
using \cref{lemma:packet-specified}.

Next we prove the interface axioms, which we will also prove directly.

Take a word $w\in \sem {o\wedge o'} (a)$. Hence $w=(u_1\alpha
u_2)\cdot a$ and $u_1,u_2\in \State^*$ and $\alpha\in \osem
{o}\cap \osem{o'}$. Then $(u_1\alpha)\cdot a\in \sem{o}(a)$ and
$(\alpha u_2)\cdot a\in \sem{o'}(a)$. Because of
\cref{lemma:statepacketterms} we can then conclude that
$(u_1\alpha\alpha u_2)\cdot a\in \sem {o\mathbin{;}o'}(a)$. As $u_1\alpha u_2\preceq u_1\alpha\alpha u_2 $,
we obtain from \cref{def:closure} that $u_1\alpha u_2 \in \closure[\hexch\cup\hcontr]{\{u_1\alpha \alpha u_2\}}$.
From \cref{def:closed-semantics}, we can then immediately conclude that $(u_1 \alpha u_2)\cdot a=w\in \closure[]{\sem {o\mathbin{;}o'}}(a)$.

For the next axiom we derive
\begin{align*}
  \sem{o\vee o'}(a) &= \State^* \odot \osem{o}\cup\osem{o'}\odot \State^* \times \{a\} \\
  &=\State^* \odot \osem{o}\odot \State^* \times \{a\} \cup\State^* \odot \osem{o'}\odot \State^* \times \{a\}\\
  &= \sem{o}(a)\cup \sem{o'}(a)=\sem{o+o'}(a)
\end{align*}

The axiom $\abort \equiv \bot$ is verified immediately as they both are interpreted as the empty set. The axiom $\skp\equiv\ptrue$ is also verified immediately by definition.

For the next axiom we derive
\begin{align*}
  \sem{\top\mathbin{;} o} &= \{(\upom\cdot \vpom )\cdot b \mid \upom\cdot a' \in \sem{\top}(a)=\State^*\odot \State\odot
  \State^*\times\{a\}, \vpom \cdot b\in \sem{o}(a)\\&= \State^*\odot
  \osem{o}\odot\State^*\times\{a\}\} \\&\subseteq \{\upom \cdot b \mid \upom \cdot b \in \State^*\odot
  \osem{o}\odot\State^*\times\{a\}\} = \sem{o}(a)
\end{align*}

The axioms $o\mathbin{;}\top \leqq o$, $e\mathbin{;}\top \leqq e$ and $\top \mathbin{;}e \leqq e$ are verified similarly.

The last two interface axioms are the following.
\begin{align*}
  \sem {t\wedge_{\BB}t'} (a)&=\{1\cdot \bsem t (a)\cap \bsem {t'}(a)\}\\
  &=\{1\cdot b \mid b= \bsem t (a)\cap \bsem {t'}(a)\}
  \\
  &= \{1\cdot b\mid 1\cdot a' \in \{1\cdot \bsem t(a)\},1\cdot b \in \{1\cdot \bsem {t'}(a')\}\} \\
  &=\{(\upom \cdot \vpom)\cdot b \mid \upom\cdot a'\in \sem t(a), \vpom\cdot b\in \sem{t'}(a')\}\\
  &=\sem {\pseq t {t'}}(a)
\end{align*}

\begin{align*}
  \sem {t\vee_{\BB}t'} (a)&=\{1\cdot (\bsem t (a)\cup \bsem {t'}(a))\}\\
  &=\{1\cdot (c\cup d)\mid c=\bsem t (a), d=\bsem {t'}(a)\}
  \\
  &= \{1\cdot (c\cup d)\mid \sem t (a)=\{1\cdot c\}, \sem {t'}(a)=\{1\cdot d\}\} \\
  &=\{(\upom \parallel \vpom)\cdot(c\cup d) \mid \upom\cdot c\in \sem t(a), \vpom\cdot d\in \sem{t'}(a)\}\\
  &=\sem {t\parallel t'}(a)
\end{align*}

We now check the local state vs global state axioms. Let $x$ be a deterministic packet program.
\begin{align*}
  \sem{\Pi_a\mathbin{;}\dup}(a)&= \{(\upom\cdot\vpom)\cdot b\mid
  \upom \cdot a'\in \sem{\Pi_a}(a)=\{\onepom\cdot a\},\vpom \cdot b\in
  \sem{\dup}(a')\} \\
  &= \{\vpom \cdot b \mid \vpom\cdot b\in \sem{\dup}(a)=\{a\cdot a\}\} \tag{$a\neq\emptyset$}\\
  &= \{a\cdot a \} \\
  &= \{\vpom \cdot b \mid \vpom\cdot b\in \sem{a}(a)=\{a\cdot a\}\} \\
  &= \{(\upom\cdot\vpom)\cdot b\mid
  \upom \cdot a'\in \sem{\Pi_a}(a)=\{\onepom\cdot a\},\vpom \cdot b\in
  \sem{a}(a')\} \tag{$a\neq\emptyset$} \\
  &= \sem{\Pi_a\mathbin{;}a}(a)
\end{align*}

Take $\Pi_b$ for $b\neq\emptyset$ and $w\in \mathcal{T}_{\mathsf{state}}(\OO\cup\Act\cup 2^{\Pk})$:
\begin{align*}
\sem {\pseq {\Pi_b} w} (a) & = \{(\upom\cdot \vpom)\cdot c\mid \upom \cdot a' \in \sem {\Pi_b}(a), \vpom\cdot c \in \sem w(a')\}\\
&= \{\vpom\cdot b \mid \sem {\Pi_b}(a)=\{\onepom\cdot b\}, \vpom\cdot b \in \sem w (b)\} \tag{Def $\Pi_b$, \cref{lemma:statepacketterms}} \\
&= \{\vpom \cdot b \mid \vpom \cdot a \in \sem w (a), \onepom\cdot b\in \sem {\Pi_b} (a)\} \tag{\cref{lemma:stateterms}} \\
&= \{(\vpom\cdot \upom) \cdot c \mid \vpom \cdot a' \in \sem w (a), \upom\cdot c\in \sem {\Pi_b} (a')\}\tag {\cref{lemma:statepacketterms}} \\
&= \sem {\pseq w {\Pi_b}} (a)
\end{align*}

For any program $p$ we derive
\begin{align*}
  \sem {\pseq \pfalse p}(a)
  &=\{(\upom\cdot\vpom)\cdot b \mid \upom\cdot a' \in \sem \pfalse (a), \vpom \cdot b \in \sem p (a')\} \\
    &= \{(\upom\cdot\vpom)\cdot b \mid \upom\cdot a' \in \{\onepom \cdot \emptyset\}, \vpom \cdot b \in \sem p (a')\} \\
    &=\{(\onepom\cdot\vpom)\cdot b \mid \vpom \cdot b \in \sem p (\emptyset)=\{\onepom\cdot\emptyset\}\} \tag{Def of $p$ on empty input}\\
      &= \{\onepom \cdot \emptyset \} \\
  &=\sem {\pfalse}(a)
  \end{align*}

Take $y$ a packet program.
\begin{align*}
\sem {\pseq y \pfalse}(a)&= \{(\upom\cdot\vpom)\cdot b \mid \upom\cdot a' \in \sem y(a), \vpom \cdot b \in \sem \pfalse (a')\} \\
&= \{(\upom\cdot\vpom)\cdot b \mid \onepom\cdot a' \in \sem y(a), \vpom \cdot b \in \{\onepom\cdot \emptyset \}\} \tag{\cref{lemma:statepacketterms}} \\
&= \{\onepom \cdot \emptyset \} \\
&= \sem {\pfalse}(a)
\end{align*}

Let $s$ be a state program.
\begin{align*}
  \sem{s\parallel\skp} (a)&= \{(\upom\parallel \vpom)\cdot b\cup c\mid \upom\cdot b\in \sem s (a), \vpom\cdot c\in \skp(a)=\{\onepom\cdot a\}\}\\
  &= \{(\upom\parallel \onepom)\cdot a\cup a\mid \upom\cdot a\in \sem s (a)\}\tag{\cref{lemma:statepacketterms}} \\
  &=\{\upom \cdot a \mid \upom \cdot a\in \sem s (a)\} \\
  &=\{\upom \cdot b \mid \upom \cdot b\in \sem s (a)\} \tag{\cref{lemma:statepacketterms}}\\
  &=\sem{s}(a)
\end{align*}

Let $y,z$ be packet programs and $s,v$ state programs.
\begin{align*}
\sem{(s\mathbin{;}y)\parallel(v\mathbin{;}z)}(a) &= \{(\upom\parallel \vpom) \cdot b\cup c\mid \upom \cdot b \in \sem{s\mathbin{;}y}(a),\vpom\cdot c\in \sem{v\mathbin{;}z}(a)\}\\
&= \{(\upom\parallel \vpom) \cdot b\cup c\mid \\& \upom \cdot b \in \{(\mathbf{d}\cdot \mathbf{e})\cdot f\mid
\mathbf{d}\cdot a'\in\sem s(a), \mathbf{e}\cdot f \in \sem y (a')\},\\
&\vpom\cdot c\in
\{(\mathbf{g}\cdot \mathbf{h})\cdot i\mid \mathbf{g}\cdot a'\in\sem v(a), \mathbf{h}\cdot i \in \sem z (a')\}\}\\
&=\{(\upom\parallel \vpom) \cdot b\cup c\mid \\&\upom \cdot b \in \{ \mathbf{d}\cdot f\mid
\mathbf{d}\cdot a\in\sem s(a), \onepom\cdot f \in \sem y (a)\},\\
&\vpom\cdot c\in
\{\mathbf{g}\cdot i\mid \mathbf{g}\cdot a\in\sem v(a), \onepom\cdot i \in \sem z (a)\}\} \tag{\cref{lemma:statepacketterms}}\\
&=\{(\upom\parallel \vpom) \cdot b\cup c\mid \upom \cdot a\in\sem s(a), \onepom\cdot b \in \sem y (a),\\&\vpom\cdot  a\in\sem v(a), \onepom\cdot c \in \sem z (a)\}\\
&=\{\wpom \cdot d \mid \wpom \cdot a \in\{(\upom\parallel \vpom)\cdot a\mid \upom\cdot a\in \sem s (a),\\&\vpom \cdot a \in \sem v (a)\}=\sem{s\parallel v}(a), \\
&\onepom \cdot d
\in \{\onepom\cdot b\cup c\mid \onepom\cdot b \in \sem y (a), \onepom\cdot c\in \sem z (a)\}=\sem {y\parallel z}(a)\} \tag{\cref{lemma:statepacketterms}}\\
&= \{(\upom\cdot \vpom)\cdot b\mid \upom\cdot a'\in \sem{s\parallel v}(a), \vpom\cdot b\in \sem {y\parallel z}(a')\} \tag{\cref{lemma:statepacketterms}} \\
&=\sem{(s\parallel v)\mathbin{;}(y\parallel z)}(a)
\end{align*}

The packet predicate and state observation axioms need to be proven both
for elements from $\BB$ and for elements from $\OO$. For the latter we
use that $\osem{-}$ is identical to $\sem{-}_{\mathsf{OA}}$ and soundness for
those semantics is proven in~\cite[Theorem 3.8]{pocka}. For the former, we
use that $\cup$ and $\cap$ (interpretations of $\vee_\BB$ and
$\wedge_\BB$) satisfy all those properties.
The additional state observation axioms also follow from~\cite[Theorem 3.8]{pocka}.
The only axioms left to check are the additional packet predicate axioms.
\begin{align*}
\sem{t\vee_\BB \ptrue} (a)&=\{\onepom \cdot \bsem {t\vee_\BB \ptrue}(a)\} \\
&=\{\onepom \cdot \bsem{t}(a)\cup a\}\\
&=\{\onepom \cdot a\} \tag{$\bsem{t}(a)\subseteq a$}\\ 
&=\{\onepom\cdot \bsem{\ptrue}(a) \}\\
&=\sem {\ptrue}(a)\\
&= \{\onepom \cdot \bsem{t}(a)\cup (a \setminus {\bsem{t}(a)})\}\\
&=\sem{t\vee_\BB \neg t}(a)
\end{align*}

\begin{align*}
  \sem{t\wedge_\BB \neg t} (a)&= \{\onepom \cdot \bsem{t}(a)\cap (a\setminus {\bsem{t}(a)}) \} \\
  &={\onepom\cdot \emptyset} \tag{$\bsem{t}(a)\subseteq a$} \\ 
  &=\sem{\pfalse}(a)
\end{align*}

Let $m\neq n$. Observe that this entails for a set of packets $a$ that $a(\match f n)\cap a(\match f m)=\emptyset$, as packets cannot have two different values for one field.
\begin{align*}
  \sem {\match f n \wedge_{\BB} \match f m}(a)&= \{1\cdot \bsem {\match f n \wedge_{\BB} \match f m}(a)\} \\
  &=\{1\cdot \bsem {\match f n } (a) \cap \bsem {\match f m}(a)\}\\
  &=\{1\cdot  (a(\match f n) \cap a(\match f m))\}\\
  &= \{1\cdot \emptyset\} = \sem \pfalse (a)
\end{align*}

\begin{align*}
  \sem {\bigvee_i \match f i} (a)&= \{1\cdot \bsem {\bigvee_i \match f i}(a)\} \\
  &= \{1\cdot a\}=\sem \ptrue (a)
\end{align*}
This is because $\bsem {\bigvee_i \match f i}(a)$ is the union of all the filters $\match f i$ on $a$ for all possible values $i$ of $f$, and each packet in $a$ has a value for $f$. Hence, each packet is an element of $\bsem{\match f i}(a)$ for some $i$.

In the inductive step we need to check whether the closure rules for
congruence have been preserved. We distinguish four cases. If
$p=p_0+p_1$ and $q=q_0+q_1$ with $p_0\equiv q_0$ and $p_1\equiv q_1$,
then by induction we know that
$\closure[]{\sem{p_0}}(a)=\closure[]{\sem{q_0}}(a)$ and
$\closure[]{\sem{p_1}}(a)=\closure[]{\sem{q_1}}(a)$ for all $a\in
2^{\Pk}$.
Via property~\ref{property:union} of \cref{lemma:basicfacts2}, we
obtain that \[\closure[]{\sem{p_0+p_1}}(a)=\closure[]{\sem
{p_0}}(a)\cup\closure[]{\sem {p_1}}(a)=\closure[]{\sem{q_0}}(a)\cup
\closure[]{\sem{q_1}}(a)=\closure[]{\sem{q_0+q_1}}(a)\]

For the next case we consider $p=p_0\mathbin{;}p_1$ and $q=q_0\mathbin{;}q_1$ with $p_0\equiv q_0$ and $p_1\equiv q_1$.
Using property~\ref{property:seq} from \cref{lemma:basicfacts2} and the induction hypothesis we immediately obtain
\begin{align*}
&  \closure[]{\sem{p_0\mathbin{;}p_1}}(a)\\&=\{\wpom\cdot b\mid \upom
\cdot a'\in \closure[]{\sem{p_0}}(a),\vpom \cdot b\in
\closure[]{\sem{p_1}}(a') ,
\wpom\in\closure[\hexch\cup\hcontr]{\{\upom\cdot\vpom\}}\} \\
&= \{\wpom\cdot b\mid \upom
\cdot a'\in \closure[]{\sem{q_0}}(a),\vpom \cdot b\in
\closure[]{\sem{q_1}}(a') ,
\wpom\in\closure[\hexch\cup\hcontr]{\{\upom\cdot\vpom\}}\} \\
&= \closure[]{\sem{q_0\mathbin{;}q_1}}(a)
\end{align*}

The case for $p=p_0\parallel p_1$ and $q=q_0\parallel q_1$ is analogous but instead uses~\ref{property:parallel} of \cref{lemma:basicfacts2}.

For $p=p_0^*$ and $p_0\equiv q_0$. We derive
\begin{align*}
  \closure[]{\sem{p_0^*}}(a) &= \{\upom\cdot b \mid \vpom\cdot b\in
   \displaystyle\bigsqcup_{n \in \N} \closure[]{\sem{p_0^{(n)}}}(a), \upom\in \closure[\hexch\cup\hcontr]{\{\vpom\}}\}
  \tag{\ref{property:star} of \cref{lemma:basicfacts2}} \\
  &=  \{\upom\cdot b \mid \vpom\cdot b\in  \displaystyle\bigsqcup_{n \in \N}
  \closure[]{\sem{q_0^{(n)}}}(a), \upom\in \closure[\hexch\cup\hcontr]{\{\vpom\}}\}
  \tag{IH and previous case} \\
  &= \closure[]{\sem{q_0^*}}(a) \tag{\ref{property:star} of \cref{lemma:basicfacts2}}
\end{align*}

\end{proofEnd}
%

Conversely, we will prove that if $p$ and $q$ have the same semantics on all inputs $a$, then $p\equiv q$.  
\noindent We structure the \textbf{\textit{completeness}} proof in four parts:
\begin{enumerate}
  \item Define a normal form for \cnetkat programs, and show that for every input set $a$, every
  program is provably equivalent to a program in normal form in which $a$ is incorporated. In other words, the
  normal form of a program is dependent on the input. Similar to \netkat, normal form programs are \cnetkat expressions over \emph{complete
  assignments}. We show that we have a simplified set of axioms on complete assignments and tests.
  \item Obtain completeness for $\Pi_a$-shaped programs from \netkat completeness.
  \item Using completeness of \POCKA, obtain completeness for
  programs of the form $s\mathbin{;}\Pi_a$ (and sums thereof), where $s$ is a
  state program.
  \item Lastly, we combine these results to prove that if $p$ and $q$ have the same behavior on input $a$, the program $\Pi_a\mathbin{;}p$ is provably equivalent to $\Pi_a\mathbin{;}q$.
\end{enumerate}
\smallskip

\textbf{Step 1: Normal form}
We prove that for every $a\in 2^{\Pk}$, we can write any program $p$ as $\Pi_a$ followed by a sum of state programs followed by a parallel of complete assignments. This is the most difficult step in the completeness proof.

We derive a few equivalences from~\cref{fig:axioms} regarding complete tests and assignments that make the proof of the normal form easier. We refer to these axioms as the $\emph{reduced}$ axioms. For $\alpha$ and $\beta$ complete tests such that $\alpha \neq \beta$, $\pi$ and $\pi'$ complete assignments, and $a\in 2^{\Pk}_\nempty,b\in 2^{\Pk}$, we can derive:
\begin{mathpar}
  \pi \equiv \pseq \pi {\alpha_{\pi}}
  \and
  \alpha \equiv \pseq \alpha {\pi_{\alpha}}
  \and
  \pseq \pi {\pi'} \equiv \pi'
  \and \pseq \alpha \beta \equiv \pfalse
  \and \Pi_a \mathbin {;}\Pi_b\equiv \Pi_b
\end{mathpar}

All of these equivalences are easy consequences of the packet axioms, the packet predicate axioms, the
axiom $t\land_{\BB} t' \equiv \pseq t t'$ and the fact that for all packet programs $p$
we have $p\mathbin{;}\pfalse\equiv\pfalse\equiv \pfalse \mathbin{;}p$~\cite{netkat}. The last reduced axiom is derived
in
\ifarxiv%
\cref{lemma:funny-derivation} in \cref{app:completeness}.
\else%
the full version of this article~\cite[Lemma~14]{fullversion}.
\fi%

\begin{theoremEnd}[default,category=completeness]{theorem}[Normal form]\label{lemma:normalform}
  Let $p\in\programs$ and $a\in 2^{\Pk}$. There exists a finite set $J$, and elements $u_j\in \mathcal{T}_{\mathsf{state}}(\OO\cup\Act\cup 2^{\Pk})$ and $b_j\in 2^{\Pk}$ for each $j\in J$
  s.t.
  \[\pseq {\Pi_a} {p}\equiv \pseq {\Pi_a} {\sum\limits_{j\in J}\left(\pseq {u_j} {\Pi_{b_j}}\right)}\]
\end{theoremEnd}
\begin{proofEnd}
  Let $\Pi_a = \Vertt\limits_{k\in K}\pi_k$. In case $K=\emptyset$, the equivalence follows immediately, as $\pfalse \mathbin{;}p\equiv \pfalse$ for any program $p$. In the rest of the proof we assume $K\neq\emptyset$.
We prove the claim by induction on the structure of $p$. For the case $p=t\in\BB$, we note that the complete tests are the atoms (minimal nonzero elements) of the Boolean algebra generated by the tests. Hence, we know that each element $t$ can be written as a disjunction of all the atoms below $t$ (\cite[Chapter 5.9]{birkhoff-bartee-1970}).
We derive
\begin{align*}
  \Pi_a \mathbin{;} t &\equiv \Pi_a \mathbin{;}
  \bigvee_{\alpha\leq_{\BB}t}\alpha \tag {Property Boolean algebra}\\
  &\equiv \Pi_a \mathbin{;}
  (\Vert_{\alpha\leq_{\BB}t}\alpha)\tag{axiom
  $t\vee_{\BB} t' \equiv t\parallel t'$}\\ 
  &= (\Vert_{k\in K}\pi_k )\mathbin{;}  (\Vert_{\alpha\leq_{\BB}t}\alpha) \\
  &\equiv \Vert_{k\in K}\big(\pi_k \mathbin{;} (\Vert_{\alpha\leq_{\BB}t}\alpha)\big) \tag{$\pseq {(p\parallel q)} x \equiv (\pseq p x)\parallel (\pseq q x)$}\\ 
  &\equiv \Vert_{k\in K}\Vert_{\alpha\leq_{\BB}t}\big(\pi_k \mathbin{;} \alpha\big) \tag{  $\pseq  x {(p\parallel q)}  \equiv (\pseq  x p)\parallel (\pseq  x q)$}\\ 
  &\equiv \Vert_{k\in K}\Vert_{\alpha\leq_{\BB}t}\big(\pi_k\mathbin{;}\alpha_{\pi_k}\mathbin{;}\alpha\big) \tag{$\pi\equiv \pi \mathbin{;}\alpha_{\pi}$}\\ 
  &\equiv \Vert_{k\in K}\Vert_{\alpha_{\pi_k}\leq_{\BB}t}\big(\pi_k\mathbin{;}\alpha_{\pi_k}\mathbin{;}\alpha_{\pi_k}\big) \tag{If $\alpha\neq\beta$ then
  $\alpha\mathbin{;}\beta\equiv\pfalse$, $p\parallel\pfalse\equiv p$}\\ 
    &\equiv \Vert_{k\in K}\Vert_{\alpha_{\pi_k}\leq_{\BB}t}\big(\pi_k\mathbin{;}\alpha_{\pi_k}\big) \tag{$\alpha\cdot\alpha\equiv \alpha\wedge\alpha\equiv\alpha$}\\ 
  &\equiv \Vert_{k\in K, \alpha_{\pi_k} \leq_{\BB} t}(\pi_k) \tag{ $\pi\equiv \pi \mathbin{;}\alpha_{\pi}$} \\ 
  &\equiv \Vert_{k\in K}\pi_k \mathbin{;} \Vert_{k\in K, \alpha_{\pi_k} \leq_{\BB} t}(\pi_k) \tag{$\Pi_a \mathbin{;}\Pi_b \equiv \Pi_b$} \\ 
  &\equiv\Pi_a \mathbin{;} \skp \mathbin{;} \Vert_{k\in K,\alpha_{\pi_k} \leq_{\BB} t}(\pi_k) \tag{$p\mathbin{;}\skp \equiv p$} 
\end{align*}

For $p=\skp$, we just use that $\skp \equiv \ptrue$ and the case above.

In the next base case we let $p=\modify f n$. We derive
\begin{align*}
\Pi_a \mathbin{;} \modify f n &\equiv \Pi_a \mathbin{;}\Pi_a \mathbin{;} \modify f n \tag{$\Pi_a \mathbin{;}\Pi_b \equiv \Pi_b$}\\ 
&= \Pi_a \mathbin{;} (\Vert_{k\in K}\pi_k )\mathbin{;} \modify f n \\
&\equiv \Pi_a \mathbin{;} \Vert_{k\in K}(\pi_k \mathbin{;} \modify f n ) \tag{axiom $\pseq {(p \parallel q)} x \equiv (\pseq p x)\parallel (\pseq q x)$} \\ 
&\equiv \Pi_a \mathbin{;} \skp \mathbin{;} \Vert_{k\in K}\pi_k'\tag{$p\mathbin{;}\skp \equiv p$} 
\end{align*}
where $\pi'_{k}$ is $\pi_k$ with the assignment for $f$ replaced by $(\modify f n)$.

The next three bases cases are for state actions and the rest of the state observations.
Let $p=\modify v n$. The case for $p=b\in 2^\Pk$ and $p=o\in\OO$ is identical.

 We derive
\[\Pi_a \mathbin{;} \modify v n \equiv \Pi_a \mathbin{;}\Pi_a \mathbin{;}\modify v n \equiv \Pi_a \mathbin{;} \modify v n \mathbin{;} \Pi_a \]

For the case where $p=\abort$, we use that $\bot\equiv\abort$ and the case above.

The last base case is for $\dup$. We derive
\begin{align*}
  \Pi_a \mathbin{;} \dup & \equiv \Pi_a \mathbin{;} \Pi_a\mathbin{;}\dup \tag{$\Pi_a \mathbin{;}\Pi_b \equiv \Pi_b$} \\ 
  &\equiv \Pi_a \mathbin{;} \Pi_a\mathbin{;} a \tag{  $ \pseq {\Pi_a} {\dup} \equiv \pseq {\Pi_a} {a}$} \\ 
  &\equiv \Pi_a \mathbin{;} a\mathbin{;}  \Pi_a \tag{$\Pi_a\mathbin{;}w\equiv w\mathbin{;}\Pi_a$} 
\end{align*}

We have four inductive cases.

For $p+q $ we derive
\begin{align*}
  \Pi_a \mathbin{;} (p+q) &\equiv \Pi_a \mathbin{;} p + \Pi_a \mathbin{;}q \tag{Distributivity} \\
  &\stackrel {\textrm{IH}}\equiv \Pi_a \mathbin{;} \sum_{j\in J}\left(\pseq {u_j} {\Pi_{b_j}}\right) + \Pi_a \mathbin{;}\sum_{m\in M}\left(\pseq {v_m} {\Pi_{c_m}}\right)\\
 &\equiv \Pi_a \mathbin{;}\left( \sum_{j\in J}\left(\pseq {u_j} {\Pi_{b_j}}\right) +\sum_{m\in M}\left(\pseq {v_m} {\Pi_{c_m}}\right) \right) \tag{Distributivity}
\end{align*}

For $ p \parallel q$ we derive
\begin{align*}
  \Pi_a \mathbin{;} \left(p \parallel q\right) &\equiv \left(\Pi_a \mathbin{;} p \right)\parallel \left(\Pi_a \mathbin{;} q \right)\tag{axiom $  \pseq  x {(p\parallel q)}  \equiv (\pseq  x p)\parallel (\pseq  x q) $} \\ 
   &\stackrel {\textrm{IH}}\equiv \left(\Pi_a \mathbin{;} \sum_{j\in J}\left(\pseq {u_j} {\Pi_{b_j}}\right)
   \right)\parallel \left(\Pi_a \mathbin{;}\sum_{m\in M}\left(\pseq {v_m} {\Pi_{c_m}}\right)\right) \\
  &\equiv \Pi_a \mathbin{;} \left(\left(\sum_{j\in J}\left(\pseq {u_j} {\Pi_{b_j}}\right)\right) 
  \parallel
  \left(\sum_{m\in M}\left(\pseq {v_m} {\Pi_{c_m}}\right)\right)\right)\tag{axiom $  \pseq  x {(p\parallel q)} \equiv (\pseq  x p)\parallel (\pseq  x q) $} \\ 
 &\equiv\Pi_a \mathbin{;} \left(
  \sum_{j\in J}\sum_{m\in M} \left( \left(\pseq {u_j} {\Pi_{b_j}
  }\right) \parallel \left(\pseq {v_m} {\Pi_{c_m}}\right) \right)\right)\tag{Distributivity of
  parallel over sum} \\
  &\equiv \Pi_a \mathbin{;} \left(\sum_{j\in J}\sum_{m\in M} \left( \pseq {\left(u_j\parallel v_k\right)} { \left( \Pi_{b_j}
  \parallel \Pi_{c_m}\right)} \right)\right) \tag{axiom  $(\pseq s y) \parallel (\pseq v z) \equiv \pseq {(s\parallel v)} {(y\parallel z)}$} 
\end{align*}
This last line is already in the desired format, using associativity of the parallel.

For $\pseq p q$ we derive
\begin{align*}
  \Pi_a \mathbin{;} \left(\pseq p q\right) &\equiv   \left(\Pi_a \mathbin{;} p\right) \mathbin{;} q \tag{Associativity of $\mathbin{;}$}\\
  &\stackrel {\textrm{IH}}\equiv \Pi_a \mathbin{;}
  \pseq {\sum_{j\in J}\left(\pseq {u_j} {\Pi_{b_j}}\right)}  {q} \\
&\equiv \Pi_a \mathbin{;}
\sum_{j\in J}\left(\pseq {u_j} {\Pi_{b_j}}\mathbin{;}q\right) \tag{Distributivity of $+$ and $\mathbin{;}$} \\
&\equiv \Pi_a \mathbin{;}
\sum_{j\in J} \left( \pseq {\pseq {u_j} {\Pi_{b_j}}}  {\sum_{m\in M}\left(\pseq {v_m} {\Pi_{c_m}}\right)} \right)
\tag{IH on $\Pi_{b_j}\mathbin{;}q$} \\
&\equiv \Pi_a \mathbin{;}
\sum_{j\in J} \sum_{m\in M}\left( \pseq {\pseq {u_j} {\Pi_{b_j}}} {\pseq {v_m} {\Pi_{c_m}}} \right)
\tag{Distributivity of $+$ and $\mathbin{;}$}
\end{align*}

Now we make a case distinction for each element from the sum. For each $j$ we have $\Pi_{b_j}$ equal to $\pfalse$ or not. In case it is equal to $\pfalse$, we have $b_j=\emptyset$ (see \cref{remark:piadrop}), we can derive for this $j$ and any $m$ that
\[
\pseq {\pseq {u_j} {\Pi_{b_j}}} {\pseq {v_m} {\Pi_{c_m}}} \equiv \pseq {u_j} {\pfalse}
\]
In case $\Pi_{b_j}$ is not equal to $\pfalse$, we have $b_j\neq\emptyset$ and we derive
\begin{align*}
  \pseq {\pseq {u_j} {\Pi_{b_j}}} {\pseq {v_m} {\Pi_{c_m}}}
  &\equiv
  \pseq {\pseq {u_j} {v_m}} {\pseq {\Pi_{b_j}} {\Pi_{c_m}}} \tag{$\Pi_a\mathbin{;}w\equiv w\mathbin{;}\Pi_a$}\\ 
&\equiv \pseq {\pseq {u_j} {v_m}} {\Pi_{c_m}}\tag{$\Pi_a\mathbin{;}\Pi_b\equiv \Pi_b$} 
\end{align*}

Hence, every element in the sum in
\[
\Pi_a \mathbin{;}
\sum_{j\in J} \sum_{m\in M}\left( \pseq {\pseq {u_j} {\Pi_{b_j}}} {\pseq {v_m} {\Pi_{c_m}}} \right)
\]
is provably equivalent to an expression such that the sum becomes an expression in normal form.

For $p^*$, we need a more complicated argument. We are going to use the fact that matrices over a Kleene algebra form a Kleene algebra~\cite{kozen94}. We first construct a set $Q$ of syntactic representations of sets of packets.
\begin{mathpar}
    \inferrule{~}{%
        \Pi_a\in Q
    }
    \and
    \inferrule{%
\Pi_b \in Q\\
      \Pi_b \mathbin{;} p \stackrel {\textrm{IH}}\equiv  \Pi_b \mathbin{;}\sum_{k\in K}\left(\pseq {u_k} {\Pi_{c_k}}\right)\\
    k\in K
    }{%
    \Pi_{c_k}\in Q
    }
\end{mathpar}

$Q$ is finite because from the induction hypothesis we know that $K$ is finite for each sum, and $Q$ exists out parallel products of packets, and we only have finitely many of such terms (up-to-equivalence), because parallel composition is
idempotent on complete assignments (direct representation of a packet), and we only have finitely many complete assignments, because we have finitely many fields and values.

Next we define a $Q\times Q$-matrix $M$.
If $R \neq \Vert\emptyset$ and $ R\mathbin{;}p \stackrel {\textrm{IH}}\equiv R \mathbin{;}\sum_{k\in K}\left(u_k\mathbin{;}\Pi_{c_k}\right)$,
\[ M(R, R'):= \sum \{ u_k : k \in K \text{ and } \Pi_{c_k} \equiv R' \} \]
in all other cases, we set $M(R, R') = \drp$.

As usual, multiplication of a $Q\times Q$ matrix $M$ with another $Q\times Q$-matrix $M'$ is given by
\[
(M\mathbin{;}M')(q_1,q_2)=\sum_{R\in Q}M(q_1, R)\mathbin{;}M'(R,q_2)
\]
and addition is defined pointwise.

We now prove that

\[
\Pi_a \mathbin{;} p^* \equiv \Pi_a\mathbin{;}\sum_{R\in Q}M^*(\Pi_a,R)\mathbin{;}R
\]

where the expression on the right has the correct format for the normal form, because all elements of $M*$ are state programs: they are formed out of elements of $M$, which are state programs, and the Kleene algebra operators.

We start with $\leqq$.
To apply one of the fixpoint axioms, we need to prove that
\begin{equation}\label{nummer2}
\Pi_a + \Pi_a\mathbin{;}\sum_{R\in Q}(M^*(\Pi_a,R)\mathbin{;}R)\mathbin{;}p \leqq  \Pi_a\mathbin{;}\sum_{R\in Q}M^*(\Pi_a,R)\mathbin{;}R
\end{equation}
Then the desired result follows from one of the fixpoint axioms.
We first show that
\begin{equation}\label{nummer1}\Pi_a \leqq \Pi_a\mathbin{;}\sum_{R\in Q}M^*(\Pi_a,R)\mathbin{;}R\end{equation}
Let $1$ be the identity matrix. Because all the Kleene algebra axioms hold on state programs, the elements of the matrix $M$ have a Kleene algebra structure. Then, because matrices over a Kleene algebra form a Kleene algebra, we have $1+M\mathbin{;}M^*\equiv M^*$ (axiom of KA)\cite{kozen94}. From this we can conclude that $1\leqq M^*$, and thus $1(\Pi_a,\Pi_a)\leqq M^*(\Pi_a,\Pi_a)$. We also know that $1(\Pi_a,\Pi_a)=\skp$.

We derive
\begin{align*}
  \Pi_a &\equiv \Pi_a\mathbin{;}\skp =\Pi_a\mathbin{;}1(\Pi_a,\Pi_a)\leqq \Pi_a\mathbin{;}M^*(\Pi_a,\Pi_a)\\
  &\equiv \Pi_a\mathbin{;}\Pi_a\mathbin{;}M^*(\Pi_a,\Pi_a) \tag{$\Pi_a\mathbin{;}\Pi_b\equiv\Pi_b$} \\
  &\equiv\Pi_a\mathbin{;}M^*(\Pi_a,\Pi_a)\mathbin{;}\Pi_a \tag{$\Pi_a\mathbin{;}w\equiv w\mathbin{;}\Pi_a$} 
\end{align*}
The last step we can do because $\Pi_a\not\equiv\pfalse$ and $M^*(\Pi_a,\Pi_a)$ is a state program. We now have proved~\eqref{nummer1}.

Next we show that
\begin{equation}\label{nummer3}\Pi_a\mathbin{;}\sum_{R\in Q}(M^*(\Pi_a,R)\mathbin{;}R)\mathbin{;}p \leqq  \Pi_a\mathbin{;}\sum_{R\in Q}M^*(\Pi_a,R)\mathbin{;}R
\end{equation}

We first apply distributivity and the induction hypothesis to obtain
\[
\Pi_a\mathbin{;}\sum_{R\in Q}\left(M^*\left(\Pi_a,R\right)\mathbin{;}R\right)\mathbin{;}p\equiv
\Pi_a\mathbin{;}\sum_{R\in Q}\left(M^*\left(\Pi_a,R\right)\mathbin{;}R\mathbin{;}\sum_{k\in K}\left(u_k\mathbin{;}\Pi_{b_k}\right)\right)
\]

In order to prove~\eqref{nummer3}, it thus suffices to show that for all $R$ and all $k$ we have
\[
\Pi_a\mathbin{;}M^*(\Pi_a,R)\mathbin{;}R\mathbin{;}u_k\mathbin{;}\Pi_{b_k}\leqq \Pi_a\mathbin{;}\sum_{R\in Q}M^*(\Pi_a,R)\mathbin{;}R
\]
If $R\equiv\pfalse$, the result follows immediately. If $R\not\equiv\pfalse$, we observe that $u_k\leqq M(R, \Pi_{b_k})$ (because $R\mathbin{;}p \stackrel
{\textrm{IH}}\equiv R\mathbin{;}\sum_{k\in K}(u_k\mathbin{;}\Pi_{b_k})$ and $u_k \mathbin{;}\Pi_{b_k}$ is one of the entries of that sum).
Then we derive
\begin{align*}
  &\Pi_a\mathbin{;}M^*(\Pi_a,R)\mathbin{;}R\mathbin{;}u_k\mathbin{;}\Pi_{b_k}\\
  &\leqq \Pi_a\mathbin{;}M^*(\Pi_a,R)\mathbin{;}R\mathbin{;}M(R, \Pi_{b_k})\mathbin{;}\Pi_{b_k} \\
  &\equiv \Pi_a\mathbin{;}M^*(\Pi_a,R)\mathbin{;}M(R, \Pi_{b_k})\mathbin{;}R\mathbin{;}\Pi_{b_k} \tag{$\Pi_a\mathbin{;}w\equiv w\mathbin{;}\Pi_a$}\\ 
  &\equiv \Pi_a\mathbin{;}M^*(\Pi_a,R)\mathbin{;}M(R, \Pi_{b_k})\mathbin{;}\Pi_{b_k} \tag{$\Pi_a\mathbin{;}\Pi_b\equiv \Pi_b$} 
\end{align*}

From the fact that matrices over a Kleene algebra form a Kleene algebra we know that $M^*\mathbin{;} M\leqq M^*$.
Hence \begin{equation}\label{nummer4}(M^*\mathbin{;}M)(\Pi_a,\Pi_{b_k})\leqq M^*(\Pi_a,\Pi_{b_k})\end{equation}
We also know that
\begin{equation}\label{nummer5}
(M^*\mathbin{;}M)(\Pi_a,\Pi_{b_k}) =\sum_{S\in Q}M^*(\Pi_a,S)\mathbin{;}M(S,\Pi_{b_k})
\end{equation}
Putting~\eqref{nummer4} and~\eqref{nummer5} together we obtain
\[
M^*(\Pi_a,R)\mathbin{;}M(R,\Pi_{b_k})\leqq M^*(\Pi_a,\Pi_{b_k})
\]
From this we obtain
\begin{equation}\label{nummer6}\Pi_a\mathbin{;}M^*(\Pi_a,R)\mathbin{;}M(R, \Pi_{b_k})\mathbin{;}\Pi_{b_k} \leqq \Pi_a\mathbin{;}M^*(\Pi_a, \Pi_{b_k})\mathbin{;}\Pi_{b_k}
\end{equation}

As we already had that
\begin{equation}\label{nummer7}
\Pi_a\mathbin{;}M^*(\Pi_a,R)\mathbin{;}R\mathbin{;}u_k\mathbin{;}\Pi_{b_k} \leqq \Pi_a\mathbin{;}M^*(\Pi_a,R)\mathbin{;}M(R, \Pi_{b_k})\mathbin{;}\Pi_{b_k}
\end{equation}
we can now conclude from~\eqref{nummer6} and~\eqref{nummer7} that
\[\Pi_a\mathbin{;}M^*(\Pi_a,R)\mathbin{;}R\mathbin{;}u_k\mathbin{;}\Pi_{b_k} \leqq \Pi_a\mathbin{;}M^*(\Pi_a, \Pi_{b_k})\mathbin{;}\Pi_{b_k} \leqq
\Pi_a\mathbin{;}\sum_{R\in Q}M^*(\Pi_a,R)\mathbin{;}R\]
which concludes the proof of~\eqref{nummer3}. Then~\eqref{nummer1} and~\eqref{nummer3} together lead to~\eqref{nummer2}. This concludes the $\leqq$-direction.

For the other direction ($\geqq$) we need to prove that
\[
 \Pi_a \mathbin{;} p^* \geqq \Pi_a\mathbin{;}\sum_{R\in Q}M^*(\Pi_a,R)\mathbin{;}R
\]

Take a vector $V$ with $V(R)=R$ for all $R\in Q$ and a vector $V'$ with $V'(R)=R\mathbin{;}p^*$ for all $R\in Q$.
We claim that if $M^*\mathbin{;}V\leqq V'$, we obtain the desired result immediately. The reasoning for that is as follows. If $M^*\mathbin{;}V\leqq V'$, then in particular
$(M^*\mathbin{;}V)(\Pi_a)\leqq V'(\Pi_a)$. Writing out both of these terms, we see:
\[
(M^*\mathbin{;}V)(\Pi_a) = \sum_{S\in Q}M^*(\Pi_a,S)\mathbin{;}V(S)=\sum_{S\in Q}M^*(\Pi_a,S)\mathbin{;}S
\]
and $V'(\Pi_a)=\Pi_a\mathbin{;}p^*$.
Hence, if $M^*\mathbin{;}V\leqq V'$, then $\sum_{S\in
Q}M^*(\Pi_a,S)\mathbin{;}S\leqq \Pi_a\mathbin{;}p^*$.
From that we derive
\begin{align*}
 \Pi_a \mathbin{;} p^* &\equiv \Pi_a \mathbin{;}\Pi_a \mathbin{;} p^* \tag{$\Pi_a\mathbin{;}\Pi_b\equiv \Pi_b$}\\ 
 &\geqq \Pi_a \mathbin{;} \sum_{S\in
 Q}M^*(\Pi_a,S)\mathbin{;}S
\end{align*}
This is exactly what we needed to prove.

We now prove that indeed $M^*\mathbin{;}V\leqq V'$. We use one of the fixpoint axioms. We need to show that $V+M\mathbin{;}V'\leqq V'$. We prove this pointwise for all $R\in Q$.
We first show that $V(R)\leqq V'(R)$. This follows immediately as $V(R)=R$ and $V'(R)=R\mathbin{;}p^*$.
Now we need to prove that for any $R\in Q$ we have
$(M\mathbin{;}V')(R)\leqq V'(R)$.
We derive
\[
(M\mathbin{;}V')(R) = \sum_{S\in Q}M(R,S)\mathbin{;}V'(S)=\sum_{S\in Q}M(R,S)\mathbin{;}S\mathbin{;}p^*
\]
Thus in order to prove $(M\mathbin{;}V')(R)\leqq V'(R)$, we need to show that for all $S\in Q$ we have
\[
M(R,S)\mathbin{;}S\mathbin{;}p^* \leqq R\mathbin{;}p^*
\]
If $R\equiv\pfalse$, we are done immediately as $M(R,S)=\pfalse$ in that case. If $R\not\equiv\pfalse$, we have
\[
M(R,S)\mathbin{;}S\mathbin{;}p^* = \sum u_k\mathbin{;}S\mathbin{;}p^*
\]
for $S\equiv\Pi_{b_k} $ and $R \mathbin{;}u_k\mathbin{;}S\leqq R\mathbin{;}p$.
Thus we need to prove that for all $u_k$ such that $S\equiv\Pi_{b_k} $ and $R \mathbin{;}u_k\mathbin{;}S\leqq R\mathbin{;}p$ we have
\[
u_k\mathbin{;}S\mathbin{;}p^* \leqq R\mathbin{;}p^*
\]
We derive
\begin{align*}
  u_k\mathbin{;}S\mathbin{;}p^* &\equiv   u_k\mathbin{;}R\mathbin{;}S\mathbin{;}p^* \tag{$\Pi_a\mathbin{;}\Pi_b\equiv \Pi_b$} \\ 
  &\equiv R\mathbin{;}u_k\mathbin{;}S\mathbin{;}p^* \tag{$\Pi_a\mathbin{;}w\equiv w\mathbin{;}\Pi_a$ and $R\not\equiv\pfalse$} \\ 
  &\leqq R\mathbin{;}p \mathbin{;}p^*\leqq R\mathbin{;}p^*
\end{align*}

\end{proofEnd}
\begin{proof}[Sketch]
  The proof proceeds by induction on the structure of $p$. For instance, for an assignment $\modify f n$, where we take $\Pi_a = \Vert_{k\in K}\pi_k$ for some non-empty finite index set $K$ and complete assignments $\pi_k$, we derive
  \begin{align*}
  \Pi_a \mathbin{;} \modify f n &\equiv \Pi_a \mathbin{;}\Pi_a \mathbin{;} \modify f n & \tag{$\Pi_a \mathbin {;}\Pi_b\equiv \Pi_b$}\\ 
  &= \Pi_a \mathbin{;} (\Vertt\limits_{k\in K}\pi_k )\mathbin{;} \modify f n \\
  &\equiv \Pi_a \mathbin{;} \Vertt\limits_{k\in K}(\pi_k \mathbin{;} \modify f n ) \tag{$\pseq {(p \parallel q)} x \equiv (\pseq p x)\parallel (\pseq q x)$} \\ 
  &\equiv \Pi_a \mathbin{;} \skp \mathbin{;} \Vertt\limits_{k\in K}\pi_k' & \tag{$p\mathbin{;}\skp \equiv p$} 
  \end{align*}
  where $\pi'_{k}$ is $\pi_k$ with the assignment for $f$ replaced by $\modify f n$. If $K=\emptyset$ then $\Pi_a\equiv\pfalse$ and the equivalence above follows immediately.
  The most difficult case is the star; we use an argument that relies on the fact that matrices over a Kleene algebra form a Kleene algebra~\cite{kozen94}.
  A proof can be found in
  \ifarxiv%
  \cref{app:completeness}.
  \else%
  the full version of this article~\cite[Appendix~D]{fullversion}
  \fi%
\end{proof}
\smallskip

\textbf{Step 2: Completeness for $\Pi_a$-shaped programs}
As mentioned, $\Pi_a$-shaped programs are syntactic representations of packet sets. We prove that if two such programs result in the same set of packets on any non-empty input, they are provably equivalent, using that $\Pi_a$ describes a unique set of packets.
\begin{textAtEnd}[allend, category=completeness]
  The following lemma rewrites the elements in a sum such that the packet elements are all unique.
\begin{lemma}\label{lemma:unique-in-sum}
Let $J$ be a finite index set, $u_j$ state programs and $\Pi_{b_j}$  finite parallels of complete assignments for some sets of packets $b_j$. Then any expression $\sum_{j\in J}\left(\pseq {u_j} {\Pi_{b_j}}\right)$ is provably equivalent
  to an expression $\sum_{j'\in J'}\left(\pseq {u_{j'}} {\Pi_{b_{j'}}}\right)$ for finite index set $J'$, state programs $u_{j'}$ and finite parallels of complete assignments $\Pi_{b_{j'}}$ such that for all
  $x,y\in J'$, it holds that $\Pi_{b_x}\not\equiv
\Pi_{b_y}$.
\end{lemma}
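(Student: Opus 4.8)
The plan is to group the summands according to the provable-equivalence class of their packet component, and then merge each group into a single summand by distributivity. First I would define an equivalence relation $\sim$ on the index set $J$ by declaring $j \sim j'$ whenever $\Pi_{b_j} \equiv \Pi_{b_{j'}}$. Since $J$ is finite, this partitions $J$ into finitely many classes; choosing one representative index per class yields a finite set $J'$, and by construction the representative packet components $\Pi_{b_{j'}}$ are pairwise non-equivalent, which is exactly the condition required in the conclusion (read for distinct $x,y\in J'$).

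Next, for a fixed class $C\subseteq J$ with representative $j'$, every $j\in C$ satisfies $\Pi_{b_j}\equiv \Pi_{b_{j'}}$, so by congruence $\pseq{u_j}{\Pi_{b_j}}\equiv \pseq{u_j}{\Pi_{b_{j'}}}$. Summing over $C$ and applying the distributivity axiom $\pseq{(p+q)}{r}\equiv \pseq p r + \pseq q r$ repeatedly, together with associativity, commutativity and idempotence of $+$, I obtain
\[
  \sum_{j\in C}\left(\pseq{u_j}{\Pi_{b_j}}\right) \equiv \pseq{\Big(\sum_{j\in C} u_j\Big)}{\Pi_{b_{j'}}}.
\]
Setting $u_{j'}:=\sum_{j\in C}u_j$, which is again a state program because $\mathcal{T}_{\mathsf{state}}$ is closed under $+$, and then summing over all classes (using commutativity and associativity of $+$ to regroup the original sum along the partition) recovers the original expression as $\sum_{j'\in J'}\left(\pseq{u_{j'}}{\Pi_{b_{j'}}}\right)$ in the desired form.

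The argument is essentially bookkeeping, so I do not expect a genuine obstacle. The only points requiring care are that each merged coefficient $u_{j'}$ is still a state program (guaranteed by closure of $\mathcal{T}_{\mathsf{state}}$ under $+$) and that the chosen representatives give pairwise non-equivalent packet parts, which holds by the very definition of $\sim$. I would carry out the regrouping as a formal induction on $|J|$ if a fully rigorous treatment is wanted: pick any two indices $x\neq y$ with $\Pi_{b_x}\equiv\Pi_{b_y}$, merge them into a single summand by the displayed derivation, and repeat until no such pair remains, noting that each merge strictly decreases the number of summands.
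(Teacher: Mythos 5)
Your proposal is correct and follows essentially the same route as the paper's proof: both merge summands whose packet components are provably equivalent by using congruence together with the distributivity axiom $\pseq{(p+q)}{r}\equiv \pseq p r + \pseq q r$ and associativity/commutativity of $+$, relying on closure of state programs under $+$ for the merged coefficients. The paper organizes this as an induction on $|J|$ (merging one summand at a time into an already-deduplicated sum), which is precisely the rigorous variant you sketch at the end, so the difference is purely presentational.
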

\begin{proof}
  We prove this by induction on the size of $J$. If $J=\emptyset$, we are done immediately. If $\mid J\mid = n+1$, we let $K$ be $J\setminus{\{l\}}$ for $l\in J$. We derive
  \begin{align*}
\sum_{j\in J}\left(\pseq {u_j} {\Pi_{b_j}}\right) &\equiv \sum_{j\in K}\left(\pseq {u_j} {\Pi_{b_j}}\right) + u_l \mathbin{;}\Pi_{b_l} \\
&\stackrel {\textrm{IH}}\equiv \sum_{j'\in K'}\left(\pseq {u_{j'}} {\Pi_{b_{j'}}}\right) + u_l \mathbin{;}\Pi_{b_l}
  \end{align*}
where for all $k, k'\in K'$ we have that $\Pi_{b_k}\not\equiv
\Pi_{b_{k'}}$ via the induction hypothesis.
If there does not exists $k\in K'$ such that $\Pi_{b_l}\equiv \Pi_{b_k}$, we are immediately done. If such a $k$ does exist, we know via the induction hypothesis there is only one.
We derive
\begin{align*}
\sum_{j'\in K'}\left(\pseq {u_{j'}} {\Pi_{b_{j'}}}\right) + u_l
\mathbin{;}\Pi_{b_l} &\equiv \sum_{j'\in K'\setminus{\{k\}}}\left(\pseq {u_{j'}}
{\Pi_{b_{j'}}}\right) +u_k \mathbin{;}\Pi_{b_k} +u_l \mathbin{;}\Pi_{b_l}\\
&\equiv \sum_{j'\in K'\setminus{\{k\}}}\left(\pseq {u_{j'}}
{\Pi_{b_{j'}}}\right) +(u_k +u_l)\mathbin{;}\Pi_{b_k}
\end{align*}
Hence, for all $l,l'\in K'$ it is the case that $\Pi_{b_l}\not\equiv
\Pi_{b_{l'}}$ and
\[\sum_{j\in J}\left(\pseq {u_j} {\Pi_{b_j}}\right)\equiv \sum_{j'\in K'\setminus{\{k\}}}\left(\pseq {u_{j'}}
{\Pi_{b_{j'}}}\right) +(u_k +u_l)\mathbin{;}\Pi_{b_k}\]
Because $(u_k+u_l)$ is a state program, this proves the required result.
\end{proof}

The next lemma states some basic facts about the semantics of state and packet programs, and their sequential composition.
\begin{lemma}\label{lemma:tinydetails}
  Let $s\in\mathcal{T}_{\mathsf{state}}$, $x\in\mathcal{T}_{\mathsf{packet}}$, and $a\in 2^{\Pk}$.
  Then we have
  \begin{enumerate}
    \item\label{soannoying} $\closure[]{\sem{s\mathbin{;}x}}(a)=\{\upom\cdot b \mid \upom\in\closure[\hcontr\cup\hexch] { \{\vpom\}},\vpom\cdot
  a\in\sem{s}(a),\onepom\cdot b\in \sem{x}(a)  \}$.
  \item\label{detpacket} If $x\in\mathcal{T}_{\mathsf{det-packet}}$, we have  $\closure[]{\sem{x}}(a)=\sem{x}(a)=\{1\cdot b\}$ for some $b\in 2^{\Pk}$.
\end{enumerate}
\end{lemma}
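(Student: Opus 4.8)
The plan is to treat the two items largely independently, deriving each from the structural results already established, and to prove item 2 first since it is the simpler, self-contained statement. Item 2 is essentially a determinism claim, so I would prove $\sem{x}(a)=\{\onepom\cdot b\}$ for some $b$ by induction on the grammar of $\mathcal{T}_{\mathsf{det-pack}}$. For the base cases, a test $t\in\BB$ gives $\sem{t}(a)=\{\onepom\cdot\bsem{t}(a)\}$ and a modification gives $\sem{\modify f n}(a)=\{\onepom\cdot a(f\gets n)\}$, both singletons with pomset component $\onepom$. For $x\mathbin{;}y$ and $x\parallel y$ I would apply the induction hypothesis to obtain $\sem{x}(a)=\{\onepom\cdot b\}$ and the corresponding singleton for $y$, then unfold the semantics of sequencing and parallel from \cref{fig:semantics}; since $\onepom\cdot\onepom=\onepom$ and $\onepom\parallel\onepom=\onepom$, the result is again a singleton with pomset component $\onepom$. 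Because the only pomset appearing is $\onepom$ and $\closure[\hcontr\cup\hexch]{\{\onepom\}}=\{\onepom\}$, the closure of \cref{def:closed-semantics} acts trivially, giving $\closure[]{\sem{x}}(a)=\sem{x}(a)$.

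For item 1, the starting point is property~\ref{property:seq} of \cref{lemma:basicfacts2}, instantiated with $s$ and $x$:
\[\closure[]{\sem{s\mathbin{;}x}}(a)=\{\wpom\cdot b\mid \upom\cdot a'\in\closure[]{\sem{s}}(a),\ \vpom\cdot b\in\closure[]{\sem{x}}(a'),\ \wpom\in\closure[\hexch\cup\hcontr]{\{\upom\cdot\vpom\}}\}.\]
I would then simplify each ingredient. By \cref{lemma:statepacketterms} every element of $\sem{s}(a)$, and hence of $\closure[]{\sem{s}}(a)$, keeps the packet component equal to $a$, so $a'=a$. Again by \cref{lemma:statepacketterms}, every element of $\sem{x}(a')$ has pomset component $\onepom$; consequently $\closure[]{\sem{x}}(a')=\sem{x}(a')$ and $\vpom=\onepom$, so that $\upom\cdot\vpom=\upom$. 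This collapses the formula to elements $\wpom\cdot b$ with $\upom\cdot a\in\closure[]{\sem{s}}(a)$, $\onepom\cdot b\in\sem{x}(a)$, and $\wpom\in\closure[\hexch\cup\hcontr]{\{\upom\}}$.

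The remaining task is to replace $\closure[]{\sem{s}}(a)$ by the unclosed $\sem{s}(a)$ in the target shape. Unfolding \cref{def:closed-semantics}, $\upom\cdot a\in\closure[]{\sem{s}}(a)$ means there is $\vpom\cdot a\in\sem{s}(a)$ with $\upom\in\closure[\hcontr\cup\hexch]{\{\vpom\}}$; by monotonicity of closure (property~\ref{property:hypothesis-monotone} of \cref{lemma:basicfacts}) this yields $\closure[\hexch\cup\hcontr]{\{\upom\}}\subseteq\closure[\hexch\cup\hcontr]{\{\vpom\}}$, so every $\wpom\in\closure[\hexch\cup\hcontr]{\{\upom\}}$ lies in $\closure[\hexch\cup\hcontr]{\{\vpom\}}$, giving the $\subseteq$ inclusion into the claimed set. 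For the reverse inclusion, given $\wpom\cdot b$ with $\wpom\in\closure[\hcontr\cup\hexch]{\{\vpom\}}$, $\vpom\cdot a\in\sem{s}(a)$ and $\onepom\cdot b\in\sem{x}(a)$, I would take $\upom:=\vpom$ (using $\sem{s}(a)\subseteq\closure[]{\sem{s}}(a)$ by definition) and $\vpom:=\onepom$ in the formula above, so that $\upom\cdot\vpom=\vpom$ and the side condition is exactly the hypothesis.

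I do not anticipate a serious obstacle: the proof is a bookkeeping exercise hinging on two clean facts from \cref{lemma:statepacketterms} — that state programs leave the packet set at $a$, and that packet programs contribute only the empty pomset $\onepom$ — so that the generic sequential-composition closure formula degenerates to the stated form. The one point requiring care is the $a=\emptyset$ case, where every program returns $\{\onepom\cdot\emptyset\}$; there both sides reduce to $\{\onepom\cdot\emptyset\}$, so it is handled uniformly, but I would verify that it does not interfere with the applications of \cref{lemma:statepacketterms}.
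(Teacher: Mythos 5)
Your proof is correct, and item~2 follows the paper's own argument (structural induction, using that deterministic packet programs never touch the pomset component and never produce more than one output pair). For item~1, however, you take a genuinely different, somewhat more roundabout route than the paper. The paper never introduces the closed semantics of the components at all: it simply unfolds \cref{def:closed-semantics} on $\sem{s\mathbin{;}x}(a)$, so that elements are $\upom\cdot b$ with $\vpom\cdot b\in\sem{s\mathbin{;}x}(a)$ and $\upom\in\closure[\hcontr\cup\hexch]{\{\vpom\}}$, then applies \cref{lemma:statepacketterms} to force $a'=a$ and the packet-side pomset to be $\onepom$, whence $\vpom$ is exactly a pomset from $\sem{s}(a)$ and the stated set falls out in two lines. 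You instead invoke property~\ref{property:seq} of \cref{lemma:basicfacts2}, which phrases sequential composition in terms of the \emph{closed} component semantics $\closure[]{\sem{s}}(a)$ and $\closure[]{\sem{x}}(a')$; this forces you to do extra work afterwards to strip those closures back off, namely the two-inclusion argument using monotonicity of the closure together with its idempotence (the step from $\upom\in\closure[\hcontr\cup\hexch]{\{\vpom\}}$ to $\closure[\hcontr\cup\hexch]{\{\upom\}}\subseteq\closure[\hcontr\cup\hexch]{\{\vpom\}}$ is exactly the reasoning the paper isolates as item~\ref{closure-com} of \cref{lemma:basicfacts2}, so it is available, but it is an extra moving part). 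Both arguments are sound and rest on the same two collapsing facts from \cref{lemma:statepacketterms}; the paper's direct unfolding is shorter and avoids any appeal to the closure-distribution machinery, while your version has the mild virtue of reusing a general-purpose lemma rather than the raw definition. Your explicit check of the $a=\emptyset$ case is harmless but unnecessary, since \cref{lemma:statepacketterms} and \cref{lemma:basicfacts2} are stated for all $a\in 2^{\Pk}$.
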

\begin{proof}
  For the first item (\ref{soannoying}) we derive
  \begin{align*}&\closure[]{\sem{s\mathbin{;}x}}(a)\\&=\{\upom\cdot b \mid \vpom \cdot b \in \sem{s\mathbin{;}x}(a), \upom\in\closure[\hcontr\cup\hexch] { \{\vpom\}} \}\\
    &=\{\upom\cdot b \mid \vpom \cdot b \in \{(\upom'\cdot\vpom')\cdot b'\mid \upom'\cdot a' \in \sem{s}(a), \vpom'\cdot b'\in \sem{x}(a')\}, \\&\upom\in\closure[\hcontr\cup\hexch] { \{\vpom\}} \}
\end{align*}
Because $s$ is a state program, we know that $a'=a$, and because $x$ is a packet program, we know that $\vpom'=\onepom$.
Hence
\begin{align*}
  &\closure[]{\sem{s\mathbin{;}x}}(a)\\&=
\{\upom\cdot b \mid \vpom \cdot b \in \{(\upom'\cdot\vpom')\cdot b'\mid \upom'\cdot a' \in \sem{s}(a), \vpom'\cdot b'\in \sem{x}(a')\},\\& \upom\in\closure[\hcontr\cup\hexch] { \{\vpom\}} \}\\
&=
\{\upom\cdot b \mid \vpom \cdot a\in \sem{s}(a), \onepom\cdot b\in \sem{x}(a), \upom\in\closure[\hcontr\cup\hexch] { \{\vpom\}} \}
\end{align*}

The second item (\ref{detpacket}) can be proven with induction on the structure of $x$. The result follows easily because packet programs do not influence the state pomset (\cref{lemma:statepacketterms}), and $\parallel$ and $\mathbin{;}$ do not introduce multiple elements in the output.
\end{proof}
\end{textAtEnd}

\begin{textAtEnd}[allend,category=completeness]
\begin{lemma}\label{lemma:complete-assignments}
Let $a\in 2^{\Pk}_\nempty$ and let $\pi,\pi'$ be complete assignments. If $\closure[]{\sem{\pi}}(a)=\closure[]{\sem{\pi'}}(a)$ then $\pi\equiv \pi'$.
\end{lemma}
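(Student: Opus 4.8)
The plan is to evaluate both sides explicitly and reduce the claim to the elementary fact that a complete assignment is uniquely determined by the single packet it produces.

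First I would observe that a complete assignment $\pi = \modify{f_1}{n_1} \mathbin{;} \cdots \mathbin{;} \modify{f_k}{n_k}$ is a deterministic packet program, i.e.\ $\pi \in \mathcal{T}_{\mathsf{det-pack}}$, being a sequence of field modifications. Hence by \cref{lemma:statepacketterms} every element of $\sem{\pi}(a)$ has the shape $\onepom \cdot b$, and by the determinism statement in \cref{lemma:tinydetails} the set $\sem{\pi}(a)$ is in fact a singleton $\{\onepom \cdot b_\pi\}$ whose pomset component is the empty pomset $\onepom$. Since $\onepom$ is minimal for both $\sqsubseteq$ and $\preceq$, the closure acts trivially on it, so $\closure[]{\sem{\pi}}(a) = \{\onepom \cdot b_\pi\}$, and symmetrically $\closure[]{\sem{\pi'}}(a) = \{\onepom \cdot b_{\pi'}\}$.

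Next I would identify $b_\pi$ by unfolding the semantics along the sequence, using $\sem{\modify{f_i}{n_i}}(c) = \{\onepom \cdot c(f_i \gets n_i)\}$ together with sequential composition. Because $\pi$ overwrites every field $f_i$ with $n_i$, for any non-empty input $a$ each packet of $a$ is mapped to the same record $\pk_\pi$ satisfying $\pk_\pi(f_i) = n_i$ for all $i$; as the output is a set, $b_\pi = \{\pk_\pi\}$, a singleton depending only on the tuple $\overline{n} = n_1, \dots, n_k$ and not on $a$ (here $a \in 2^{\Pk}_\nempty$ is essential, since the empty input would yield $\{\onepom \cdot \emptyset\}$ for every program). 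From the hypothesis $\closure[]{\sem{\pi}}(a) = \closure[]{\sem{\pi'}}(a)$ we then obtain $\{\onepom \cdot \{\pk_\pi\}\} = \{\onepom \cdot \{\pk_{\pi'}\}\}$, hence $\pk_\pi = \pk_{\pi'}$. Since a packet is a record over all fields, this forces $n_i = n_i'$ for every $i$, so $\pi$ and $\pi'$ arise from the same tuple. By the one-to-one correspondence between complete assignments and their defining tuples recorded in \cref{sec:theory}, $\pi$ and $\pi'$ are literally the same expression, and therefore $\pi \equiv \pi'$ by reflexivity of $\equiv$.

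I expect no genuine obstacle here: the lemma is essentially bookkeeping. The only points requiring a little care are the explicit evaluation of the complete assignment, showing its output is a singleton packet set independent of the non-empty input, and the observation that the trivial pomset $\onepom$ is a fixed point of the closure; once these are settled the conclusion is immediate from the injectivity of the map $\pi \mapsto \pk_\pi$. (One could alternatively route through the conservativity theorem relating $\sem{-}$ to the \netkat semantics and invoke \netkat completeness for complete assignments, but the direct computation above is shorter and self-contained.)
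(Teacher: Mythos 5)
Your proof is correct and follows essentially the same route as the paper's: evaluate the closed semantics of a complete assignment on a non-empty input to the single pair $\onepom\cdot\{\pk_\pi\}$, deduce that the two assignments give every field the same value, and conclude provable equivalence. The only (harmless) difference is the final step, where the paper invokes commutativity of independent assignments (which also covers complete assignments written with the fields in a different order), while you appeal to the fixed field order in the definition of complete assignments and conclude by reflexivity of $\equiv$ --- both are legitimate under the paper's conventions.
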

\begin{proof}
As $\pi$ and $\pi'$ are complete assignments, they assign a value to each field. Because $\closure[]{\sem{\pi}}(a)=\{\onepom\cdot\{x\}\}=\closure[]{\sem{\pi'}}(a)$ for some packet $x$, both $\pi$ and $\pi'$ must assign to each field the same value. Using commutativity of independent assignments, we obtain that $\pi\equiv\pi'$.
\end{proof}
\end{textAtEnd}

\begin{theoremEnd}[default,category=completeness]{lemma}\label{lemma:partial-completeness-sets}
  Let $a\in 2^{\Pk}_\nempty$, and $b,c\in2^{\Pk}$.  If
  $\closure[]{\sem{\Pi_b}}(a)=\closure[]{\sem{\Pi_c}}(a)$ then
  $\Pi_b\equiv \Pi_c$.
\end{theoremEnd}
\begin{proofEnd}
  Take finite parallels of complete
  assignments $\Vert_{i\in I}\pi_i$ and $\Vert_{j\in J}\pi'_j$ such that $\Vert_{i\in I}\pi_i\equiv \Pi_b$ and $\Vert_{j\in J}\pi'_j\equiv\Pi_c$.
  Let $\semp{-}:C\to \Pk$, where $C$ is the set of complete assignments, be
  defined as $\semp{\pi}:=x$ for $\sem{\pi}(b)=\{\onepom\cdot\{x\}\}$ for some non-empty set of packets $b$ (this is independent of the choice of $b$). The function $\semp{-}$ is surjective, and
  thus has at least one right inverse. We denote it with $r : \Pk\rightarrow
  C$. Hence, we have $\semp{r(x)}=x$.
We now show that
 \[\closure[]{\sem{
 \Vertt\limits_{i\in I}\pi_i}}(a) =  \{\onepom \cdot \{\semp{\pi_i}\mid i\in I\}\}
 \]
  Note that for a complete assignment and non-empty set of packets $\sem{\pi_i}(a)=\{\onepom\cdot \{x\}\}$ for some $x\in\Pk $.
  We derive
\begin{align*}   \closure[]{\sem
{\Vertt\limits_{i\in I}\pi_i}}(a) &=\sem{
    \Vertt\limits_{i\in I}\pi_i}(a)\\ & = \{\onepom \cdot \cup_{i\in I} \{c_i \mid \sem{\pi_i}(a)=\{\onepom\cdot c_i\}\}\} \\
  &= \{\onepom \cdot \{x \mid i\in I, \sem{\pi_i}(a)=\{\onepom\cdot \{ x \}\}\}\} \\
&= \{\onepom\cdot \{\semp{\pi_i}\mid i\in I\}\}\end{align*}

  Then we obtain using soundness that
  \begin{align*}
  \closure[]{\sem{\Pi_b}}(a)
    &= \closure[]{\sem{\Vertt\limits_{i\in I}\pi_i}}(a) \\
    &= \{\onepom \cdot \{\semp{\pi_i}\mid i\in I\}\} \\
    &= \{\onepom \cdot \{\semp{\pi_j'}\mid j\in J\}\} \\
    &=\closure[]{\sem{\Pi_c}}(a)
  \end{align*}
    As $\semp{\pi_i}$ and $\semp{\pi_j'}$ are always just a packet for all $i\in I$ and $j\in J$,  we can conclude from associativity, idempotence and commutativity of the parallel on deterministic packet programs that
    \[\Vertt\limits_{i\in I} r(\semp{\pi_i}) \equiv \Vertt\limits_{j\in J} r(\semp{\pi_j'})
      \]

      As $r$ is a right inverse we can derive
      \[\semp{r(\semp{\pi_i}
      )}=\semp{\pi_i}
        \]
        Using \cref{lemma:complete-assignments}, we obtain that $r(\semp{\pi_i}) \equiv \pi_i$.
        Hence, we can conclude that
        \[\Pi_b\equiv
        \Vertt\limits_{i\in I} \pi_i\equiv \Vertt\limits_{i\in I} r(\semp{\pi_i})\equiv \Vertt\limits_{j\in J} r(\semp{\pi_j'}) \equiv \Vertt\limits_{j\in J} \pi_j'\equiv \Pi_c \qedhere
            \]
\end{proofEnd}
%
\smallskip

\textbf{Step 3: Completeness of sums in the normal form}
%
\begin{textAtEnd}[allend, category=completeness]
We now slightly generalize the completeness proof from~\cite{pocka}.
\begin{lemma}\label{pockaupdate}
Take the \POCKA terms over alphabet $\OO\cup\Act\cup 2^{\Pk}$, let $s,v$ be such terms, and extend
\cref{def:pockasem} with $\llparenthesis a\rrparenthesis = \{a\}$ for $a\in
2^{\Pk}$. If $\sem{s}_{\textrm{\POCKA}}=\sem{v}_{\textrm{\POCKA}}$, then
$s\equiv_{\textrm{\POCKA}} v$.
\end{lemma}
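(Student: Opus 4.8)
The plan is to follow the completeness argument for \POCKA\ from~\cite{pocka} essentially verbatim, after observing that the packet-set symbols $a \in 2^{\Pk}$ behave as additional atomic, action-like generators. Three features make them unproblematic. First, their denotation $\llparenthesis a \rrparenthesis = \{a\}$ is a single-node pomset, so each is an atomic event just like an action in $\Act$. Second, they carry no lattice structure: they are not observations, hence they do not participate in the PCDL axioms or in the contraction order. Third, and crucially, they are \emph{not} subject to the interface axioms $\pseq{\top}{e}\leqqpocka e$ and $\pseq{e}{\top}\leqqpocka e$ that force genuine actions to absorb adjacent states; this is exactly why their denotation is the bare $\{a\}$ rather than the padded $\State^*\odot\{a\}\odot\State^*$ used for $e\in\Act$. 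Since $s$ and $v$ are finite terms, only finitely many such letters occur in them, so we may work over a finite alphabet as required for pomsets.

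With this in place, I would revisit each step of the \POCKA\ completeness proof and check that it is insensitive to which atomic letters are present. The decomposition of a term into normal form, the matrix/Kleene construction eliminating the star, and the final reduction to the completeness of \CKA~\cite{cka} together with the completeness of the observation algebra~\cite{pocka} all treat the action alphabet as an opaque parameter; enlarging it by the generators in $2^{\Pk}$ therefore leaves them intact. The only place where the observation/action split is used is the contraction closure $\hcontr$, which collapses consecutive \emph{observations} only. As the packet-set letters are not observations, $\hcontr$ never merges them and treats them exactly as it treats ordinary actions, so the closed pomset language of a term over $\State\cup\Act\cup 2^{\Pk}$ is computed by the same rules, with the new letters appearing as ordinary non-collapsible events.

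The argument then runs as follows: by \cref{def:pockasem}, $\sem{s}_{\textrm{\POCKA}}$ and $\sem{v}_{\textrm{\POCKA}}$ are already the closed pomset languages of $s$ and $v$, so the hypothesis gives equality of these languages over the extended alphabet; feeding them to the adapted \POCKA\ completeness proof yields a derivation of $s\equiv_{\textrm{\POCKA}} v$ from the \POCKA\ axioms, none of which is specialized to a particular action letter. I expect the main obstacle to be bookkeeping rather than a new idea: one must confirm that every lemma of~\cite{pocka} whose statement quantifies over the action alphabet --- in particular the characterization of the closure and the base cases of the normal-form induction --- survives the replacement of $\Act$ by $\Act\cup 2^{\Pk}$, and that no step tacitly assumed that every atomic letter absorbs $\top$. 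Since the packet-set letters satisfy strictly fewer interface axioms than actions, their presence can only make terms easier to tell apart, so both soundness of the axioms and faithfulness of the model are preserved under the extension.
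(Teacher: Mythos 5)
Your proposal is correct and takes essentially the same approach as the paper: the paper likewise extends the \POCKA{} completeness machinery of~\cite{pocka} (the reduction homomorphisms and the closure under the $\mathsf{top}$ hypotheses) to the enlarged alphabet, with the letters $a\in 2^{\Pk}$ handled as atomic, action-like generators that are neither observations nor padded by $\State^*$ --- exactly your key observation that they do not absorb $\top$, which in the paper appears as the fact that these letters do not occur in $\mathsf{top}$, so each extended lemma acquires only a trivial new base case.
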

\begin{proof}
  We need to extend the result of~\cite[Theorem 4.6]{pocka}. This result is built on Lemma 4.1, Lemma 4.2, Lemma 4.3 and Lemma 4.4 from the same paper, which we need to extend first. The homomorphism $\hat{r}$ gets an extra clause and is defined instead as the homomorphic extension of
  \[
    r(a) =
    \begin{cases}
      \sum_{\alpha \leq_{\OO} a} \alpha & a \in \OO\\
      a & a \in \Act\cup 2^{\Pk}
    \end{cases}
\]
Lemma 4.1 from~\cite{pocka}, which states that for all \POCKA programs $e$, we have $e\equiv_{\textrm{\POCKA}} \hat{r}(e)$, can be proven in the exact same way as in~\cite{pocka}, except that there is an extra base case for $a\in 2^{\Pk}$, which follows immediately.
Lemma 4.2 from the same paper shows the unclosed \POCKA semantics of a program $e$ is equivalent to the \BKA semantics of $\hat{r}(e)$ after closure under a set of hypotheses called $\mathsf{top}$. In the current paper we do not provide this definition, but for the reader familiar with this framework: letters $a\in 2^{\Pk}$ do not occur in $\mathsf{top}$, and hence the \BKA-semantics of $\hat{r}(a)$ for $a\in 2^{\Pk}$ closed under $\mathsf{top}$ is just $\{a\}$.
This is a new base case in the proof, and the rest of the proof follows in the same way.
Lemma 4.3 is independent of the \POCKA terms, so can be copied immediately.
For Lemma 4.4, we need to extend $s$ with the clause $s(a)=a$ for $a\in 2^{\Pk}$.
The result that the homomorphism $\hat{s}$ generated by $s$ is such that for all \POCKA terms we have $e\equiv \hat{s}(e)$ still holds as the only added base case is trivial. The other claim, that the \BKA semantics of $\hat{s}(e)$ are the same as the \BKA semantics of $e$ closed under $\mathsf{top}$, also follows immediately as the \BKA semantics of $\hat{s}(a)=\{a\}$ and the \BKA semantics of $a$ closed under $\mathsf{top}$ is $\{a\}$ as well. Then we can repeat the exact proof as is given in~\cite{pocka} for Lemma 4.6 (completeness).
\end{proof}
\end{textAtEnd}
We first prove completeness for state programs, where we use completeness of \POCKA. To do so, some caution is needed; \POCKA terms are state terms over the alphabet $\OO\cup\Act$. However, the state terms relevant here also include elements $a\in 2^{\Pk}$. 
\begin{theoremEnd}[default, category=completeness]{lemma}\label{alternativecompl}
Let $s,v\in \mathcal{T}_{\mathsf{state}}(\OO\cup\Act\cup 2^{\Pk})$ and $a\in 2^{\Pk}_\nempty$. If $\closure[]{\sem{s}}(a)=\closure[]{\sem{v}}(a)$, then $s\equiv v$.
\end{theoremEnd}
\begin{proofEnd}
  The proof of \cref{theorem:pockarel} can trivially be extended with an extra base case for $b\in 2^{\Pk}$, as
  $\sem{b}(a)=\{b\cdot a\}$. Hence, from $\closure[]{\sem{s}}(a)=\closure[]{\sem{v}}(a)$ we obtain via
  \cref{theorem:pockarel} that $\sem{s}_{\textrm{\POCKA}}=\sem{v}_{\textrm{\POCKA}}$. Now we use \cref{pockaupdate} to conclude that $s\equiv_{\textrm{\POCKA}}v$, and as all the axioms of \POCKA on state terms are axioms of \cnetkat, we obtain that $s\equiv v$.
\end{proofEnd}

Next we prove completeness for expressions of the form $s\mathbin{;}\Pi_a$, and then extend this to arbitrary finite sums of such programs:
\begin{theoremEnd}[default,category=completeness]{lemma}\label{lemma:equivalence-simple}
  Let $b,c\in 2^{\Pk}$, $u,v$ state programs, and $a\in 2^{\Pk}_\nempty$. Then we have:
$
      \closure[]{\sem{u\mathbin{;}\Pi_b}}(a)=\closure[]
      {\sem{v\mathbin{;}\Pi_c}}(a) \Rightarrow u \mathbin{;}\Pi_b \equiv v\mathbin{;}\Pi_c
$.
\end{theoremEnd}
\begin{proofEnd}
  Via \cref{lemma:tinydetails} item~\ref{soannoying} and~\ref{detpacket}, we
  obtain that $\sem{\Pi_b}=\{\onepom\cdot d\}=\sem{\Pi_c}(a)$ for some $d\in 2^{\Pk}$. We can use \cref{lemma:partial-completeness-sets} to obtain that
  \begin{equation}\label{eq5}
  \Pi_b\equiv \Pi_c
  \end{equation}
  Next we show that
  \begin{equation}\label{eq3}
  \closure[]{\sem{u}}(a)=\closure[]{\sem{v}}(a)
  \end{equation}
  For $w\in \closure[]{\sem{u}}(a)$
   we know via
  \cref{lemma:statepacketterms} and the definition of closure that $w=\upom \cdot a$,
  $\upom\in\closure[\hcontr\cup\hexch] { \{\upom'\}}$ and $\upom'\cdot a \in \sem{u}(a)$.
  Via \cref{lemma:tinydetails} we obtain that $\upom\cdot d\in\closure[]{\sem{u\mathbin{;}\Pi_b}}(a)$ and via our assumption that
  $\upom\cdot d\in\closure[]{\sem{v\mathbin{;}\Pi_c}}(a)$. Via \cref{lemma:tinydetails} we can conclude that $\upom\cdot a\in \closure[]{\sem{v}}(a)$.
  %
  The other direction is argued symmetrically. Via
  \cref{alternativecompl}, we obtain from~\eqref{eq3} that
  \begin{equation}\label{eq4}
    u\equiv v
  \end{equation}
Combining~\eqref{eq5} and~\eqref{eq4} we obtain
  \[
    u\mathbin{;} \Pi_b\equiv v \mathbin{;}\Pi_c
\qedhere\]
\end{proofEnd}
%
\begin{theoremEnd}[default,category=completeness]{lemma}\label{lemma:sumsemantics}
  If
  \(\closure[]{\sem{
  \sum_{j\in J}(\pseq {u_j} {\Pi_{b_j}})}}(a)=\closure[]{\sem{\sum_{k\in K}(\pseq {v_k} {\Pi_{c_k}})}}(a)\)
  for some $a\in 2^{\Pk}_\nempty$,
  then
  \(
  \sum_{j\in J}(\pseq {u_j} {\Pi_{b_j}})\equiv \sum_{k\in K}(\pseq {v_k} {\Pi_{c_k}})\,,
  \)
  where $J,K$ are finite;
  $u_j, v_k$ are state programs and $b_j, c_k \in 2^{\Pk}$  for each $j,k$.
\end{theoremEnd}
\begin{proofEnd}
  We can use \cref{lemma:unique-in-sum} to obtain a sum that is provably equivalent to \linebreak[4] $\sum_{j\in J}\left(\pseq {u_j} {\Pi_{b_j}}\right)$ where it holds that for all $j,j'$ in the sum, we have $\Pi_{b_j}\not\equiv \Pi_{b_{j'}}$.
  Via soundness, the semantics of this expression is equivalent to the semantics of $\sum_{j\in J}\left(\pseq {u_j} {\Pi_{b_j}}\right)$. For simplicity we will not rename variables; we simply assume that in $\sum_{j\in J}\left(\pseq {u_j} {\Pi_{b_j}}\right)$ it is the case that for all $j,j'\in J$, we have $\Pi_{b_j}\not\equiv \Pi_{b_{j'}}$.

  Take any $j\in J$.
  Take $w\in \closure[]{\sem{
  \sum_{j\in J}\left(\pseq {u_j} {\Pi_{b_j}}\right)}}(a)$. Because of property~\ref{property:union} from \cref{lemma:basicfacts2}, we know that $\closure[]{\sem{
  \sum_{j\in J}\left(\pseq {u_j} {\Pi_{b_j}}\right)}}(a)=\bigcup_{j\in J} \closure[]{\sem{\pseq {u_j} {\Pi_{b_j}}}}(a)$.
  Hence, there exists a $j\in J$ such that $w\in \closure[]{\sem{u_j\mathbin{;}\Pi_{b_j}}}(a)$.
  Via \cref{lemma:tinydetails}, we get that $w=\upom \cdot d$ for some state pomset $\upom$ and
  $\sem{\Pi_{b_j}}(a)=\{\onepom\cdot d\}$. Because of
our assumption, we know there exists a $k\in K$ such that $w\in
  \closure[]{\sem{v_k\mathbin{;}\Pi_{c_k}}}(a)$. Via \cref{lemma:tinydetails} we get that $\sem{\Pi_{c_k}}(a)=\{\onepom\cdot d\}$.
  Now we prove that in fact
  \begin{equation}\label{eq2}
      \closure[]{\sem{u_j\mathbin{;}\Pi_{b_j}}}(a)=\closure[]{\sem{v_k\mathbin{;}\Pi_{c_k}}}(a)
  \end{equation}
The left-to-right inclusion is already proven. For the right-to-left
inclusion we take \\
$w'\in \closure[]{\sem{v_k\mathbin{;}\Pi_{c_k}}}(a)$. Via \cref{lemma:tinydetails} and the fact that
$\sem{\Pi_{c_k}}(a)=\{\onepom\cdot d\}$, we obtain that
$w'=\vpom\cdot d$ for some state pomset $\vpom$. Because of our assumption, we know there exists an $n\in J$ such that $w'\in \closure[]{\sem{u_n\mathbin{;}\Pi_{b_n}}}(a)$. Suppose that
$n\neq j$. Via \cref{lemma:tinydetails} and the fact that $w'=\vpom\cdot d$ we obtain that $\sem{\Pi_{b_n}}(a)=\{\onepom\cdot d\}$.
Hence $\closure[]{\sem{\Pi_{b_n}}}(a)=\closure[]{\sem{\Pi_{b_j}}}(a)$ and we obtain
$\Pi_{b_n}\equiv \Pi_{b_j}$ via
\cref{lemma:partial-completeness-sets}. This is a contradiction with \cref{lemma:unique-in-sum}. Hence, $j=n$. This proves~\eqref{eq2}.

 Now we can use \cref{lemma:equivalence-simple} to obtain
 that for all $j\in J$ there exists $k\in K$ such that
\begin{equation} \label{eq6}
  u_j\mathbin{;} \Pi_{b_j}\equiv v_k \mathbin{;}\Pi_{c_k}
\end{equation}
Similarly, we can prove~\eqref{eq6} for all $k\in K$.
Via idempotence, cummatativity and associativity of $+$ we can then conclude that
\[
\sum_{j\in J}\left(\pseq {u_j} {\Pi_{b_j}}\right)\equiv \sum_{k\in K}\left(\pseq {v_k} {\Pi_{c_k}}\right)
\qedhere\]
\end{proofEnd}
\smallskip

\textbf{Step 4: Completeness}
The last lemma before proving completeness relates the semantics of $p$ on input $a$ to the semantics of $\Pi_a\mathbin{;}p$ on any non-empty input.
\begin{theoremEnd}[default,category=completeness]{lemma}\label{lemma:packet-specified}
  Let $b\in 2^{\Pk}_\nempty$, $a\in 2^{\Pk}$.
  For all $p\in\programs$, $\closure[] {\sem{\Pi_a\mathbin{;}p}}(b)=\closure[]{\sem{p}}(a)$.
\end{theoremEnd}
\begin{proofEnd}
  We prove instead that $\sem{\Pi_a\mathbin{;}p}(b)=\sem{p}(a)$,
  which directly implies the statement we need to show. Using the
  fact that $\sem{\Pi_a}(x)=\{\onepom\cdot a\}$ for any non-empty set
  of packets $x$, we have directly that
  $\sem{\Pi_a\mathbin{;}p}(b)=\{(\upom\cdot\vpom) \cdot c  \mid
  \upom \cdot a' \in \lang{\Pi_a}(b) , \vpom \cdot c \in
  \lang{p}(a')\}=\{(\onepom\cdot\vpom) \cdot c  \mid  \vpom \cdot c \in \lang{p}(a)\}=\sem{p}(a)$.
\end{proofEnd}

\begin{theorem}[Completeness]\label{thm:complete2}
      Let $p,q\in\programs$. For all $a\in 2^{\Pk}$ we have that if  $\closure[]{\sem{p}}(a)=\closure[]{\sem{q}}(a)$, then $p\equiv q$.
\end{theorem}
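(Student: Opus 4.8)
The plan is to reduce the statement to the \textbf{extensionality} axiom, which guarantees $p \equiv q$ as soon as we establish $\pseq{\Pi_a}{p} \equiv \pseq{\Pi_a}{q}$ for every $a \in 2^\Pk$. So I would fix an arbitrary $a$ and prove this pointwise equivalence, distinguishing whether $a$ is empty. The empty case is immediate: by \Cref{remark:piadrop} we have $\Pi_\emptyset \equiv \pfalse$, and the axiom $\pseq{\pfalse}{p} \equiv \pfalse$ gives $\pseq{\Pi_\emptyset}{p} \equiv \pfalse \equiv \pseq{\Pi_\emptyset}{q}$.

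For $a \neq \emptyset$, I would first push both programs into normal form. By \Cref{lemma:normalform} there exist finite sums of state programs paired with packet-set representatives such that $\pseq{\Pi_a}{p} \equiv \pseq{\Pi_a}{\sum_{j}(\pseq{u_j}{\Pi_{b_j}})}$ and $\pseq{\Pi_a}{q} \equiv \pseq{\Pi_a}{\sum_{k}(\pseq{v_k}{\Pi_{c_k}})}$. The crux is then to show that the two sums are themselves provably equivalent, which I would obtain by feeding them to \Cref{lemma:sumsemantics}; that lemma requires their closed semantics to agree on the non-empty input $a$.

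To verify this semantic agreement, I would transport the theorem's hypothesis through soundness and \Cref{lemma:packet-specified}. Applying soundness (\Cref{thm:sound}) to the normal-form equivalence yields $\closure[]{\sem{\pseq{\Pi_a}{p}}} = \closure[]{\sem{\pseq{\Pi_a}{\sum_{j}(\pseq{u_j}{\Pi_{b_j}})}}}$ as functions; evaluating both sides at an arbitrary non-empty input and invoking \Cref{lemma:packet-specified} twice gives $\closure[]{\sem{\sum_{j}(\pseq{u_j}{\Pi_{b_j}})}}(a) = \closure[]{\sem{p}}(a)$, and symmetrically $\closure[]{\sem{\sum_{k}(\pseq{v_k}{\Pi_{c_k}})}}(a) = \closure[]{\sem{q}}(a)$. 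Since the hypothesis supplies $\closure[]{\sem{p}}(a) = \closure[]{\sem{q}}(a)$, the two sums have equal closed semantics on $a$. Now \Cref{lemma:sumsemantics} delivers $\sum_{j}(\pseq{u_j}{\Pi_{b_j}}) \equiv \sum_{k}(\pseq{v_k}{\Pi_{c_k}})$, and chaining the two normal-form equivalences gives $\pseq{\Pi_a}{p} \equiv \pseq{\Pi_a}{q}$. As $a$ was arbitrary, extensionality concludes $p \equiv q$.

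I expect essentially no obstacle in this final assembly: the genuine difficulty is already packaged in the normal-form theorem (\Cref{lemma:normalform}) and the sum-completeness lemma (\Cref{lemma:sumsemantics}). The only point demanding care is the bookkeeping linking $\closure[]{\sem{p}}(a)$ to the closed semantics of the normal form, since \Cref{lemma:packet-specified} is stated for non-empty inputs; the transport step should therefore be carried out at a fixed non-empty $b$. This, together with the fact that \Cref{lemma:sumsemantics} likewise needs $a$ non-empty, is exactly why the empty case is peeled off at the very start.
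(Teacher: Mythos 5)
Your proposal is correct and follows essentially the same route as the paper's proof: peel off the empty-input case via $\Pi_\emptyset \equiv \pfalse$, use \Cref{lemma:normalform} to reduce both sides to normal forms, transport the semantic hypothesis through soundness and \Cref{lemma:packet-specified} so that \Cref{lemma:sumsemantics} yields equivalence of the sums, and conclude with the extensionality axiom. The only difference is cosmetic ordering (you normalize before transporting the semantics, the paper does the reverse), so there is nothing substantive to flag.
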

\begin{proof}
  We first show that $\Pi_a\mathbin{;}p\equiv \Pi_a\mathbin{;}q$ for all $a\in 2^{\Pk}$.
  In case $a=\emptyset$, $\Pi_a$ must be the empty parallel. Hence, $\Pi_a\mathbin{;}p\equiv\pfalse\equiv \Pi_a\mathbin{;}q$. In the rest of the proof we assume $a\neq\emptyset$.
  Via \cref{lemma:packet-specified}, we obtain that $\closure[]{\sem{p}}(a)=\closure[]{\sem{\Pi_a \mathbin{;}p}}(a)=\closure[]{\sem{\Pi_a \mathbin{;}q}}(a)=\closure[]{\sem{q}}(a)$.
We obtain a normal form such that
  $\pseq {\Pi_a} {p}\equiv \pseq {\Pi_a} {\sum_{j\in J}(\pseq {u_j} {\Pi_{b_j}})}$ (\cref{lemma:normalform}).
  Similarly, $\pseq {\Pi_a} {q}\equiv \pseq {\Pi_a} {\sum_{k\in K}(\pseq {v_k} {\Pi_{c_k}})}$.
  Via soundness we derive $\closure[]{\sem{\pseq {\Pi_a}
  {\sum_{j\in J}(\pseq {u_j} {\Pi_{b_{j}}})}}}(a)=\closure[]{\sem{\pseq {\Pi_a} {\sum_{k\in K}(\pseq {v_k} {\Pi_{c_k}})}}}(a)$, and via
  \cref{lemma:packet-specified} that
$\closure[]{\sem{
  \sum_{j\in J}(\pseq {u_j} {\Pi_{b_j}})}}(a)=\closure[]{\sem{\sum_{k\in K}(\pseq {v_k} {\Pi_{c_k}})}}(a)
$.
With the partial completeness result from \cref{lemma:sumsemantics}, we obtain that $\sum_{j\in J}\left(\pseq {u_j} {\Pi_{b_j}}\right)\equiv \sum_{k\in K}\left(\pseq {v_k} {\Pi_{c_k}}\right)$. This leads to  \[\Pi_a\mathbin{;}p\equiv \Pi_a\mathbin{;}\sum_{j\in J}\left(\pseq {u_j} {\Pi_{b_j}}\right)\equiv\Pi_a\mathbin{;}\sum_{k\in K}\left(\pseq {v_k} {\Pi_{c_k}}\right)\equiv \Pi_a\mathbin{;} q\]

Hence, we have derived that $\Pi_a\mathbin{;}p\equiv \Pi_a\mathbin{;}q$ for all $a\in 2^{\Pk}$. With the extensionality axiom we can conclude that $p\equiv q$.
\end{proof}
%

\section{Examples}\label{sec:example}
This section shows how we can use \cnetkat to model and analyze several concurrent programs. We start by analyzing the running example from \cref{sec:overview}, and then proceed to a more involved example that combines the behavior of a stateful firewall, a load balancer, and an in-network cache.


\subsection{Running Example}

Consider again the running example from \cref{sec:overview}. Because we are
ultimately interested in the behavior of the program when the packets have
reached their final destination, switch $4$, we will add a test
$\match \sw 4$ at the end of the program:
 \[
 p \defeq (\modify v 0 )\mathbin{;}{(p_1\parallel p_2\parallel p_3\parallel p_4)}^* \mathbin{;} (\match \sw 4)
 \]

Recall that the \cnetkat semantics of a program contains traces that are only required to model executions where the program is composed in parallel with another program, to ensure a compositional semantics for the language. However, to analyze the behavior of a program in isolation, we want to eliminate these extra traces. To do this, we follow the same strategy used in~\cite{pocka}, where so-called \emph{guarded} pomsets were proposed. Guarded pomsets are a subclass of pomsets that captures the characteristics of behaviors of (concurrent) programs running in isolation. For example, in a guarded pomset, if one assertion, say $\match v 0$, occurs before another assertion, say $\match v 1$, there must be an assignment $\modify v 1$ between the two asserts to account for the change. That is, in an isolated execution every change to variables must be explained by an action in the program.

To illustrate the difference between pomsets and guarded pomsets,
consider our example. We unfold the Kleene star twice and evaluate
the resulting program; we obtain a pair with output
$\{\spadesuit[4/\sw],\heartsuit[4/\sw]\}$ and corresponding pomset,

\begin{tikzpicture}[node distance=0.35cm]\small
    \begin{scope}[every node/.style={anchor=center}]
    \node (gamma1) {$(\modify v 0)$};
    \node[right=of gamma1] (bla2) {$\{\heartsuit,\spadesuit\}$};
    \node[yshift=5mm,right=of bla2] (gamma2) {$\beta$};
    \node[below=5mm of gamma2] (gamma5) {$\{\heartsuit\}$};
        \node[right=of gamma2] (bla3) {$\{\spadesuit\}$};
        \node[right=5mm of gamma5] (bla4) {$\{\heartsuit[3/\sw]\}$};
        \node[right=15mm of bla3] (bla5) {$\spadesuit[2/\sw]$};
        \node[below=5mm of bla5] (bla6) {$(\modify v 1)$};
        \node[right=of bla5] (bla7) {$\{\spadesuit[2/\sw]\}$};
        \node[below=5mm of bla7] (bla8) {$\{\heartsuit[3/\sw]\}$};
        \node[right=of bla7] (bla9) {$\{\spadesuit[4/\sw]\}$};
        \node[right=of bla8] (bla10) {$\{\heartsuit[4/\sw]\}$};
    \end{scope}
\path  (gamma1) edge[->] (bla2);
\path (bla2) edge[->] (gamma2);
\path (bla2) edge[->] (gamma5);
\path (gamma2) edge[->] (bla3);
\path (gamma5) edge[->] (bla4);
\path  (bla3) edge[->] (bla5);
\path (bla4) edge[->] (bla6);
            \path (bla5) edge[->] (bla7);
            \path  (bla6) edge[->] (bla8);
              \path  (bla7) edge[->] (bla9);
                  \path  (bla8) edge[->] (bla10);
                          \path (bla6) edge[->] (bla7);
                              \path (bla5) edge[->] (bla8);
\end{tikzpicture}

\noindent where $\beta(v)=1$. This pomset is unguarded: $\beta(v) =1$ occurs without a cause.

The semantics also contains a pair with $\{\spadesuit[4/\sw],\heartsuit[4/\sw]\}$ and pomset,

\noindent\begin{tikzpicture}[node distance=0.35cm]\small
    \begin{scope}[every node/.style={anchor=center}]
        \node (gamma1) {$\alpha$};
        \node (vzero) [left=2mm of gamma1] {$(\modify v 0)$};
        \node (gamma) [left=2mm of vzero] {$\gamma$};
        \node[right=2mm of gamma1] (bla2) {$\{\heartsuit,\spadesuit\}$};
        \node[yshift=5mm,right=of bla2] (gamma2) {$\beta$};
        \node[below=5mm of gamma2] (gamma5) {$\{\heartsuit\}$};
        \node[right=of gamma2] (bla3) {$\{\spadesuit\}$};
        \node[right=2mm of gamma5] (bla4) {$\{\heartsuit[3/\sw]\}$};
        \node[right=9mm of bla3] (bla5) {$\spadesuit[2/\sw]$};
        \node[right=3mm of bla4] (bla6) {$(\modify v 1)$};
        \node[right=of bla5] (bla7) {$\{\spadesuit[2/\sw]\}$};
        \node[right=2mm of bla6] (bla8) {$\{\heartsuit[3/\sw]\}$};
        \node[right=of bla7] (bla9) {$\{\spadesuit[4/\sw]\}$};
        \node[right=of bla8] (bla10) {$\{\heartsuit[4/\sw]\}$};
    \end{scope}
    \path (gamma) edge[->] (vzero);
    \path (vzero) edge[->] (gamma1);
    \path (gamma1) edge[->] (bla2);
    \path (bla2) edge[->] (gamma2);
    \path (bla2) edge[->] (gamma5);
    \path (gamma2) edge[->] (bla3);
    \path (gamma5) edge[->] (bla4);
    \path (bla3) edge[->] (bla5);
    \path (bla4) edge[->] (bla6);
    \path (bla5) edge[->] (bla7);
    \path (bla6) edge[->] (bla8);
    \path (bla6) edge[->] (gamma2.310);
    \path (bla7) edge[->] (bla9);
    \path (bla8) edge[->] (bla10);
    \path (bla6) edge[->] (bla7);
    \path (bla5) edge[->] (bla8);
\end{tikzpicture}

\noindent with $\alpha(v)=0$, $\beta(v)=1$, and $\gamma$ unrestricted. This pomset is guarded because it contains an arrow from $\modify v 1$ to $\beta$, justifying the change in valuation from $\alpha$ to $\beta$. As we show in
\ifarxiv%
\cref{sec:analysis},
\else%
the full version of this article~\cite[Appendix~E]{fullversion},
\fi%
all guarded pomsets in the semantics will have this arrow, and satisfy the desired property: $\heartsuit$ packets are observed at switch $3$ before $\spadesuit$ packets are observed at switch $2$.

Now consider the axiomatic claim we made in \cref{sec:overview} (i.e.,~\eqref{eq:bla}),
$
(\heartsuit\parallel\spadesuit) \mathbin{;} q \leqq (\heartsuit\parallel\spadesuit) \mathbin{;} p
$
where $q$ is the program from \cref{programq}. We can easily see that the following holds:
$
\closure[]{\sem{q}}\{\heartsuit,\spadesuit\}\subseteq \closure[]{\sem{p}}\{\heartsuit,\spadesuit\}
$.
Hence, we can use \cref{lemma:packet-specified} and the completeness result for \cnetkat (\cref{thm:complete2}) to obtain~\eqref{eq:bla}.

\subsection{Stateful Load Balancer, Cache, and Firewall}

For a more complex example, consider the network in \cref{fig:lb}, which is adapted from an example from~\cite{mooly}. The overall goal is to (i) prevent packets from a high-priority server $S_h$ going to low priority hosts $l_1,\ldots, l_k$ and (ii) load balance requests to the servers in a round robin fashion. We provide naive specifications for the cache, firewall and load balancer programs in \Cref{fig:lb}. For simplicity, we assume that there is exactly one low-priority host, and exactly one high-priority host, i.e., $n=k=1$, and we leave the specification of the topology implicit.

\begin{remark}
In contrast with the previous example, the program in \Cref{fig:lb} includes reads and writes of a global variable that occur on different physical devices.
In principle, synchronizing variables like $r$ would give rise to additional packets that update local copies of variables---a process that could itself be modelled in \cnetkat.
We leave the implementation of a translation pass that achieves the synchronization of global variables across switches to future work.
\end{remark}

In~\cite{mooly}, the authors point out a problem with the example that arises because the cache has no means to enforce the security policy. One strategy for resolving this problem is to swap the placement of the firewall and the cache. Another is to distribute access control rules onto the cache as well as the firewall. However, there is also a second, more subtle issue: the load balancer uses the global variable $r$ to decide to which server to forward requests. In the presence of multiple packets, another packet may arrive before the change to the global variable occurs allowing two (or more!) packets to be sent to the same server.

\begin{figure}[t]
\[
\begin{array}{ll}
\includegraphics[scale=.3]{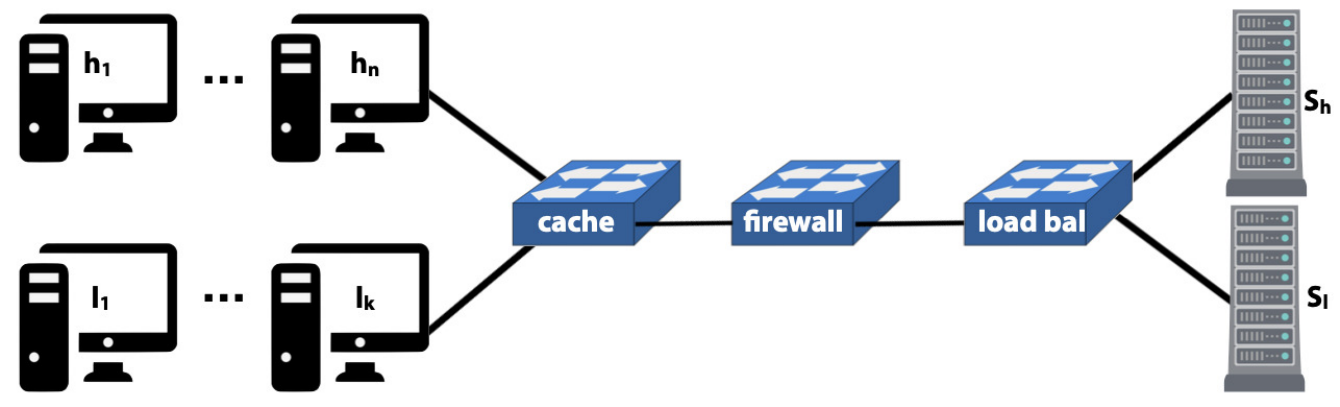}
\end{array}
\]
\[
\begin{array}{ll}
C \defeq & ((\match v 1 )\mathbin{;} (\modify {\mathsf{dst}} h_1)\mathbin{;}\dup + (\match v 0 )\mathbin{;} (\modify {\mathsf{dst}} l_1)\mathbin{;}\dup) \\
&
\parallel (\mathsf{src} = l_1 \mathbin{;} (\modify {\mathsf{dst}} {\mathsf{firewall}})) \parallel
(\mathsf{src} = h_1 \mathbin{;} (\modify {\mathsf{dst}} {\mathsf{firewall}}))\\[1ex]
F \defeq & (\mathsf{src}=s_h \mathbin{;} (\modify v 0)\mathbin{;}(\modify{\mathsf{dst}} {\mathsf{cache}})) \parallel
(\mathsf{src}=s_l \mathbin{;} (\modify v 1)\mathbin{;}(\modify{\mathsf{dst}} {\mathsf{cache}})) \\
&
\parallel
(\mathsf{src}=l_1 \mathbin{;} (\modify r 0)\mathbin{;}(\modify{\mathsf{dst}} {\mathsf{loadb}}))
\parallel
(\mathsf{src}=h_1 \mathbin{;} (\modify r 1)\mathbin{;}(\modify{\mathsf{dst}} {\mathsf{loadb}}))\\[1ex]
L \defeq & ((\match r 1 )\mathbin{;} (\modify {\mathsf{dst}} s_h)\mathbin{;}\dup + (\match r 0 )\mathbin{;} (\modify {\mathsf{dst}} s_l)\mathbin{;}\dup) \\
&
\parallel \mathsf{src} = s_h \mathbin{;} (\modify {\mathsf{dst}} {\mathsf{firewall}}) \parallel
\mathsf{src} = s_l \mathbin{;} (\modify {\mathsf{dst}} {\mathsf{firewall}})
\end{array}
\]
\caption{Stateful firewall between high/low priority hosts and servers.}\label{fig:lb}
\end{figure}

The issue with the load balancer can be observed in the following example.
Take as input packets $\spadesuit$ and $\heartsuit$ with $\spadesuit(\mathsf{src})=\heartsuit(\mathsf{src})=l_1$.
After being processed at the cache, both packets arrive at the firewall.
One of the pairs in the semantics of the firewall $F$ is the following, with $\alpha$ unrestricted and $\beta(r)=0$:
$
(\alpha \rightarrow(\modify r 0)\rightarrow \beta)\cdot \{\heartsuit[\mathsf{loadb}/\mathsf{dst}],\spadesuit[\mathsf{loadb}/\mathsf{dst}]\}
$.
After processing by the load balancer, both packets are sent to $s_l$ simultaneously. To illustrate this event, we claim that there is a guarded pomset in the semantics of the load balancer. Observe that in the semantics of $L$ we find the following pomset, with $\alpha$ and $\beta$ from before (the second $\beta$ is the result of the $\match r 0$ in $L$):
$\alpha \rightarrow
(\modify r 0)\rightarrow  \beta\rightarrow \beta\rightarrow \{\heartsuit[s_l/\mathsf{dst}],\spadesuit[s_l/\mathsf{dst}]\}
$.
Using closure under contraction, we obtain a guarded pomset (the two $\beta$-nodes are merged into one) where both packets appear at $s_l$ at the same time.

A final issue stems from the fact that the firewall implementation is flawed as written. Specifically, it uses a global variable to determine whether a packet should be forwarded on to a high priority host. Of course, if another packet arrives before the current one has been forwarded, the value of this variable might change, resulting in both packets being forwarded to a low priority host.

The issue with the firewall can be observed as follows. Take as input two packets $\spadesuit$ and $\heartsuit$ with $\spadesuit(\mathsf{src})=s_h$ and $\heartsuit(\mathsf{src})=s_l$.
After processing by the load balancer, both packets end up in the firewall.
One of the pairs in the semantics of the firewall is the following, with $\alpha(v)=1$ and $\beta$ unrestricted:
$
(\alpha\cdot\modify v 0\parallel \beta\cdot\modify v 1) \cdot \{\heartsuit[\mathsf{cache}/\mathsf{dst}],\spadesuit[\mathsf{cache}/\mathsf{dst}]\}
$.
After processing by the cache, both packets are sent to $h_1$ or $l_1$. To illustrate how the packets travel to e.g. $l_1$, we find the following pomset in the semantics of $C$, with $\alpha, \beta$ from before and $\gamma(v)=0$:

\makebox[.9\linewidth]{\centering
\begin{tikzpicture}[node distance=0.35cm]
    \begin{scope}[every node/.style={anchor=center}]
    \node (g1) {$\alpha$};
    \node[right=of g1] (gamma1) {$(\modify v 0)$};
    \node[below=of gamma1] (gamma2) {$(\modify v 1)$};
        \node[left=of gamma2] (g2) {$\beta$};
    \node[yshift=-3.5mm, right=of gamma1] (gamma5) {$\gamma$};
        \node[right=of gamma5] (bla3) {$\{\heartsuit[l_1/\mathsf{dst}],\spadesuit[l_1/\mathsf{dst}]\}$};
    \end{scope}
    \path  (g1) edge[->] (gamma1);
    \path (g2) edge[->] (gamma2);
\path  (gamma1) edge[->] (gamma5);
\path (gamma2) edge[->] (gamma5);
\path (gamma5) edge[->] (bla3);
\end{tikzpicture}
}

This pomset subsumes a guarded pomset. Hence, by exchange closure, we find guarded pomsets in the behavior of $C$ where the packets both end up at $l_1$.

Overall, these examples show that \cnetkat can model subtle interactions between packets that arise in the presence of concurrency and state. Moreover, the axiomatic semantics can be used to prove (in)equivalences between programs. 

\begin{figure}[t]
\begin{minipage}{.49\textwidth}
\includegraphics[scale=.18]{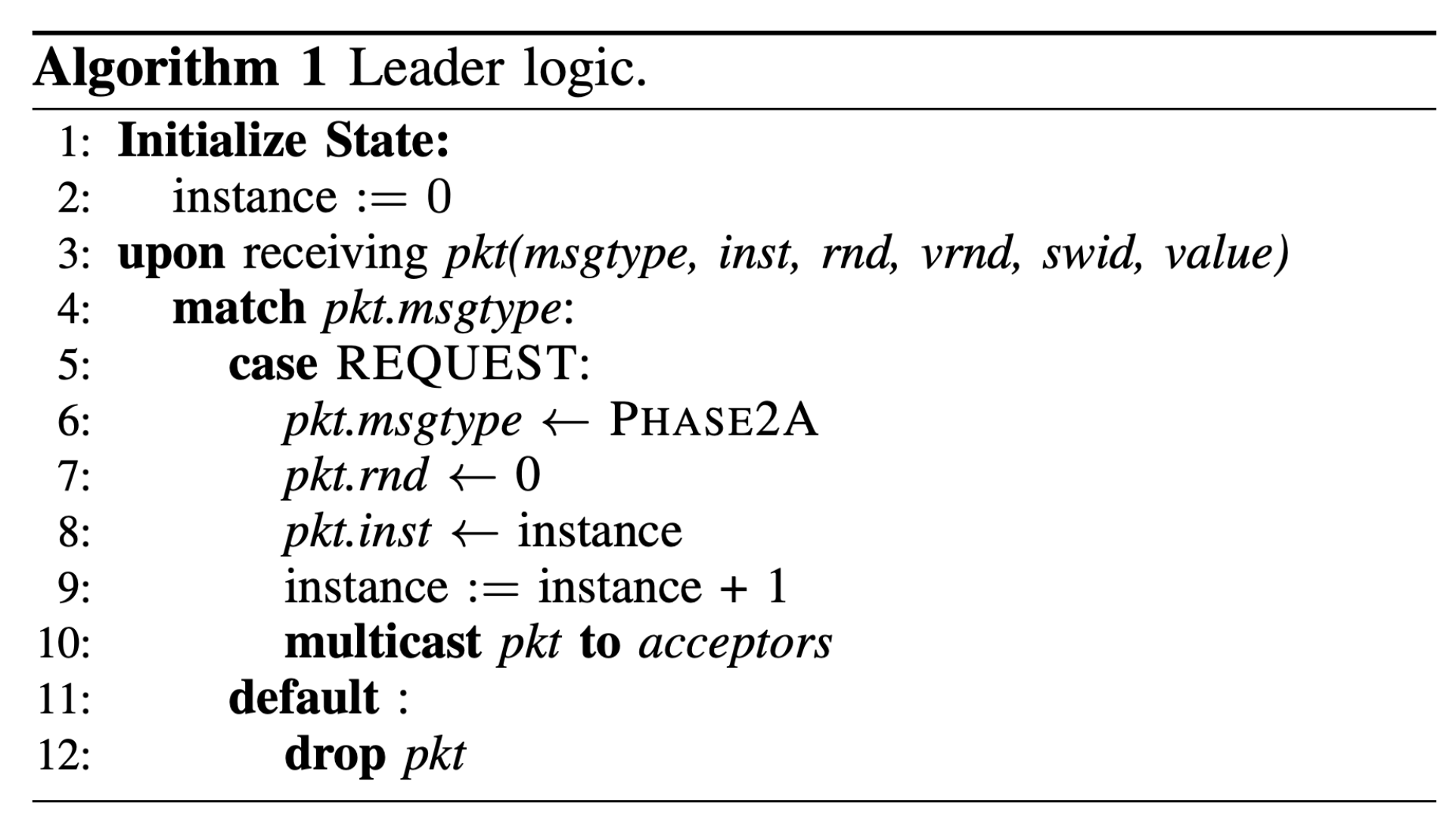}
\end{minipage}
\begin{minipage}{.49\textwidth}
\[\begin{array}{l}
\modify {\textrm{instance}} 0
\mathbin{;} \big( \\
\quad \textit{msgtype}=\textsc{Request}
 \mathbin{;} \\
\quad \modify {\textit{msgtype}} {\textsc{Phase2A}}
  \mathbin{;} \\
\quad \modify {\textit{rnd}} {0}
   \mathbin{;} \\
\quad \modify {\textit{inst}} {\textrm{instance}}
    \mathbin{;} \\
\quad \modify {\textrm{instance}} {\textrm{instance}+1}
    \mathbin{;} \\
\quad    \modify {\mathit{dst}} {1} \Vert \cdots \Vert  \modify {\mathit{dst}} {k}
    \\\big) + \drp
\end{array}
\]
\end{minipage}
\caption{Leader logic from~\cite{p4xos} and \cnetkat term, with $k$ acceptors.}\label{fig:p4xos}
\end{figure}

\section{Related Work}\label{sec:related}

The core of \cnetkat is two extensions of Kleene Algebra: \netkat~\cite{netkat,netkat2}, a networking extension of Kleene algebra with tests, and \POCKA~\cite{pocka}, a concurrent extension of \KA. \netkat describes how single packets move through a network, whereas \cnetkat can handle multiple packets. \POCKA was introduced to describe concurrent interactions of global variables, whereas \cnetkat makes use of this algebra to enable intra-packet communication. \cnetkat captures local and global state interactions which was not in any of the previous work. 

In the family of \KA extensions, \POCKA is closest to Concurrent Kleene algebra with Observations (\CKAO)~\cite{kao,fossacs2020}, which was proposed
to integrate concurrency with conditionals such as $\mathsf{if}$-statements
and $\mathsf{while}$-loops. Contrary to \CKAO, which uses a Boolean algebra
to axiomatize conditionals, \POCKA uses a pseudocomplemented distributive
lattice (PCDL) as the algebra for tests, which are referred to as
observations to mark the difference. The idea to use a PCDL as the algebra
for observations was first proposed in~\cite{jipsen-moshier-2016}.

Our work fits within the \CKA tradition, which gives a true concurrency semantics and is thereby distinct from bisimulation semantics typically considered in process algebras, such as CSP and CCS\@. Another distinction is that \cnetkat uses global state rather than message passing.

Some recently published work has also extended \netkat with constructs for modeling multi-packet behavior~\cite{dynetkat}. Here the goal is to model interactions between the control- and date-plane in dynamic updates. Parallel composition is axiomatized with a left-merge operator and a communication-merge operator, and semantics is in terms of bisimilarity instead of traces. The examples largely focus on the table updates, not on the flow of packets through the network.

The current paper deviates from earlier concurrent variations on \netkat, such as Concurrent NetCore~\cite{netcore} and a stateful variant of \netkat introduced in~\cite{events}.
Both have a different algebraic structure than \netkat. Concurrent NetCore does not have Kleene star, and does not provide a denotational semantics, or axiomatization. Moreover, it does not handle multiple packets, the use of $+$ in the language is multicast rather than non-determinism, and $\parallel$ is concurrent processing of disjoint fields of the same packet. Because of these restrictions, concurrent NetCore is less suitable to specify inter-packet concurrency.

The approach in~\cite{events} models interactions among multiple packets, but is accompanied by semantic correctness guarantees, rather than algebraic formalizations as in~\cnetkat. A recent PhD thesis~\cite{xiang-thesis} contains another version of stateful \netkat, which assumes packet processing can always be serialized into a deterministic, global order. This assumption enables a simpler semantics and a decision procedure, though completeness is left as an open problem. Flow control in~\cite{xiang-thesis} is handled in the style of Guarded Kleene Algebra with Tests~\cite{kozentseng2008,gkat}, which means that programs and specifications must be deterministic.

More broadly, there is a growing community doing research on network verification tools. Early work such as HSA~\cite{hsa}, Anteater~\cite{anteater}, Veriflow~\cite{veriflow}, Atomic Predicates~\cite{yang-lam}, etc.\ focused on stateless SDN data planes, while more recent work such as p4v~\cite{p4v} and VMN~\cite{vmn} supports richer models such as P4 and stateful middleboxes. These tools typically use analyses based on symbolic simulation or they encode verification tasks into first-order formulas that can be checked using  SMT solvers. To the best of our knowledge, \cnetkat is the first algebraic framework to model network-wide, multi-packet interaction with mutable state.

\section{Discussion}\label{sec:discussion}
We proposed \cnetkat, an algebraic framework to reason about programs with both local and global state, in the presence of parallel threads and control-flow statements. We provided a denotational semantics and a complete axiomatization. We also provided examples of how the language can be used to reason about stateful network programs and different sources of concurrency in a network. 

As a result of the algebraic approach, the semantics of a program arises from the semantics of its parts. This clashes with the idea of observational equivalence when concurrency comes into play: some behaviors of a program can only be observed when executed concurrently with another program, and not in isolation. Hence it becomes necessary to include some elements in the semantics that do not immediately correspond to observable behavior. This implies that observational equivalence is not the right notion for axiomatising the semantics. However, using the greatest congruence contained in a notion of observational equivalence is interesting; this guided us in the development of our axiomatisation but it remains to be shown that our axiomatisation is indeed the greatest congruence.

\cnetkat relies on a classic approach to proving program correctness: develop a framework can model both specifications and implementations, and show that equivalence is decidable. Past experience with \netkat suggests that this approach is usable, although \cnetkat lacks a procedure to check semantic equivalence, or at least membership of a given pomset. Devising an efficient procedure for this task is our immediate priority. The procedure will most likely rely on automata models such as fork automata~\cite{lodaya} or Petri automata~\cite{bps,brunet-pous}.
%

Ultimately, we would like to use \cnetkat to reason about stateful and distributed P4 programs. A target case study is provided in~\cite{p4xos}, which implemented Lamport's Paxos algorithm in the forwarding plane. To show correctness, the authors used a translation to Promela, a model checking language, and specify check that learners never decide on separate values for a single instance of consensus. This property is closely related to guarded pomsets. We would like to use \cnetkat to show correctness of the P4 implementation of the protocol directly (translation from the P4 code is almost direct, see \cref{fig:p4xos} for an example).

The reader will notice that the \cnetkat expression in \cref{fig:p4xos} uses an action of the form $\modify f v$, where $f$ is a field (\textit{inst}) and $v$ a global variable (instance). Adding actions of the converse form $\modify v f$ is trivial since the packet logic specifies that $f$ always has exactly one value. However, actions $\modify f v$ require more care: the value of global variables can only be determined at the end since parallel threads might change it while it is being copied. To accommodate this in the semantics, we will have to allow partially defined packet fields and determine the missing field values at the end (when we check for guarded traces).

Another exciting direction for future work is the development of a library of \emph{litmus tests} for networking in the spirit of~\cite{litmus}. Litmus tests are carefully crafted concurrent programs operating on shared memory locations that expose subtle bugs in memory models of hardware. One could imagine using the guarded pomsets semantics to discover minimal witnesses of undesired concurrent behavior.

We would also like to investigate the memory model of \cnetkat; this would give insight into the rules followed by operations on the global state. For a partial answer, we can look at \POCKA. 
The guarded fragment of the \POCKA semantics was shown to be \emph{sequentially consistent} (concurrent memory accesses behave as if they are executed sequentially~\cite{lampie}), as it passed the \emph{store buffering litmus test}~\cite{litmus}. The guarded fragment of the pomsets recording global variable changes is expected to pass this litmus test as well. It is worth investigating whether \cnetkat also supports other weak memory models, such as linearizability.

\paragraph{Acknowledgements}
N.~Foster and T.~Kapp\'{e} were partially supported by DARPA grant HR001120C0107 (Pronto).
T.~Kapp\'{e} also received funding from the European Union’s Horizon 2020 research and innovation programme under the Marie Sk\l{}odowska-Curie grant agreement No. 101027412 (VERLAN).
D.~Kozen was supported by NSF grant CCF-20008083.
A.~Silva was partially funded by ERC grant AutoProbe (101002697), EPSRC project CleVer (EP/S028641/1), and a Royal Society fellowship.

\bibliographystyle{splncs04}
\bibliography{refs}

\begin{thebibliography}{10}
\providecommand{\url}[1]{\texttt{#1}}
\providecommand{\urlprefix}{URL }
\providecommand{\doi}[1]{https://doi.org/#1}

\bibitem{litmus}
Alglave, J., Maranget, L., Sarkar, S., Sewell, P.: Litmus: Running tests
  against hardware. In: TACAS. pp. 41--44 (2011).
  \doi{10.1007/978-3-642-19835-9\_5}

\bibitem{mooly}
Alpernas, K., Manevich, R., Panda, A., Sagiv, M., Shenker, S., Shoham, S.,
  Velner, Y.: Abstract interpretation of stateful networks. In: Static
  Analysis. pp. 86--106. Springer International Publishing (2018),
  \url{https://doi.org/10.1007/978-3-319-99725-4_8}

\bibitem{netkat}
Anderson, C.J., Foster, N., Guha, A., Jeannin, J., Kozen, D., Schlesinger, C.,
  Walker, D.: {NetKAT}: semantic foundations for networks. In: POPL. pp.
  113--126 (2014). \doi{10.1145/2535838.2535862}

\bibitem{birkhoff-bartee-1970}
Birkhoff, G., Bartee, T.C.: Modern applied algebra. McGraw-Hill (1970)

\bibitem{p4}
Bosshart, P., Daly, D., Gibb, G., Izzard, M., McKeown, N., Rexford, J.,
  Schlesinger, C., Talayco, D., Vahdat, A., Varghese, G., Walker, D.: P4:
  Programming protocol-independent packet processors. SIGCOMM Comput. Commun.
  Rev.  \textbf{44}(3),  87–95 (jul 2014). \doi{10.1145/2656877.2656890}

\bibitem{brunet-pous}
Brunet, P., Pous, D.: Petri automata. Logical Methods in Computer Science
  \textbf{13} (02 2017). \doi{10.23638/LMCS-13(3:33)2017}

\bibitem{bps}
Brunet, P., Pous, D., Struth, G.: On decidability of concurrent {K}leene
  algebra. In: CONCUR (2017),
  \url{https://doi.org/10.4230/LIPIcs.CONCUR.2017.28}

\bibitem{dynetkat}
Caltais, G., Hojjat, H., Mousavi, M.R., Tunc, H.C.: {DyNetKAT}: An algebra of
  dynamic networks (2021), \url{https://arxiv.org/abs/2102.10035}

\bibitem{conway-1971}
Conway, J.H.: Regular Algebra and Finite Machines. Chapman and Hall, Ltd.,
  London (1971)

\bibitem{p4xos}
Dang, H.T., Bressana, P., Wang, H., Lee, K.S., Zilberman, N., Weather\-spoon,
  H., Canini, M., Pedone, F., Soul\'{e}, R.: P4xos: Consensus as a network
  service. IEEE/ACM Trans. Netw.  \textbf{28}(4),  1726–1738 (2020).
  \doi{10.1109/TNET.2020.2992106}

\bibitem{netkat2}
Foster, N., Kozen, D., Milano, M., Silva, A., Thompson, L.: A coalgebraic
  decision procedure for netkat. In: POPL. pp. 343--355 (2015).
  \doi{10.1145/2676726.2677011}

\bibitem{gischer-1988}
Gischer, J.L.: The equational theory of pomsets. Theor. Comput. Sci.
  \textbf{61},  199--224 (1988). \doi{10.1016/0304-3975(88)90124-7}

\bibitem{grabowski-1981}
Grabowski, J.: On partial languages. Fundam. Inform.  \textbf{4}(2), ~427
  (1981)

\bibitem{hoare-moeller-struth-wehrman-2009}
Hoare, T., M{\"{o}}ller, B., Struth, G., Wehrman, I.: Concurrent {K}leene
  algebra. In: CONCUR. pp. 399--414 (2009). \doi{10.1007/978-3-642-04081-8\_27}

\bibitem{jipsen-moshier-2016}
Jipsen, P., Moshier, M.A.: Concurrent {K}leene algebra with tests and branching
  automata. J. Log. Algebr. Meth. Program.  \textbf{85}(4),  637--652 (2016).
  \doi{10.1016/j.jlamp.2015.12.005}

\bibitem{kao}
Kapp{\'{e}}, T., Brunet, P., Rot, J., Silva, A., Wagemaker, J., Zanasi, F.:
  Kleene algebra with observations. In: CONCUR. pp. 41:1--41:16 (2019).
  \doi{10.4230/LIPIcs.CONCUR.2019.41}

\bibitem{fossacs2020}
Kapp\'{e}, T., Brunet, P., Silva, A., Wagemaker, J., Zanasi, F.: Concurrent
  {K}leene algebra with observations: From hypotheses to completeness. In:
  FOSSACS. pp. 381--400 (2020). \doi{10.1007/978-3-030-45231-5\_20}

\bibitem{cka}
Kapp{\'{e}}, T., Brunet, P., Silva, A., Zanasi, F.: Concurrent {K}leene
  algebra: Free model and completeness. In: ESOP. pp. 856--882 (2018).
  \doi{10.1007/978-3-319-89884-1\_30}

\bibitem{hsa}
Kazemian, P., Varghese, G., McKeown, N.: Header space analysis: Static checking
  for networks. In: NSDI. pp. 113--126 (2012)

\bibitem{veriflow}
Khurshid, A., Zou, X., Zhou, W., Caesar, M., Godfrey, P.B.: {VeriFlow}:
  Verifying network-wide invariants in real time. In: NSDI. pp. 15--29 (2013)

\bibitem{kozen94}
Kozen, D.: A completeness theorem for {K}leene algebras and the algebra of
  regular events. Inf. Comput.  \textbf{110}(2),  366--390 (1994).
  \doi{10.1006/inco.1994.1037}

\bibitem{kozen-1996}
Kozen, D.: Kleene algebra with tests and commutativity conditions. In: TACAS.
  pp. 14--33 (1996). \doi{10.1007/3-540-61042-1\_35}

\bibitem{kozentseng2008}
Kozen, D., Tseng, W.D.: The {B}{\"{o}}hm-{J}acopini theorem is false,
  propositionally. In: MPC. pp. 177--192 (2008).
  \doi{10.1007/978-3-540-70594-9\_11}

\bibitem{krob-1990}
Krob, D.: A complete system of {B}-rational identities. In: ICALP. pp. 60--73
  (1990). \doi{10.1007/BFb0032022}

\bibitem{lampie}
Lamport, L.: How to make a correct multiprocess program execute correctly on a
  multiprocessor. {IEEE} Trans. Computers  \textbf{46}(7),  779--782 (1997).
  \doi{10.1109/12.599898}

\bibitem{laurence-struth-2014}
Laurence, M.R., Struth, G.: Completeness theorems for bi-{K}leene algebras and
  series-parallel rational pomset languages. In: RAMiCS. pp. 65--82 (2014).
  \doi{10.1007/978-3-319-06251-8\_5}

\bibitem{laurence2017completeness}
Laurence, M.R., Struth, G.: Completeness theorems for pomset languages and
  concurrent {K}leene algebras (2017), \url{https://arxiv.org/abs/1705.05896}

\bibitem{p4v}
Liu, J., Hallahan, W., Schlesinger, C., Sharif, M., Lee, J., Soul{\'e}, R.,
  Wang, H., Ca\c{s}caval, C., McKeown, N., Foster, N.: {p4v}: Practical
  verification for programmable data planes. In: ACM SIGCOMM. pp. 490--503
  (2018). \doi{10.1145/3230543.3230582}

\bibitem{lodaya}
Lodaya, K., Weil, P.: Series-parallel languages and the bounded-width property.
  Theoretical Computer Science  \textbf{237}(1),  347--380 (2000).
  \doi{10.1016/S0304-3975(00)00031-1}

\bibitem{xiang-thesis}
Long, X.: Primitives for Match-Action in Theory and Practice. Ph.D. thesis,
  Cornell University (2021)

\bibitem{anteater}
Mai, H., Khurshid, A., Agarwal, R., Caesar, M., Godfrey, P.B., King, S.T.:
  Debugging the data plane with {Anteater}. In: SIGCOMM. pp. 290--301 (2011).
  \doi{10.1145/2018436.2018470}

\bibitem{events}
McClurg, J., Hojjat, H., Foster, N., Cern{\'{y}}, P.: Event-driven network
  programming. In: PLDI. pp. 369--385 (2016). \doi{10.1145/2908080.2908097}

\bibitem{vmn}
Panda, A., Lahav, O., Argyraki, K., Sagiv, M., Shenker, S.: Verifying
  reachability in networks with mutable datapaths. In: NSDI. pp. 699--718.
  USENIX Association, Boston, MA (Mar 2017)

\bibitem{reynolds}
{Reynolds}, J.C.: {Separation Logic: A Logic for Shared Mutable Data
  Structures}. In: LICS (July 2002). \doi{10.1109/LICS.2002.1029817}

\bibitem{salomaa-1966}
Salomaa, A.: Two complete axiom systems for the algebra of regular events. J.
  {ACM}  \textbf{13}(1),  158--169 (1966). \doi{10.1145/321312.321326}

\bibitem{netcore}
Schlesinger, C., Greenberg, M., Walker, D.: Concurrent netcore: From policies
  to pipelines. In: ICFP. p. 11–24 (Aug 2014). \doi{10.1145/2628136.2628157}

\bibitem{gkat}
Smolka, S., Foster, N., Hsu, J., Kapp{\'{e}}, T., Kozen, D., Silva, A.: Guarded
  {K}leene algebra with tests: Verification of uninterpreted programs in nearly
  linear time. In: POPL (2020). \doi{10.1145/3371129}

\bibitem{pocka}
Wagemaker, J., Brunet, P., Docherty, S., Kapp{\'e}, T., Rot, J., Silva, A.:
  Partially observable concurrent {K}leene algebra. In: CONCUR. pp. 20:1--20:22
  (2020). \doi{10.4230/LIPIcs.CONCUR.2020.20}

\bibitem{yang-lam}
Yang, H., Lam, S.S.: Real-time verification of network properties using atomic
  predicates. In: IEEE ICNP (2013),
  \url{https://doi.org/10.1109/ICNP.2013.6733614}

\end{thebibliography}

\vfill

{\small\medskip\noindent{\bf Open Access} This chapter is licensed under the terms of the Creative Commons\break Attribution 4.0 International License (\url{http://creativecommons.org/licenses/by/4.0/}), which permits use, sharing, adaptation, distribution and reproduction in any medium or format, as long as you give appropriate credit to the original author(s) and the source, provide a link to the Creative Commons license and indicate if changes were made.} 

{\small \spaceskip .28em plus .1em minus .1em The images or other third party material in this chapter are included in the\break chapter's Creative Commons license, unless indicated otherwise in a credit line to the\break material.~If material is not included in the chapter's Creative Commons license and\break your intended use is not permitted by statutory regulation or exceeds the permitted\break use, you will need to obtain permission directly from the copyright holder.} 

\medskip\noindent\includegraphics{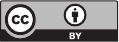} 

\ifarxiv%
\newpage
\appendix

\section{Proofs for Section~\ref{sec:preliminaries} (Pomsets and pomset languages)}%
\label{app:preliminaries}
\printProofs[preliminaries]

\section{Proofs for Section~\ref{sec:syntax} (Syntax and Semantics)}%
\label{app:syntax}
\printProofs[syntax]

\section{Proofs for Section~\ref{sec:relation} (Relation to \netkat/\POCKA)}%
\label{app:relation}
\printProofs[relation]


\section{Proofs for Section~\ref{sec:completeness} (Soundness and Completeness)}%
\label{app:completeness}
\printProofs[completeness]

\section{Analysis of Example}\label{sec:analysis}
We follow the same strategy used in~\cite{pocka}, where they identified a type of pomset called a \emph{guarded} pomset. It was demonstrated that guarded pomsets have seven characteristics of behaviors of (possibly concurrent) programs in isolation, and that if a pomset represents some execution of an isolated program, it must be guarded.
%
%
%
We need the following definitions from~\cite{pocka} for the proofs on guardedness.

First, we define the result of a state
after updating it for one value. Let $\ltr{a}\in\Act$ and $\alpha \in \State$.
We say
that $\alpha[\ltr{a}]$ \emph{exists} if $\ltr{a}=\modify v n$
for some $n \in \Val$ or $\ltr{a} = \modify v {v'}$ and $v' \in \dom(\alpha)$.
If $\alpha[\ltr{a}]$ exists, we define it for all $w\in \Var$ as follows:
\begin{mathpar}
\alpha[\modify v n](w) =
\begin{cases}
n & \text{if } w=v \\
\alpha(w) & \text{otherwise}
\end{cases}
\and
\alpha[\modify v {v'}](w) =
\begin{cases}
\alpha(v') & \text{if } w=v \\
\alpha(w) & \text{otherwise}
\end{cases}
\end{mathpar}
\noindent Second, we define a binary operator $\oplus$ on $\State$ to combine states. For $\alpha,\beta\in \State$:
\[
\alpha\oplus\beta =
\begin{cases}
  \alpha\cup\beta & \text{ if $\alpha(v)=\beta(v)$ for all }v \in \dom(\alpha) \cap \dom(\beta)\\
  \text{undefined} & \text{otherwise}
\end{cases}
\]

\begin{definition}
The set of \emph{guarded pomsets}, denoted $\G$, is the smallest set satisfying:
\begin{mathpar}
\inferrule{%
    \alpha \in \State
}{%
    \alpha \in \G
}
\and
\inferrule{%
    \alpha \in \State \\
    \ltr{a} \in \Act \\
    \alpha[\ltr{a}] \text{ exists}
}{%
    \alpha \cdot \ltr{a} \cdot \alpha[\ltr{a}] \in \G
}
\and
\inferrule{%
    U \cdot \alpha, \alpha \cdot V \in \G \\
    \alpha \in \State
}{%
    U \cdot \alpha \cdot V \in \G
}
\and
\inferrule{%
    \alpha \cdot U \cdot \beta \\
    \gamma \cdot V \cdot \delta \in \G \\
    \alpha\oplus\gamma \text{ defined } \\
    \beta\oplus\delta \text{ defined } \\
    \alpha, \beta, \gamma, \delta \in \State
}{%
    \alpha\oplus\gamma \cdot (U \parallel V) \cdot \beta\oplus\delta \in \G
}
\end{mathpar}
\end{definition}

Guardedness in pomsets can be characterized by the conjunction of seven properties~\cite[Theorem 5.9]{pocka}. We only need two of those properties in the proofs that follow, which we will write out below.

A path for a variable $v$ from a state-node $u$ to another state-node $s$ is a chain such that the changes in the value of $v$ between $u$ and $s$ are explained by the actions between them and recorded in all the states between $u$ and $s$.

\begin{definition}[Path]\label{def:path}
Let $\upom \in \Pom(\Act\cup \State)$ and $u_1,u_2\in S_{\upom}$ such that $u_1\leq_{\upom} u_2$ and $\lambda_{\upom}(u_1),\lambda_{\upom}(u_1)\in \State$.
We say a \emph{path} $p_v$ from $u_1$ to $u_2$ for variable $v\in \Var$ is a sequence of nodes $q_1, a_1, \dots, a_n, q_{n+1} \in S_{\upom}$ that satisfy the following conditions:
\begin{enumerate}[label={(P\arabic*)},leftmargin=1cm]
  \item\label{item:acties}
  For all $1 \leq i \leq n$, we have $\lambda_{\upom}(a_i)\in \Act$ and
  $u_1\leq_{\upom} a_i \leq_{\upom} u_2$ for all $i$.
  Additionally we require that $a_i\leq_{\upom}a_{i+1}$ for $1\leq i <n$.

  \item\label{item:guards}
  For all $1 \leq i \leq n + 1$ it holds that $\lambda_{\upom}(q_i)\in \State$, and for all $1 \leq i \leq n$, the predecessor of $a_i$ is $q_{i}$, and the successor of $a_i$ is $q_{i+1}$.
  Additionally we have that $\lambda_{\upom}(q_1)=\lambda_{\upom}(u_1)$, $v\in\dom(\lambda_{\upom}(u_1))$ and
  $\lambda_{\upom}(q_{n+1})=\lambda_{\upom}(u_2)$.
  Lastly, for $1\leq  i \leq n$ we have:
  \[
  \lambda_{\upom}(q_{i+1})(v) =
  \begin{cases}
  n & \lambda_{\upom}(a_{i})= \modify v n \text{ for some }n\in \Val\\
  \lambda_{\upom}(q_{i})(v') & \lambda_{\upom}(a_{i})= \modify v {v'}\text{ for some }v'\in\dom(\lambda_{\upom}(q_{i}))\\
  \lambda_{\upom}(q_{i})(v) & \text{otherwise }
  \end{cases}
  \]
\end{enumerate}
\end{definition}

Property $(A5)$:
\begin{enumerate}[label={(A\arabic*)},leftmargin=1cm,start=5]
    \item If $u\in S_\lp{u}$ such that $\lambda_\lp{u}(u)=\modify v n$ for some $v\in \Var$ and $n\in \Val$, we require that the successor of $u$ is $s$ s.t. $\lambda_\lp{u}(s)(v)=n$.
\end{enumerate}

Property $(A7)$:
\begin{enumerate}[label={(A\arabic*)},leftmargin=1cm,start=7]
    \item Let $u\in S_\lp{u}$ be a state-node. Then for all $v\in\dom(\lambda_\lp{u}(u))$, there exists a path for $v$ from $s\in S_\lp{u}$ to $u$
    such that either $v\in \dom(\lambda_\lp{u}(s))$ and $s=*_{\min}$ or $s$ is the successor of an assignment-node with label $\modify v k$ with $k\in \Var\cup \Val$.
\end{enumerate}

Lastly, we need the following definition and lemma.
\begin{definition}[Bottleneck]\label{def:bottleneck}
    Let $\upom \in \Pom(\Act\cup \State)$ and $u_0,u_1,u_2\in S_{\upom}$.
We say $u_1$ is a \emph{bottleneck between $u_0$ and $u_2$} if $u_0\leq_{\upom} u_1 \leq_{\upom} u_2$ and
for all $u_3\in S_{\upom}$ s.t. $u_0\leq_{\upom} u_3$ we have $u_1\leq_{\upom} u_3$ or $u_3\leq_{\upom} u_1$.
\end{definition}

We use the following result in the proofs below (\cite[Lemma C.3]{pocka}):
\begin{lemma}\label{lemma:bottleneck}
    Let $\upom\in \Pom(\Act\cup \State)$ and $u_1,u_2\in S_{\upom}$ s.t.\ $u_1\leq_{\upom} u_2$.
    If there exists a path $p_v$ from $u_1$ to $u_2$, and a bottleneck $u_3$ between them, then the bottleneck is on $p_v$.
\end{lemma}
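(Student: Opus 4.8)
The plan is to exploit a structural fact that is built into \cref{def:path}: the nodes of the path $p_v = q_1, a_1, \dots, a_n, q_{n+1}$ form a \emph{chain} in which consecutive entries are covering pairs, i.e.\ immediate predecessor/successor with nothing strictly between them. Indeed, condition (P2) says each $q_i$ is the predecessor of $a_i$ and each $q_{i+1}$ is the successor of $a_i$, so
\[
q_1 <_{\upom} a_1 <_{\upom} q_2 <_{\upom} a_2 <_{\upom} \cdots <_{\upom} a_n <_{\upom} q_{n+1},
\]
and no node of $\upom$ sits strictly between two consecutive entries — that is exactly what ``predecessor'' and ``successor'' mean. I would relabel these $2n+1$ nodes as $c_0 <_{\upom} c_1 <_{\upom} \cdots <_{\upom} c_{2n}$, with $c_0 = q_1 = u_1$ and $c_{2n} = q_{n+1} = u_2$.

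First I would record that every path node lies above $u_1$: the action nodes satisfy $u_1 \leq_{\upom} a_i$ by (P1), while $c_0 = u_1$ and each later state node $q_{i+1}$ obeys $u_1 \leq_{\upom} a_i <_{\upom} q_{i+1}$. Since $u_3$ is a bottleneck between $u_1$ and $u_2$, \cref{def:bottleneck} guarantees that $u_3$ is comparable to every node above $u_1$, hence to every $c_j$, and furthermore $u_1 = c_0 \leq_{\upom} u_3 \leq_{\upom} u_2 = c_{2n}$. Now let $k$ be the largest index with $c_k \leq_{\upom} u_3$; this is well defined because $c_0 \leq_{\upom} u_3$. If $k = 2n$ then $u_3 = c_{2n}$ and we are done. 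Otherwise $c_{k+1} \not\leq_{\upom} u_3$, so comparability forces $u_3 \leq_{\upom} c_{k+1}$, giving $c_k \leq_{\upom} u_3 \leq_{\upom} c_{k+1}$ with $c_{k+1}$ covering $c_k$. Because nothing lies strictly between a covering pair, $u_3 \in \{c_k, c_{k+1}\}$; in either case $u_3$ is one of the $c_j$, i.e.\ $u_3$ lies on $p_v$.

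The main obstacle will be the bookkeeping around the two endpoints and the covering structure, rather than any deep idea. Concretely, one must justify that consecutive path nodes are genuine covering pairs (so that no node can hide strictly between them) and that the path's endpoints coincide with $u_1$ and $u_2$, which is what lets the bottleneck's defining inequalities $u_1 \leq_{\upom} u_3 \leq_{\upom} u_2$ sandwich $u_3$ inside the chain; these are precisely the content of (P1), (P2) and the phrase ``path from $u_1$ to $u_2$'' in \cref{def:path}, and I would verify them explicitly. Once the chain-of-covers picture is established, the comparability supplied by the bottleneck does all the remaining work, and no case analysis on the node labels in $\Act \cup \State$ is required.
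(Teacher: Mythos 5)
Your proof is correct and takes essentially the same route as the argument this lemma is imported with (Lemma~C.3 of~\cite{pocka}): by (P1) and (P2) the path is a saturated chain of predecessor/successor (covering) pairs running from $u_1$ to $u_2$, every node on it lies above $u_1$, so the bottleneck is comparable to each chain element and, being sandwiched between some covering pair $c_k \leq_{\upom} u_3 \leq_{\upom} c_{k+1}$, must equal one of them. The one point you rightly flag and should indeed spell out is that \cref{def:path} literally only equates the \emph{labels} of $q_1,u_1$ and of $q_{n+1},u_2$, whereas your argument (and the intended reading of ``path from $u_1$ to $u_2$'') needs the node identities $q_1 = u_1$ and $q_{n+1} = u_2$ to obtain $c_0 \leq_{\upom} u_3 \leq_{\upom} c_{2n}$ from \cref{def:bottleneck}.
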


We now return to our running example. In order to identify the isolated behaviors of $p$, we thus have to filter out the pairs where the state pomset
is guarded. Guarded pomsets are defined specifically for pomsets whose nodes are labeled with state observations and
state modifications, and our pomsets also have nodes labeled with elements from $2^{\Pk}$. When deciding whether a behavior is guarded, we simply study the pomset of a behavior, and then in particular the nodes labeled with state observations and state modifications, and see whether they form a guarded pomset. If they do, we call the original behavior guarded.

\begin{definition}[Guarded pairs]
Let $\upom \in \Pom ( \State \cup \Act\cup 2^{\Pk})$ and $a\in 2^{\Pk}$. We call a pair $\upom\cdot a$
\emph{guarded} if the pomset $\vpom$ with $S_{\vpom}=\{s \mid s\in S_{\upom}, \lambda_{\upom}\in\State\cup\Act\}$, $\lambda_{\vpom}(u)=\lambda_{\upom}(u)$ and
$\leq_{\vpom}=\leq_{\upom}\upharpoonright{S_{\vpom}}$ is guarded according to~\cite[Definition 5.1]{pocka}.
\end{definition}

In order to show that all guarded behaviors in the semantics of the running example record the $\heartsuit $ packets at switch $3$ before they record the $\spadesuit$ packets at switch $2$, we first show that all state pomsets of the pairs in the semantics of $p$ have a certain property $P$. We then claim that if a pomset has this property, and is guarded, it must be such that the $\heartsuit $ packets are recorded at switch $3$ before the $\spadesuit$ packets
are recored at switch $2$.

We first look at the semantics of the running example before closure, after which we define property $P$.
We are interested in the behavior of the program when the packets have reached their final destination (switch $4$).
Hence, we add a test $(\match \sw 4)$ to the end, to ensure the packets have arrived at switch $4$:
\[
p \defeq (\modify v 0 )\mathbin{;}{(p_1\parallel p_2\parallel p_3\parallel p_4)}^* \mathbin{;} (\match \sw 4)
\]

If we input packets $\{\heartsuit,\spadesuit\}$ at switch $1$, after one iteration of the Kleene star (before closure of the semantics), we get the following distirbution of packets when we multicast according to $p$:
\[
\begin{array} {cc ccc c c}
 \{\heartsuit,\spadesuit\} && \emptyset  && \emptyset &&\emptyset\\[2ex]
 (p_1) &\parallel& (p_2) &\parallel& ( p_3) &\parallel&  ( p_4)
 \end{array}
\]

In terms of packets, the output looks like $\{\spadesuit[2/\sw],\heartsuit[3/\sw]\}$. In terms of global state pomset, the output may look like the following, with $\beta(v)=1$:

\begin{tikzpicture}[node distance=0.25cm]
    \begin{scope}[every node/.style={anchor=center}]
    \node (gamma1) {$(\modify v 0)$};
    \node[right=of gamma1] (bla2) {$\{\heartsuit,\spadesuit\}$};
    \node[yshift=4mm,right=of bla2] (gamma2) {$\beta$};
    \node[yshift=-4mm,right=of bla2] (gamma5) {$\{\heartsuit\}$};
        \node[right=of gamma2] (bla3) {$\{\spadesuit\}$};
          \node[right=of gamma5] (bla4) {$\{\heartsuit[3/\sw]\}$};
        \node[right=of bla3] (bla5) {$\spadesuit[2/\sw]$};
        \node[right=of bla4] (bla6) {$(\modify v 1)$};
    \end{scope}
    \draw[->] (gamma1) edge (bla2);
      \draw[->] (bla2) edge (gamma2);
      \draw[->] (bla2) edge (gamma5);
      \draw[->] (gamma2) edge (bla3);
      \draw[->] (gamma5) edge (bla4);
        \draw[->] (bla3) edge (bla5);
            \draw[->] (bla4) edge (bla6);
\end{tikzpicture}

The other state pomsets in the semantics of $(\modify v 0 )\mathbin{;}(p_1\parallel p_2\parallel p_3\parallel p_4)$  are pomsets with the same nodes and ordering as the one above but with extra $\State^*$-nodes around state observations and assignments.

In the next iteration of the Kleene star we obtain the output set of packets
$\{\spadesuit[4/\sw],\heartsuit[4/\sw]\}$, and the corresponding global state pomsets may look like this, again with $\beta(v)=1$:
\[
    \begin{tikzpicture}[node distance=0.4cm]
        \begin{scope}[every node/.style={anchor=center}]
            \node (gamma1) {$(\modify v 0)$};
            \node[right=3mm of gamma1] (bla2) {$\{\heartsuit,\spadesuit\}$};
            \node[yshift=4mm,right=of bla2] (gamma2) {$\beta$};
            \node[below=2mm of gamma2] (gamma5) {$\{\heartsuit\}$};
            \node[right=of gamma2] (bla3) {$\{\spadesuit\}$};
            \node[right=3mm of gamma5] (bla4) {$\{\heartsuit[3/\sw]\}$};
            \node[right=12mm of bla3] (bla5) {$\spadesuit[2/\sw]$};
            \node[below=2mm of bla5] (bla6) {$(\modify v 1)$};
            \node[right=of bla5] (bla7) {$\{\spadesuit[2/\sw]\}$};
            \node[below=2mm of bla7] (bla8) {$\{\heartsuit[3/\sw]\}$};
            \node[right=3mm of bla7] (bla9) {$\{\spadesuit[4/\sw]\}$};
            \node[right=3mm of bla8] (bla10) {$\{\heartsuit[4/\sw]\}$};
        \end{scope}
        \path (gamma1) edge[->] (bla2);
        \path (bla2) edge[->] (gamma2);
        \path (bla2) edge[->] (gamma5);
        \path (gamma2) edge[->] (bla3);
        \path (gamma5) edge[->] (bla4);
        \path (bla3) edge[->] (bla5);
        \path (bla4) edge[->] (bla6);
        \path (bla5) edge[->] (bla7);
        \path (bla6) edge[->] (bla8);
        \path (bla7) edge[->] (bla9);
        \path (bla8) edge[->] (bla10);
        \path (bla6) edge[->] (bla7);
        \path (bla5) edge[->] (bla8);
    \end{tikzpicture}
\]
In the iteration after that, the packets remain at switch $4$ (with an output set of $\{\spadesuit[4/\sw],\heartsuit[4/\sw]\}$) and the corresponding global state pomsets get an extra node labeled with $\{\spadesuit[4/\sw],\heartsuit[4/\sw]\}$ sequentially added to the end.
In any further iterations, the output packets stay the same (they never leave switch $4$), and the state pomset gets extended with $\{\heartsuit[4/\sw],\spadesuit[4/\sw]\}$ by the $\dup$ of $p_4$.

We now define property $P$, which contains some characteristics of the global state pomset in each pair in the semantics of $p$ that we can use later to show that all guarded pomsets in the semantics of the running example record the $\heartsuit $ packets at switch $3$ before they record the $\spadesuit$ packets at switch $2$.

 \begin{definition}[Pomset Property $P$]\label{def:propp}
   Let $\sw \in \Field$, $v\in \Var$ and $0,\dots,4\in \Val$.
A pomset $\upom$ has property $P$, denoted $P(\upom)$, if there exist $u_1,\dots,u_5\in S_{\upom}$ s.t.
   \begin{enumerate}
     \setlength\itemsep{0em}
     \item\label{item:existence-in-u} the following conditions hold:
     \begin{mathpar}\lambda_{\upom}(u_1)=(\modify v 0) \and \lambda_{\upom}(u_2)=(\modify v 1) \and \lambda_{\upom}(u_3)=\beta\wedge\beta(v)=1 \and \lambda_{\upom}(u_4)=\{\heartsuit[3/\sw]\} \and
     \lambda_{\upom}(u_5)=\{\spadesuit[2/\sw]\}
     \and
 u_1 \leq_{\upom} u_3 \leq_{\upom} u_5
       \and u_1 \leq_{\upom} u_4 \leq_{\upom} u_2
     \end{mathpar}
     Graphically, we can represent these conditions as the following diagram:

     \hfill
     \begin{tikzpicture}[xscale=1.4,yscale=.4]
       \node(u1)at (0,1) {$u_1: (\modify v 0)$};
       \node(u3)at (1.8,2) {$u_3:\beta(v)=1$};
       \node(u4)at (2.05,0) {$u_4:\{\heartsuit[3/\sw]\}$};
       \node(u5)at (4.2,2) {$u_5:\{\spadesuit[2/\sw]\}$};
       \node(u2)at (4.05,0) {$u_2:(\modify v 1)$};
       \draw[->] (u1) to (u3);
       \draw[->] (u1) to (u4);
       \draw[->] (u3) to (u5);
       \draw[->] (u4) to (u2);
     \end{tikzpicture}
     \hfill$ $

     \item\label{item:relative-existence} For all nodes $z\in S_{\upom}$ we have the following conditions.
     \begin{mathpar}
       \forall z. \lambda_{\upom}(z)=(\modify v k) \Rightarrow z = u_2 \vee z = u_1
       \and
       S_{\upom}.z\leq_{\upom} u_1 \vee u_1\leq_{\upom} z
       \and
       \exists\pk\in \lambda_{\upom} (z)\in 2^{\Pk}. \pk(\sw)=3\wedge\pk(\type)=\heartsuit \Rightarrow u_4\leq_{\upom} z
       \and
          \exists\pk\in \lambda_{\upom} (z)\in 2^{\Pk}. \pk(\sw)=2\wedge\pk(\type)=\spadesuit \Rightarrow u_5\leq_{\upom} z
   \end{mathpar}
  \end{enumerate}
\end{definition}

The property $P$ describes global asserts and modifications and sets of packets found in the running example, and their relative ordering. The condition $\forall z. \lambda_{\upom}(z)=(\modify v k) \Rightarrow z = u_2\vee z=u_1$ entails that there are only two nodes in the pomset labeled with an action that modifies $v$. The condition $\forall z\in S_{\upom}.z\leq_{\upom} u_1 \vee
u_1\leq_{\upom} z$ implies that $u_1$ is always on any sequence of nodes
between the minimal node of $\upom$ and $u_3$. The last two conditions entail that $u_4$ and $u_5$ are the first times in the execution that respectively $\heartsuit$ packets are present at switch $3$ and $\spadesuit$ packets are present at switch $2$.

\begin{lemma}\label{lemma:order}
Let $\upom$ be a pomset with $P(\upom)$. If $\upom$ is guarded, then $u_2 \leq_{\upom} u_3$.
\end{lemma}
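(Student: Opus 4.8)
The plan is to exploit guardedness through property (A7), which forces the valuation recorded at a state-node to be explained by a causal path back to an assignment (or to the minimal node). Since $\lambda_{\upom}(u_3)=\beta$ with $\beta(v)=1$, we have $v\in\dom(\lambda_{\upom}(u_3))$, so (A7) applies to the state-node $u_3$ and the variable $v$: there is a path $p_v=q_1,a_1,\dots,a_n,q_{n+1}$ for $v$ from some $s$ to $u_3$ (Definition~\ref{def:path}). By (P1), every action-node $a_i$ on this path satisfies $a_i\leq_{\upom}u_3$, so the whole argument reduces to locating a $(\modify v 1)$-node among the $a_i$. Here I would first record a crucial consequence of the second part of Definition~\ref{def:propp}: the only nodes labelled $(\modify v k)$ are $u_1$ and $u_2$, both \emph{value}-assignments, so no copy-assignment $\modify v {v'}$ to $v$ occurs, and $u_2$ is the unique assignment able to set $v$ to $1$. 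Once $u_2=a_i$ for some $i$, (P1) gives $u_2\leq_{\upom}u_3$, as required.

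First I would dispatch the easy branch of (A7): if $s$ is the successor of the $(\modify v 1)$-node, then by uniqueness that node is $u_2$, hence $u_2\leq_{\upom}s\leq_{\upom}u_3$ directly, since $s$ is the lower endpoint of the path. The remaining branches are the case $s=*_{\min}$ with $v\in\dom(\lambda_{\upom}(s))$, and the case where $s$ is the successor of the $(\modify v 0)$-node $u_1$. I handle both by producing a point on the path where $v$ already has value $0$ and then tracking the valuation forward.

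For the $(\modify v 0)$-successor branch, (A5) gives $\lambda_{\upom}(s)(v)=0$, i.e.\ $q_1$ records $v=0$. For the $s=*_{\min}$ branch I would invoke Lemma~\ref{lemma:bottleneck}: the second part of Definition~\ref{def:propp} states that $z\leq_{\upom}u_1$ or $u_1\leq_{\upom}z$ for every $z$, and $u_1\leq_{\upom}u_3$ holds by the first part, so $u_1$ is a bottleneck (Definition~\ref{def:bottleneck}) between $*_{\min}$ and $u_3$ and therefore lies on $p_v$. As $u_1$ is an action-node labelled $(\modify v 0)$, it equals some $a_i$, and (P2) forces $\lambda_{\upom}(q_{i+1})(v)=0$. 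In either branch we now have a path position where $v=0$, whereas $\lambda_{\upom}(q_{n+1})(v)=1$; since the only $v$-modifying actions are $u_1$ and $u_2$ and both are value-assignments, (P2) shows the last $v$-modifying action before $q_{n+1}$ must be $(\modify v 1)=u_2$, placing $u_2$ on the path and yielding $u_2\leq_{\upom}u_3$.

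The main obstacle I anticipate is the $s=*_{\min}$ branch, where a priori nothing prevents $v$ from already equalling $1$ at the minimal node and being carried unchanged to $u_3$, which would leave $u_2$ off the path entirely. The bottleneck argument is precisely what rules this out: forcing $u_1$ (the $(\modify v 0)$) onto the path guarantees a later reset of $v$ to $0$, after which only $u_2$ can restore the value $1$. A secondary point needing care is confirming that the uniqueness clause of property $P$ excludes copy-assignments $\modify v {v'}$ into $v$, so that no indirect route to the value $1$ can bypass $u_2$.
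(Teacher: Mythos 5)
Your proof is correct and follows essentially the same route as the paper: apply (A7) to $u_3$, split on the source of the path, use Lemma~\ref{lemma:bottleneck} together with item~\ref{item:relative-existence} of property $P$ to force $u_1$ onto the path in the $*_{\min}$ case, and then use the uniqueness of $v$-assignments to conclude that $u_2$ lies on the path. The only cosmetic differences are that you invoke (P2) of Definition~\ref{def:path} where the paper invokes (A5) in the $*_{\min}$ branch, and you phrase the final step as a ``last $v$-modifying action'' argument rather than the paper's cycle contradiction excluding $u_1$; both are sound.
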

\begin{proof}
  We use characteristics $(A5)$ and $(A7)$ of guarded pomsets, which $\upom$ satisfies as we assume it is guarded. We take node $u_3$, which is a node with state label $\beta$ such that $\beta(v)=1$. According to $(A7)$, this then means that there exists a path for $v$ from the minimal node of the pomset, let us denote it with $*_{\min}$, to $u_3$ or there exists a path for $v$ from a node $s$ to $u_3$ and $s$ is the successor of an assignment-node with label $\modify v k$ for some $k\in \Var\cup\Val$.
  In the former case, we use \cref{lemma:bottleneck} to conclude that $u_1$ is
  on the path for $v$ from $*_{\min}$ to $u_3$. Then, via $(A5)$, we obtain that $u_1$ has a successor node $t$ such that $\lambda_{\upom}(t)=\alpha$ with $\alpha(v)=0$. By definition of successor, this means that $u_1\leq_{\upom} t\leq_{\upom} u_3$, and via the properties of
  \cref{def:path} there must exist at least one node $q$ labeled with $\modify v k$ such that $t\leq_{\upom } q \leq_{\upom}
  u_3$ altering the value of $v$, as the path must explain how the value of $v$ changed from $0$ to $1$. Via item~\ref{item:relative-existence} of property $P$, we then know that $q=u_2$ ($q$ cannot be $u_1$ as that would imply that
  $u_1\leq_{\upom} t \leq_{\upom} u_1$ which is a contradiction as $u_1$ and $t$ do not have the same labels), and thus $u_2\leq_{\upom} u_3$.
  In the latter case, $s$ has to be the successor of $u_1$ or $u_2$ via item~\ref{item:relative-existence} of property $P$.
In the former case, we obtain that $u_1 \leq_{\upom} s$, and again via property $(A5)$ we get that $\lambda_{\upom}(s)=\alpha$ and $\alpha(v)=0$. For there to be a path from $s$ to $u_3$ for $v$, we obtain in similar fashion as before that $u_2\leq_{\upom} u_3$. In the latter case, thus if $s$ is the successor of $u_2$, we get that  $u_2\leq_{\upom }s$.
  For there to be a path from $s$
  to $u_3$ we need $s\leq_{\upom } u_3$. This implies then that $u_2\leq_{\upom} u_3$.
\end{proof}

Note that this entails via transitivity that $u_4 \leq_{\upom} u_5$, and thus that the $\heartsuit$ packets arrived at switch $3$ before the $\spadesuit$ packets arrived at switch $2$.

We return to our running example and study its semantics closed under contraction and subsumption.
We show that property $P$ is preserved:

\begin{lemma}\label{lemma:prop}
  Let $a\in 2^{\Pk}_\nempty$. For all $\upom\cdot b\in \closure[]{\sem{p}}(a)$, it holds that $P(\upom)$.
\end{lemma}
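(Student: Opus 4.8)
The plan is to prove $P(\upom)$ for every $\upom\cdot b\in \closure[]{\sem{p}}(a)$ by tracing how the property survives the two steps involved in computing the closed semantics: first the plain (unclosed) semantics, and then the closure under $\hcontr$ and $\hexch$.

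First I would establish the property for the unclosed semantics. Since $p \defeq (\modify v 0)\mathbin{;}{(p_1\parallel p_2\parallel p_3\parallel p_4)}^* \mathbin{;} (\match \sw 4)$, I would use \cref{lemma:basicfacts2} (properties~\ref{property:seq}, \ref{property:parallel}, \ref{property:star}) to decompose $\closure[]{\sem p}(a)$ into contributions from the initial $\modify v 0$, the iterated parallel body, and the final test. Because the only program that ever writes $v$ is the $\modify v 0$ prefix and the $\modify v 1$ inside $p_1$, and because $\modify v 1$ only fires in the branch that has already produced $\heartsuit[3/\sw]$ and lies before producing $\spadesuit[2/\sw]$ (which requires $\match v 1$), the unclosed pomsets have exactly the shape witnessed by the concrete pomsets displayed earlier in \cref{sec:example}. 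Concretely, I would identify the nodes $u_1,\dots,u_5$ from the definitions of $p_1,\dots,p_4$: $u_1$ is the single $\modify v 0$ node, $u_2$ is the unique $\modify v 1$ node, $u_3$ a state node $\beta$ with $\beta(v)=1$ arising as a surrounding state (\cref{remark:stateobs}) after $u_2$, and $u_4,u_5$ the recorded packet sets $\{\heartsuit[3/\sw]\}$ and $\{\spadesuit[2/\sw]\}$ from the instrumenting $\dup$s. The causal constraints $u_1\leq u_3\leq u_5$ and $u_1\leq u_4\leq u_2$, together with the uniqueness of $v$-modifications and the ``first occurrence'' conditions, follow by a direct structural inspection of the semantics of the parallel body and an induction on the number of Kleene-star unfoldings; later iterations only append $\{\heartsuit[4/\sw],\spadesuit[4/\sw]\}$-labelled nodes at the end and never introduce new $v$-modifications or earlier switch-3/switch-2 recordings.

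Next I would show $P$ is preserved under closure. By \cref{def:closed-semantics}, an arbitrary $\upom\cdot b\in\closure[]{\sem p}(a)$ satisfies $\upom\in\closure[\hcontr\cup\hexch]{\{\vpom\}}$ for some $\vpom\cdot b\in\sem p(a)$ with $P(\vpom)$. Using \cref{closure-order}, there are $\wpom,\vpom$ with $\upom\preceq\wpom\sqsubseteq\vpom$, so it suffices to check that $P$ is stable under each of $\sqsubseteq$ and $\preceq$ separately. For $\sqsubseteq$ (Gischer subsumption): a witness $h\colon S_\wpom\to S_\upom$ is a label-preserving bijection that can only \emph{add} order, so the nodes $u_1,\dots,u_5$ transport along $h^{-1}$ with all labels intact, the existing order relations in item~\ref{item:existence-in-u} are retained, and the universally-quantified conditions of item~\ref{item:relative-existence} still hold because the $v$-modification count is unchanged and existing $\leq$-relations persist. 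For $\preceq$ (contraction): here the surjection $h$ may \emph{merge} consecutive state nodes, but only state-labelled nodes get identified, and never action-labelled or packet-set-labelled nodes; hence $u_1,u_2$ (both actions $\modify v k$), $u_4,u_5$ (packet sets) survive uniquely, and $u_3$ survives as the image of some state node still carrying a valuation with $\beta(v)=1$. The ordering and the relative-existence conditions are preserved by the defining conditions (ii)--(iii) of $\preceq$ recorded in the full version's \textbf{textAtEnd} block.

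The main obstacle I expect is the contraction step, precisely because $P$ refers to a specific state node $u_3$ with $\beta(v)=1$ and to ``first occurrence'' conditions that quantify over all nodes. Merging state nodes under $\preceq$ could in principle collapse $u_3$ into a neighbouring state node or reorder the evidence that $u_3$ lies causally after $u_1$ and before $u_5$; I would need the condition (iii) of the contraction order — that merged nodes must have been comparable or carry non-state labels — to guarantee that no collapse violates the diagram in item~\ref{item:existence-in-u}, and that the surjection cannot create a new $v$-modification or move an earlier switch-2/switch-3 recording before $u_5$ or $u_4$. Establishing this carefully, while keeping track of which valuation node plays the role of $u_3$ after contraction, is where the real work lies; the subsumption case and the base semantics are comparatively routine bookkeeping.
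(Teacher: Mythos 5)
Your proposal follows essentially the same route as the paper's proof: first establish $P$ for the unclosed semantics $\sem{p}(a)$ (the paper simply declares this clear after its informal analysis of the star unfoldings), then factor the closure as $\upom\preceq\wpom\sqsubseteq\vpom$ via \cref{closure-order} and check that the label- and order-preserving bijection (for $\sqsubseteq$) and surjection (for $\preceq$) transport the witnesses $u_1,\dots,u_5$ and preserve every clause of $P$, using surjectivity to handle the universally quantified conditions. The only miscalibration is that you expect contraction to be where the real work lies; it is in fact the same bookkeeping as subsumption (label preservation already rules out the collapses you worry about, since only equally-labelled state nodes can merge), which is why the paper dispatches it with ``in a similar manner.''
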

\begin{proof}
  It is clear that for all $\upom$ such that $\upom\cdot b\in \sem{p}(a)$, we have $P(\upom)$. For $\upom\cdot b\in \closure[]{\sem{p}}(a)$, we know that $\vpom\cdot b \in \sem{p}(a)$ and $\upom\in
  \closure[\hexch\cup\hcontr]{\{\vpom\}}$. Via \cref{closure-order} and the definition of closure under $\hexch$ and $\hcontr$, we can conclude that there
  exists a pomset $\wpom$ such that $\upom\preceq \wpom\sqsubseteq \vpom$. We also know that $P(\vpom)$. We now show that then also $P(\wpom)$ and $P(\upom)$. From the definition of $\sqsubseteq$ we get that there exists a bijective pomset morhphism $h$ from $\vpom$ to $\wpom$. Thus, $h$ is a
  bijective function from $S_{\vpom}$ to $S_{\wpom}$ such that $\lambda_{\wpom}\circ h =\lambda_{\vpom}$ and if $u\leq_{\vpom} u'$ then $h(u)\leq_{\wpom}h(u')$.
  Now we need to verify the properties of \cref{def:propp}.
  \begin{enumerate}
    \item The existence of nodes with certain labels in $\wpom$ follows immediately from $\lambda_{\wpom}\circ h =\lambda_{\vpom}$. Their relative ordering is also immediately satisfied.
    \item Take a $z\in S_{\wpom}$ such that $\lambda_{\wpom}(z)=(\modify v k)$. Then, because $h$ is surjective, there exists $y\in S_{\vpom}$ such that $h(y)=z$.
    Thus $\lambda_{\wpom} \circ h (y) = \lambda_{\vpom} (y)=\modify v k
    $. As $\vpom$ has property $P$, this means that $y=u_2$ or $y=u_1$, and then, because $h$ is a function, we get $h(y)=h(u_2)=z$ or $h(y)=h(u_1)=z$.
    \item Take $z\in S_{\wpom}$. We need to show that $z\leq_{\wpom}h(u_1)$ or $h(u_1)\leq_{\wpom} z$. Then, because $h$ is surjective, there exists $y\in S_{\vpom}$ such that $h(y)=z$. As $\vpom$ has property $P$, this means that
    $y\leq_{\vpom} u_1$ or $u_1\leq_{\vpom} y$. Then immediately $h(y)\leq_{\wpom}h(u_1)$ or $h(u_1)\leq_{\wpom} h(y)$ holds, and together with $h(y)=z$ this gives the required result.
    \item Take a $z\in S_{\wpom}$ such that $\pk\in \lambda_{\wpom}(z)\in 2^{\Pk}$ and $\pk(\sw)=3$ and $\pk(\type)=\heartsuit$. Then, because $h$ is surjective, there exists $y\in S_{\vpom}$ such that $h(y)=z$ and
     $\lambda_{\wpom} \circ h (y) = \lambda_{\vpom} (y)    $. As $\vpom$ has property $P$, this means that $u_4\leq_{\vpom}y$, and thus that $h(u_4)\leq_{\wpom} h(y)=z$.
    \item The last condition is verified in a manner symmetrical to the case above.
  \end{enumerate}
  This demonstrates that $\wpom$ has property $P$. In a similar manner, we can verify that $\upom$ with $\upom\preceq \wpom$ also has property $P$.
\end{proof}

\begin{corollary}
  Let $a\in 2^{\Pk}_\nempty$. For all $\upom\cdot b \in \closure[]{\sem{s}}(a)$, if
  $\upom\cdot b$ is a guarded behavior, then $\heartsuit$ packets are observed at switch $3$ before $\spadesuit$
  packets are observed at switch~$2$.
\end{corollary}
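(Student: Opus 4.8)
The plan is to read off the corollary as an immediate combination of \cref{lemma:prop} and \cref{lemma:order}. Fix a guarded behavior $\upom\cdot b\in\closure[]{\sem{p}}(a)$ for the running-example program $p$ and some $a\in 2^{\Pk}_\nempty$. First I would apply \cref{lemma:prop} to obtain $P(\upom)$; unfolding \cref{def:propp}, this gives nodes $u_1,\dots,u_5\in S_{\upom}$ with the prescribed labels — in particular $u_4$ labeled $\{\heartsuit[3/\sw]\}$ and $u_5$ labeled $\{\spadesuit[2/\sw]\}$ — together with the orderings $u_1\leq_{\upom}u_4\leq_{\upom}u_2$ and $u_1\leq_{\upom}u_3\leq_{\upom}u_5$. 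Since the pair $\upom\cdot b$ is guarded, the associated state pomset is guarded, so \cref{lemma:order} applies and yields $u_2\leq_{\upom}u_3$.

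The core ordering step is then a short chain: from $u_4\leq_{\upom}u_2$, the newly obtained $u_2\leq_{\upom}u_3$, and $u_3\leq_{\upom}u_5$, transitivity of $\leq_{\upom}$ gives $u_4\leq_{\upom}u_5$. This already expresses that a $\heartsuit$ packet is recorded at switch $3$ no later than a $\spadesuit$ packet is recorded at switch $2$, and it is exactly the consequence noted in the remark following \cref{lemma:order}.

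To convert $u_4\leq_{\upom}u_5$ into the stated claim about \emph{observed before}, I would invoke the last two conditions of property $P$, which assert that $u_4$ and $u_5$ are the earliest witnesses of the respective events: every node $z$ whose label contains a packet $\pk$ with $\pk(\sw)=3$ and $\pk(\type)=\heartsuit$ satisfies $u_4\leq_{\upom}z$, and symmetrically every recording of a $\spadesuit$ at switch $2$ is bounded below by $u_5$. Hence $u_4\leq_{\upom}u_5$ says precisely that the first appearance of $\heartsuit$ at switch $3$ precedes the first appearance of $\spadesuit$ at switch $2$. Since both \cref{lemma:prop} and \cref{lemma:order} are already in hand, there is no genuine obstacle remaining in the corollary itself; the only care needed is the bookkeeping that translates the abstract node ordering into the informal phrasing, and the real substance sits in \cref{lemma:order}, whose guardedness argument (via properties $(A5)$, $(A7)$, and the bottleneck \cref{lemma:bottleneck}) is what forces the assignment $u_2=(\modify v 1)$ to precede the state $u_3$ with $\beta(v)=1$.
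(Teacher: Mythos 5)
Your proposal is correct and matches the paper's own proof essentially step for step: both invoke \cref{lemma:prop} to obtain $P(\upom)$, then \cref{lemma:order} (using guardedness) to get $u_2\leq_{\upom}u_3$, chain this with the orderings $u_1\leq_{\upom}u_4\leq_{\upom}u_2$ and $u_1\leq_{\upom}u_3\leq_{\upom}u_5$ from property $P$ to conclude $u_4\leq_{\upom}u_5$, and finally appeal to the minimality conditions on $u_4$ and $u_5$ (first occurrences of $\heartsuit$ at switch $3$ and $\spadesuit$ at switch $2$) to phrase this as the stated observation order. No gaps; the delegation of all substantive work to \cref{lemma:order} is exactly how the paper structures it.
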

\begin{proof}
Via \cref{lemma:prop}, we know that $P(\upom)$. Then, via \cref{lemma:order}, we obtain that $u_2\leq_{\upom} u_3$, and thus that $u_4\leq_{\upom}u_5$. As $\lambda_{\upom}(u_4)=\{\heartsuit[3/\sw]\}$ and $
\lambda_{\upom}(u_5)=\{\spadesuit[2/\sw]\}$, and both of these nodes are the first occurrences of $\heartsuit$ packets at switch $3$ and $\spadesuit$ packets at switch $2$, this proves the claim.
\end{proof}


\fi%

\end{document}